\newcommand{\X}{\mathbf{X}}
\newcommand{\N}{\mathbf{N}}
\newcommand{\F}{\mathbf{F}}
\newcommand{\Fe}{\mathbf{F}^\el}
\newcommand{\Fp}{\mathbf{F}^\pl}
\newcommand{\Fpe}{\mathbf{F}^\pl_{\epsilon}}
\newcommand{\Fee}{\mathbf{F}^\el_{\epsilon}}
\newcommand{\p}{\boldsymbol \varphi}
\newcommand{\pe}{\boldsymbol \varphi_{\epsilon}}
\newcommand{\hatpe}{\boldsymbol {\hat\varphi}_{\epsilon}}
\newtheorem{theorem}{Theorem}[section]
\newtheorem{lemma}[theorem]{Lemma}
\newtheorem{proposition}[theorem]{Proposition}
\def\pl{\mathrm{p}}
\def\pc{\mathrm{d}}
\def\el{\mathrm{e}}
\numberwithin{equation}{section}
\title{\textbf{Derivation of $\F=\Fe\Fp$ as the continuum limit of crystalline slip}}
\author{Celia Reina\footnote{Corrseponding author: creina@seas.upenn.edu}}
\affil{Department of Mechanical Engineering and Applied Mechanics, University of Pennsylvania, Philadelphia, PA 19104, USA}
\author{Anja Schl\"omerkemper}
\affil{Institut f\"ur Mathematik, Universit\"at W\"urzburg, Emil-Fischer-Str. 40, 97074 W\"urzburg, Germany}
\author{Sergio Conti}
\affil{Institut f\"ur Angewandte Mathematik, Universit\"at Bonn, 53115 Bonn, Germany }
\begin{document}
\maketitle
\begin{abstract}
In this paper we provide a proof of the multiplicative kinematic description of crystal elastoplasticity in the setting of large deformations, i.e.~$\F=\Fe\Fp$, for a two dimensional single crystal. The proof starts by considering a general configuration at the mesoscopic scale, where the dislocations are discrete line defects (points in the two-dimensional description used here) and the displacement field can be considered continuous everywhere in the domain except at the slip surfaces, over which there is a displacement jump. At such scale, as previously shown by two of the authors, there exists unique physically-based definitions of the total deformation tensor $\F$ and the elastic and plastic tensors $\Fe$ and $\Fp$ that do not require the consideration of any non-realizable intermediate configuration and do not assume any a priori relation between them of the form $\F=\Fe\Fp$. This mesoscopic description is then passed to the continuum limit via homogenization i.e., by increasing the number 
of slip surfaces to infinity and reducing the lattice parameter to zero. We show for two-dimensional deformations of initially perfect single crystals that the classical 
continuum 
formulation is recovered in the limit with $\F=\Fe\Fp$, $\det \Fp= 1$ and $\mathbf{G}=\text{Curl}\ \Fp$ the dislocation density tensor.
\end{abstract}




\section{Introduction}

Standard continuum models in the setting of large deformations are based on the kinematic assumption $\F=\Fe\Fp$, which decomposes the total deformation gradient $\F$ multiplicatively into the effective deformation induced by elastic and plastic mechanisms, $\Fe$ and $\Fp$ respectively. This approach was first introduced in the 1960's \citep{LeeLiu1967} and has been very successful in many engineering applications \citep{Simo1988, OrtizRepetto1999, NematNasser2004, Marian2013, Abaqus}. However, the lack of a well-grounded justification for this decomposition has led to numerous debates in the literature. The heuristics behind the expression $\F=\Fe\Fp$ is based on the chain rule for the derivative of the deformation mapping $\F=D\boldsymbol\varphi$, when the elastic deformation of the system can be fully relaxed, leading to a purely plastic distortion. At the core of the problem lies the fact that dislocations, when present, physically couple the elastic and plastic field around them, preventing the 
existence of a physical configuration with associated deformation $\F^\el$ (or $\F^\pl$) and thus exclusively elastic (or plastic). This inherent complexity has raised many issues, such as the physical meaning of the individual tensors $\F^\el$ or $\F^\pl$, the existence \citep{
CaseyNaghdi1992, 
Owen2002} and uniqueness  \citep{NematNasser1979, LubardaLee1981, Zbib1993, Naghdi1990, GreenNaghdi1971, CaseyNaghdi1980, Mandel1973, Dafalias1987, Rice1971,Meyers2006} of the decomposition $\F=\F^\el\F^\pl$, or the appropriate measure of dislocation content in the body \citep{Bilby1955, Eshelby1956, Kroner1960, Fox1966, Willis1967, AcharyaBassani2000, CermelliGurtin2001} based in the incompatibilities of elastic or plastic component of the deformation. Other decompositions have also been proposed in the literature, some additive, typically in rate form \citep{NematNasser1979,Zbib1993,Pantelides1994}, and others multiplicative ($\F=\Fp\Fe$, \cite{Clifton1972}, \cite{Lubarda1999}; $\F$ based on the product of three tensors, one elastic and two inelastic, \cite{Lion2000, Dawson2008, Anand2009,ClaytonHartleyMcdowell2014}).

A micromechanical definition of the different tensors  $\F$, $\F^\el$ and $\F^\pl$ has been recently provided \citep{ReinaConti2014} without any a priori assumption on the relationship between them. These definitions are exclusively based on kinematic arguments and physical considerations, are uniquely determined from the microscopic displacement field, and lead to the expected multiplicative decomposition in the continuum limit in the absence of dislocations. From the previous definition of $\Fp$, $\text{Curl}\ \F^\pl$ arises as the natural measure for the dislocation density tensor in the material.

Based on the above micromechanical definitions of $\F$, $\F^\el$ and $\F^\pl$, we provide in this paper a rigorous proof of the continuum kinematic assumption $\F=\F^\el\F^\pl$, with $\det \F^\pl = 1$, for a general dislocation and slip structure. Such generality is achieved in this work by considering microscopic elastoplastic displacement fields of finite energy that belong to a functional space in which displacements are continuous everywhere in the domain except potentially at the slip surfaces over which there is a displacement jump. The continuum limit is then mathematically studied by letting the lattice parameter tend to zero and increasing the number of dislocations and slip surfaces to infinity. We show, for a two-dimensional deformation of an initially perfect single crystal, that the displacement field is indeed continuous in the limit, and that the limit of $\F$ coincides with the product of the limit of $\F^\el$ and $\F^\pl$. The later statement is highly non-trivial since limits 
and products do not generally commute. This complexity, combined with the non-linearity induced by the finite kinematic description, has so far limited most of the mathematical results of coarse-graining of dislocation ensembles to the linearized kinematic setting, see e.g. \cite{ContiOrtiz05,GarronileoniPonsiglione2010}, where the principle of superposition applies and the decomposition can be easily proven to be additive, i.e. $\boldsymbol \varepsilon=\boldsymbol \varepsilon^\el+\boldsymbol \varepsilon^\pl$, or to the case of 
well-separated dislocations \citep{ScardiaZeppieri2012, Mueller2014}. Even at the continuum scale, few mathematical results exist on finite elastoplasticity. Some examples can be found in \cite{Mielke2006,Mielke2009} and \cite{Conti2011}.
A discrete model which does not require the existence of a reference configuration was developed by \cite{LuckhausMugnai2010}.

The paper is organized as follows. We begin in Section \ref{Sec:Description} by introducing the mathematical notation needed for the remainder of the text and reviewing the microscopic definitions of $\F$, $\F^\el$ and $\F^\pl$ following the lines of \cite{ReinaConti2014}. Next, in Section \ref{Sec:Scaling} we discuss the passage of the aforementioned quantities to the continuum limit from a physical perspective, and provide some further guidance on notation in Section \ref{Sec:Notation}. This is followed by Sections \ref{Sec:ProblemSetting} and \ref{Sec:MainResults} where the precise setting of the problem and the mathematical results are shown. The paper is then finalized with some conclusions in Section \ref{Sec:Conclusions}.

\section{Mesoscopic description of elastoplastic deformations} \label{Sec:Description}

\begin{figure}
\begin{center}
    {\includegraphics[width=0.65\textwidth]{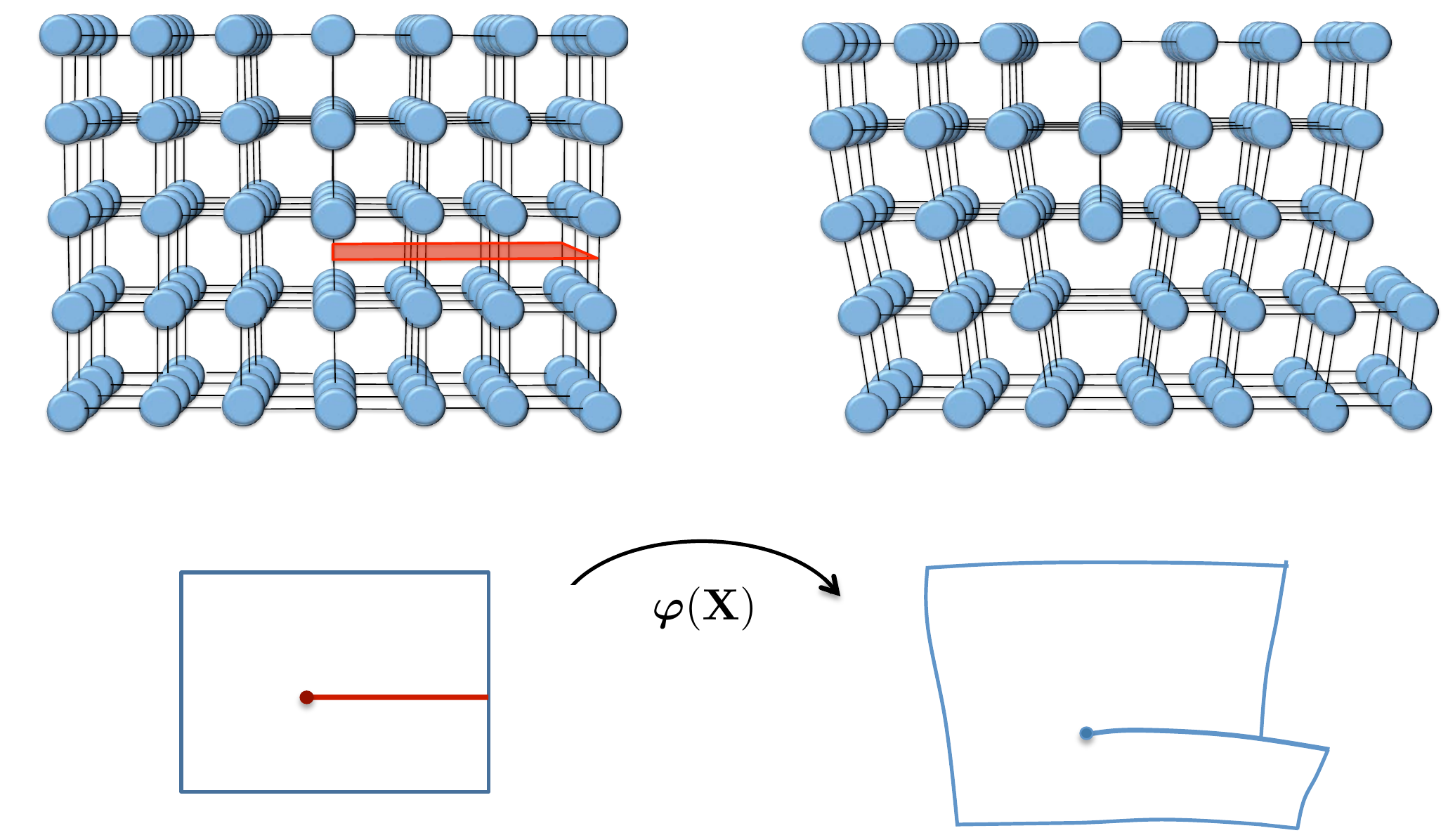}}
    \caption[]{Elastoplastic deformation (right images) of a perfect crystal (left images) at the atomistic (top\footnotemark{})
and mesoscopic scale (bottom) induced by the glide of an edge dislocation from the right surface of the domain till its center. The surface of displacement discontinuity is represented in red in the undeformed configuration and terminates at the dislocation point and the boundary of the domain.}
    \label{Fig:DeformationDescription}
\end{center}
\end{figure}
\footnotetext{Reprinted from Publication `Kinematic description of crystal plasticity in the finite kinematic framework: A micromechanical understanding of F=FeFp', Vol 67, Authors C. Reina and S. Conti, Page No. 43, Copyright (2014), with permission from Elsevier}

The present study is exclusively concerned with two-dimensional elastoplastic deformations induced by dislocation glide (of edge type) and disregards other processes such as phase transformations, diffusion, void nucleation or appearance of microcracks. Mathematically, we describe these deformations with a semi-continuous Lagrangian formulation, where the final configuration $\boldsymbol \varphi(\mathbf{X})$ of every material point $\mathbf{X}$ in the reference domain $\Omega \subset \mathbb{R}^2$, considered here to be a perfect crystal, is described via a deformation mapping $\boldsymbol\varphi:\Omega \rightarrow \boldsymbol\varphi(\Omega) \subset \mathbb{R}^2$. In particular, $\boldsymbol \varphi$ is continuous everywhere in the domain, except at the area swept by the dislocations during their motion (lines in two dimensions), over which there is a displacement jump or discontinuity, c.f.~Fig.~\ref{Fig:DeformationDescription}. These type of deformations belong to the space of special 
functions of bounded variation or $SBV$,
c.f.~\cite{EvansGariepy1991,AmbrosioFuscoPallara2000}; and we further require that $\boldsymbol \varphi$ is one-to-one in order to avoid interpenetration of matter.
The deformation gradient $\F$ is then defined as the distributional gradient of $\boldsymbol \varphi$, which for SBV functions is a measure of the form 
\begin{equation} \label{Eq:Definition_F}
\F = D \boldsymbol \varphi = \nabla \boldsymbol \varphi\ \mathcal{L}^2 + \llbracket \boldsymbol \varphi \rrbracket \otimes \mathbf{N}\ \mathcal{H}^1 \lfloor_{\mathcal{J}}. 
\end{equation}

The measure $\F$ consists of an absolutely continuous part, $\nabla \boldsymbol \varphi\ \mathcal{L}^2$, and a singular part, $ \llbracket \boldsymbol \varphi \rrbracket\otimes \mathbf{N}\ \mathcal{H}^1 \lfloor_{\mathcal{J}}$, that has its support over the slip lines or jump set $\mathcal{J}$, where $\mathbf{N}(\mathbf{X})$ is the unit normal at $\mathbf{X} \in \mathcal{J}$ and $\llbracket \boldsymbol \varphi \rrbracket$ is the jump over the slip line, see below.
We denote by $\mathcal{L}^2$ the Lebesgue measure  (which measures area in the plane)
and by $\mathcal{H}^1$ the one-dimensional Hausdorff measure (which measures length). $\nabla \boldsymbol \varphi$ is called the approximate differential of $\boldsymbol \varphi$ and corresponds to the total deformation $\F$ at the points where no slip occurs and the deformation is thus purely elastic. As a result, $\nabla \boldsymbol \varphi$ can be physically identified with the elastic deformation tensor
\begin{equation}\label{Eq:Definition_Fe}
\F^\el =  \nabla \boldsymbol \varphi.
\end{equation} 
The symbol $\nabla$ corresponds to the `standard' gradient where the function is differentiable, in particular, only outside of the jump set. By definition, $\F$ has a vanishing Curl. However, Curl $\nabla \boldsymbol \varphi$ does not vanish in general, and the same occurs for the singular part of the deformation gradient. We note that an additive decomposition of the deformation gradient of the form of \eqref{Eq:Definition_F}, with the first term being identified as the elastic deformation tensor, has been previously considered in the mechanics literature based on local volumetric averages and application of the divergence theorem over continuous regions of the domain \citep{Davison1995}.

\begin{figure}
\begin{center}
    {\includegraphics[width=0.8\textwidth]{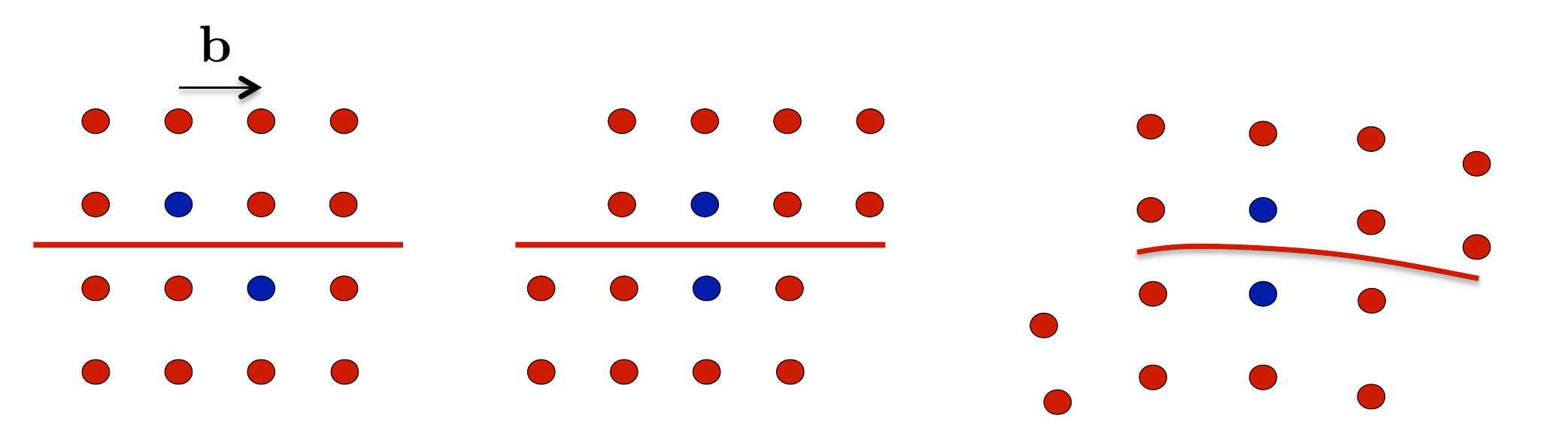}}
    \caption[]{Atomic positions of a perfect crystal (left image) after a  plastic deformation (middle image) and an elastoplastic deformation (right image)\footnotemark{}.}
    \label{Fig:SlipCondition}
\end{center}
\end{figure}
\footnotetext{Reprinted from Publication `Kinematic description of crystal plasticity in the finite kinematic framework: A micromechanical understanding of F=FeFp', Vol 67, Authors C. Reina and S. Conti, Page No. 45, Copyright (2014), with permission from Elsevier.}

For its part, $\llbracket \boldsymbol \varphi \rrbracket (\mathbf{X}) = \boldsymbol \varphi^+(\mathbf{X}) - \boldsymbol \varphi^-(\mathbf{X})$ is the displacement jump at point $\mathbf{X}$ in the slip line, where the $+$ side is the one indicated by the normal $\mathbf{N}$ to the line at point $\mathbf{X}$. Without further restrictions on $\llbracket \boldsymbol \varphi \rrbracket$ and $\mathcal{J}$, the class of deformation mappings considered thus far can also model processes such as cavitation, crack opening or interpenetration of matter. A careful analysis of the kinematics of slip in crystalline materials, c.f.~\cite{ReinaConti2014}, indicates that jump sets in two dimensions necessarily consist of an ensemble of straight segments in the reference configuration that terminate at dislocation points or exit the domain, c.f.~Fig.~\ref{Fig:DeformationDescription}; and that the displacement jump for single slip satisfies, c.f.~Fig.~\ref{Fig:SlipCondition}, 
\begin{equation} \label{Eq:Slip}
\begin{split}
&\boldsymbol \varphi^-(\mathbf{X}) =  \boldsymbol \varphi^+(\mathbf{X}-\mathbf{b}), \quad \forall\ \mathbf{X}, \mathbf{X}-\mathbf{b} \in \mathcal{J}, \\
&\boldsymbol \varphi^+(\mathbf{X})= \boldsymbol \varphi^-(\mathbf{X}+ \mathbf{b}), \quad \forall\ \mathbf{X}, \mathbf{X}+\mathbf{b} \in \mathcal{J},
\end{split}
\end{equation}
where $\mathbf{b}$ is called the Burgers vector and represents the interatomic distance in the direction of slip. 
We remark that the two conditions in (\ref{Eq:Slip}) are equivalent when $\mathbf{X}, \mathbf{X}-\mathbf{b},\mathbf{X}+\mathbf{b} \in \mathcal{J}$.

The jump set $\mathcal{J}$ and the previously introduced vector $\mathbf{b}$ for a single slip system, clearly characterize the plastic deformation with quantities independent of the elastic distortion. Indeed, if we consider for instance a plastic simple shear deformation, such as $\boldsymbol \varphi_1$  in Fig.~\ref{Fig:SimpleShears}, the total deformation is purely plastic and thus equal to $\F^\pl$, and can be expressed, c.f.~Eq.~\eqref{Eq:Definition_F}, as
\begin{equation}
\F=D\boldsymbol \varphi_1 = \F^\pl=\mathbf{I}\ \mathcal{L}^2 + |\mathbf{b}_1| \mathbf{e}_1 \otimes \mathbf{e}_2\ \mathcal{H}^1\lfloor_{\mathcal{J}}.
\end{equation}

\begin{figure}
\begin{center}
    {\includegraphics[width=0.7\textwidth]{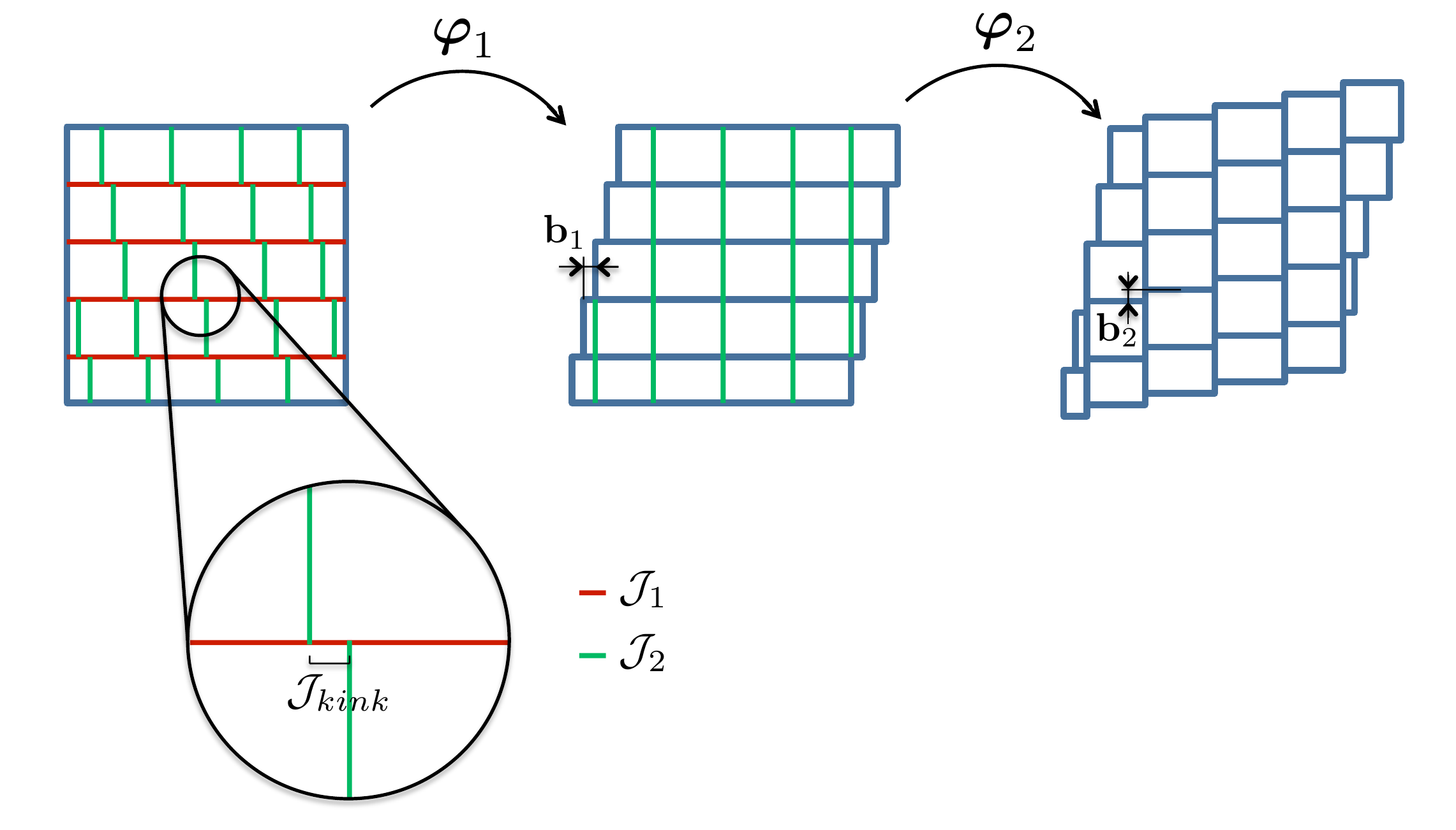}}
    \caption[]{Sequence of plastic deformations involving two orthogonal slip systems. The zoom in the reference configuration is used to label the segments composing the jump set. }
    \label{Fig:SimpleShears}
\end{center}
\end{figure}

Similarly, for the composition of two simple shears where slip lines intersect, c.f.~Fig.~\ref{Fig:SimpleShears} and~\cite{ReinaConti2014}, the plastic deformation tensor is uniquely given from Eq.~\eqref{Eq:Definition_F} as
\begin{equation} \label{Eq:Fp_2simpleshears}
\F=D\left(\boldsymbol \varphi_2 \circ \boldsymbol \varphi_1 \right)=\F^\pl = \mathbf{I}\ \mathcal{L}^2 + |\mathbf{b}_{1} |\ \mathbf{e}_1\otimes \mathbf{e}_2\ \mathcal{H}^1\lfloor_{\mathcal{J}_{1}} + |\mathbf{b}_{2} |\ \mathbf{e}_2\otimes \mathbf{e}_1\ \mathcal{H}^1\lfloor_{\mathcal{J}_{2}} +  |\mathbf{b}_{2} | \mathbf{e}_2\otimes \mathbf{e}_2\ \mathcal{H}^1\lfloor_{\mathcal{J}_\mathrm{kink}}.
\end{equation}
The last term in Eq.~\eqref{Eq:Fp_2simpleshears} relates to the kink that results from the pullback of the second slip lines into the reference configuration and provides the non-commutative term in the finite kinematic formulation. In other words, the pullback is performed in opposite order in which the slip occurred and therefore the resulting kink term is dependent on whether the slip is first performed in the horizontal or vertical direction.

The above results can be easily generalized for an arbitrary sequence of $N_s$ dislocation-free plastic slips, not necessarily homogeneous or belonging to orthogonal slip systems. In that case, $\boldsymbol \varphi$ can be written as $\boldsymbol \varphi= \boldsymbol \varphi_{ N_s} \circ ... \circ \boldsymbol \varphi_{ \nu} \circ ... \circ \boldsymbol \varphi_{ 1}$, where $\mathcal{J}_{\boldsymbol \varphi_{ \nu}}$ (jump set associated to $\boldsymbol \varphi_{ \nu}$) is an ensemble of parallel straight infinite lines with normal $\mathbf{N}_{\nu}$ and 
\begin{equation}
D\boldsymbol \varphi_{ \nu} = \mathbf{I}\ \mathcal{L}^2 +  \mathbf{b}_{ \nu } \otimes \mathbf{N}_{\nu}\ \mathcal{H}^1 \lfloor_{\mathcal{J}_{\boldsymbol \varphi_{ \nu}}},  \quad  \mathbf{b}_{ \nu } \cdot \mathbf{N}_{\nu} = 0, \quad \mathbf{b}_{\nu } \in \mathbb{R}^2\ \text{ constant on each line}.
\end{equation}
The total plastic deformation tensor associated to $\boldsymbol \varphi$ can then be written as
\begin{equation} \label{Eq:Fp_definition}
\F^\pl = \mathbf{I}\ \mathcal{L}^2 + \sum_j \mathbf{b}_{j} \otimes \mathbf{N}_j\ \mathcal{H}^1 \lfloor_{\mathcal{J}_{j}},
\end{equation}
where the total jump set $\mathcal{J}$ in the reference configuration consists on the union of straight segments $\mathcal{J}_j$ (appropriately including all kinks), each of which has an associated Burgers vector $\mathbf{b}_j$. In general, as was the case for the composition of two simple shears, $\mathbf{b}_j\cdot \mathbf{N}_j $ is not necessarily zero. Equation \eqref{Eq:Fp_definition} remains of course valid if an elastic deformation is superposed to the previous plastic distortion, i.e. $\boldsymbol \varphi=\boldsymbol \varphi^\el \circ \boldsymbol \varphi^\pl$, $\F^\pl= D \boldsymbol \varphi^\pl$. In such case, by construction, $\F = D\boldsymbol \varphi = D\left(\boldsymbol \varphi^\el \circ \boldsymbol \varphi^\pl \right) = \nabla \boldsymbol \varphi^\el \circ \boldsymbol \varphi^\pl \ \mathcal{L}^2 + \left( \boldsymbol \varphi^\el\left(\boldsymbol \varphi^{p+} \right)- \boldsymbol \varphi^\el\left(\boldsymbol \varphi^{p-} \right) \right) \otimes \mathbf{N}\ \mathcal{H}^1\lfloor_{\mathcal{J}}$, $\F
^\el = \nabla \boldsymbol \varphi = \nabla \boldsymbol \varphi^\el \circ \boldsymbol \varphi^\pl$, and, if $\Fe$ is sufficiently smooth one obtains that for compatible domains $\F= \F^\el\F^\pl+\mathcal{O}(|\mathbf{b}|^2)$, c.f.~\cite{ReinaConti2014}.

In a general elastoplastic deformation, dislocations are present in the body, coupling the elastic and plastic field. They render impossible a global decomposition of the total deformation mapping into a purely plastic distortion and a subsequent elastic deformation, i.e.~$\boldsymbol \varphi \neq \boldsymbol \varphi^\el \circ \boldsymbol \varphi^\pl$, thus complicating, in principle, the physical definition of the global plastic deformation tensor $\F^\pl$. However, at this mesoscopic scale where the dislocations are individually resolved, subdomains away from the dislocations are, by construction, defect free and it is therefore possible to obtain in each of them a decomposition of the form $\boldsymbol \varphi = \boldsymbol \varphi^\el \circ \boldsymbol \varphi^\pl$ and an associated $\F^\pl$ given by $D\boldsymbol \varphi^\pl$, as in Eq.~\eqref{Eq:Fp_definition}. It will be shown in Section \ref{Sec:ProblemSetting} that $\F^\pl$ is uniquely defined from $\boldsymbol \varphi$ 
regardless of the potentially many decompositions of the form $\boldsymbol \varphi^\el \circ \boldsymbol \varphi^\pl$ in each of these subdomains, thus providing a global definition of $\F^\pl$ almost everywhere in the domain (except at the dislocation cores and close to the boundary). 

It can therefore be concluded that $\F$, $\F^\el$ and $\F^\pl$ are uniquely defined from the mesoscopic deformation mapping $\boldsymbol \varphi$ in general elastoplastic deformations induced by dislocation glide, and are of the form
\begin{equation} \label{Eq:all_F}
\begin{split}
&\F = D \boldsymbol \varphi = \nabla \boldsymbol \varphi\ \mathcal{L}^2 + \sum_j\llbracket \boldsymbol \varphi \rrbracket \otimes \mathbf{N}_j\ \mathcal{H}^1 \lfloor_{\mathcal{J}_j}, \\
&\F^\el =  \nabla \boldsymbol \varphi, \\
&\F^\pl =  \mathbf{I}\ \mathcal{L}^2 + \sum_j \mathbf{b}_{j} \otimes \mathbf{N}_j\ \mathcal{H}^1 \lfloor_{\mathcal{J}_{j}}.
\end{split}
\end{equation}
As will be shown in the following sections, the precise definition of $\F^\pl$ inside the dislocation cores will be irrelevant in the continuous limit. For concreteness, we keep the definition above with $\mathbf{b}_j$ at each slip line inside the cores as the Burgers vector of the corresponding segment outside of the core (this will be made precise in Section \ref{Sec:ProblemSetting}). This choice of $\F^\pl$ associates to each segment composing the jump set a constant Burgers vector, and as derived analytically in \cite{ReinaConti2014}, $\text{Curl}\ \F^\pl=\sum_k \mathbf{b}_k\ \delta_{\mathbf{X}_k}$, where $\{\mathbf{X}_k \}$ is the ensemble of dislocation points and $\mathbf{b}_k$ the corresponding Burgers vector (sum of the Burgers vector of the slip lines terminating at the dislocation point, with the appropriate sign). In other words, $\mathbf{G}=  \text{Curl}\ \F^\pl$ exactly measures the dislocation content in the body and is therefore an appropriate definition of the 
dislocation density tensor.

\section{Scaling towards the continuum limit} \label{Sec:Scaling}

\begin{figure}
\begin{center}
    {\includegraphics[width=0.75\textwidth]{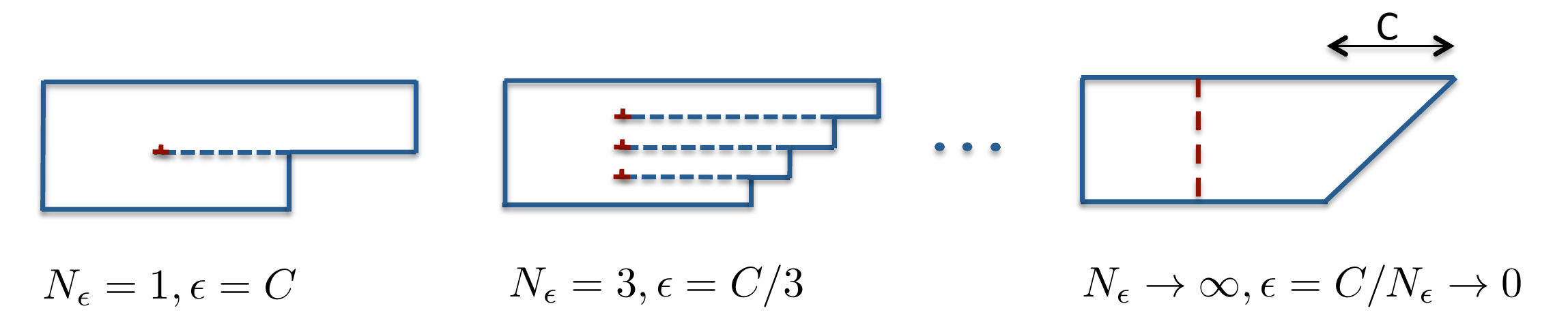}}
    \caption[]{Sequence of elastoplastic deformations as the lattice parameter $\epsilon$ tends to zero, and the number of dislocations $N_{\epsilon}$ and length of the jump set $|\mathcal{J}|$ tends to infinity. }
    \label{Fig:Sequence}
\end{center}
\end{figure}

Ultimately, we are interested in the description of the elastoplastic deformation in the continuum limit. To that regard, we consider sequences of deformations, c.f.~Fig.~\ref{Fig:Sequence}, in which each element of the sequence is characterized by the parameter $\epsilon$ that measures the normalized lattice parameter (ratio between the shortest atomic distance and the characteristic length of the body). In particular, each deformation mapping will be denoted as $\boldsymbol \varphi_{\epsilon}$, and its jump set and the Burgers vector of each segment as $\mathcal{J}= \sum_j  \mathcal{J}_j$ and $\mathbf{b}_{\epsilon j}$ respectively. Similarly, the different deformation tensors, as given by Eqs.~\eqref{Eq:all_F}, will be denoted as $\F_{\epsilon}$, $\F^\el_{\epsilon}$ and $\F^\pl_{\epsilon}$. Furthermore, by crystallographic arguments, the number of slip systems is finite and bounded by some fixed $N_s$; $ |\mathbf{b}_{\epsilon j}| \leq n \epsilon$ for some fixed $n>0$; and planes of the 
same 
slip system are separated by a distance that is equal to or larger than $\epsilon$. From the latter, it is implied that, c.f.~Fig.~\ref{Fig:JumpsetBound}, 
\begin{equation} \label{Eq:Length_jumpset}
\mathcal{H}^1 \left(\mathcal{J}\cap B_r(\mathbf{X}) \right) \leq A \frac{r^2}{\epsilon},
\end{equation}
for some $A$  that depends on $N_s$. We note that the subindex $\epsilon$ is omitted for the jump set $\mathcal{J}$ since the limiting deformation mapping will be continuous, and therefore $\mathcal{J}$ is unequivocally associated to an element of the sequence.

\begin{figure}
\begin{center}
    {\includegraphics[width=0.3\textwidth]{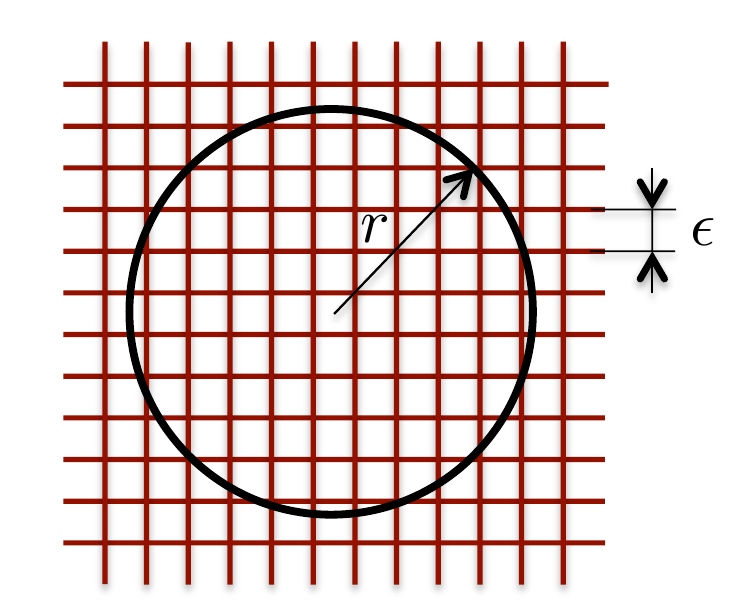}}
    \caption[]{Jump sets corresponding to the same slip system are separated by a distance of at least $\epsilon$. In the case of two orthogonal slip systems, the maximum length of slip lines contained in a ball of radius $r$ is then proportional to $\frac{r^2}{\epsilon}$.}
    \label{Fig:JumpsetBound} 
\end{center}
\end{figure}

The continuum limit is then attained as $\epsilon$ vanishes, and the macroscopic kinematic quantities $\boldsymbol \varphi$, $\F$, $\F^\el$ and $\F^\pl$ are uniquely defined  as the limit of the corresponding mesoscopic quantities, i.e.
\begin{equation}\label{Eq:limits}
\boldsymbol \varphi_{\epsilon} \rightarrow \boldsymbol \varphi, \quad \F_{\epsilon} \rightarrow \F, \quad \F^\el_{\epsilon} \rightarrow \F^\el, \quad \F^\pl_{\epsilon} \rightarrow \F^\pl
\end{equation}
as $\epsilon \rightarrow 0$ in the appropriate topology. This perspective thus delivers a well defined $\F^\el$ and $\F^\pl$, that do not invoque any macroscopic unloading path, which in general does not exist. Only at the mesoscopic scale, as discussed in the previous section, a relaxation process (or decomposition of the deformation mapping into a physically realizable elastic and plastic deformation) is defined everywhere except at the dislocation cores. 

In analogy to a zoom out process in a real material, the number of dislocations $N_{\epsilon}$ and the total length of the jump set $|\mathcal{J}|$ tend, potentially, to infinity as $\epsilon \rightarrow 0$. Equation \eqref{Eq:Length_jumpset} directly implies
\begin{equation}
\mathcal{H}^1\left(\mathcal{J}\right) \epsilon \leq C^*,
\end{equation}
for some $C^*$, and thus $\F^\pl_{\epsilon}$ is at most a finite deformation, as desired. We further assume that the number of dislocations $N_{\epsilon}$ satisfies 
\begin{equation} \label{Eq:ScalingDislocations}
N_{\epsilon}\epsilon \leq C,
\end{equation}
for some $C$. This scaling allows the formation of dislocation walls and prevents the clustering of infinitely many dislocations at one macroscopic material point, as is commonly observed in experiments. To further support the scaling given by Eq.~\eqref{Eq:ScalingDislocations}, consider an ensemble of identical dislocations positioned over a regular square pattern, with distance $\delta$ between dislocations. A simple energetic analysis in the linearized kinematic framework indicates that the system will have finite elastic energy only if $\delta/\sqrt{\epsilon}$ is bounded, and therefore the total amount of dislocations $N_{\epsilon}$ is proportional to $\frac{1}{\delta^2} \leq \frac{C}{\epsilon}$. 

In the following sections, we will show that the macroscopic quantities $\F$, $\F^\el$ and $\F^\pl$ defined in \eqref{Eq:limits} are related in the continuum limit via the standard multiplicative decomposition $\F=\F^\el\F^\pl$ with  $\det \F^\pl=1$, that $\boldsymbol \varphi$ is continuous in the limit and that $\mathbf{G}=\text{Curl}\ \F^\pl$ represents the continuum dislocation density tensor when expressed in the reference configuration. This is done for general sequences of elastoplastic deformations with bounded energy.

\section{Notation} \label{Sec:Notation}
The notation used in the foregoing will follow, when possible, the standard notation used in continuum mechanics. In particular, vectors and tensors are denoted with boldface symbols (the order of the tensor will be clear from the context), and scalars are written in standard font. Similarly $dX=d\mathcal{L}^2$ denotes the differential area  in the reference configuration and $d\mathbf{X}$ a differential vector, so that
\begin{equation}
\int_{\mathbf{X}_1}^{\mathbf{X}_2} \mathbf{f}(\mathbf{X})\cdot d\mathbf{X} = \int_0^1 \mathbf{f} \left( \mathbf{X}_1+t \left(\mathbf{X}_2-\mathbf{X}_1 \right) \right) \cdot \left(\mathbf{X}_2-\mathbf{X}_1 \right) dt.
\end{equation}
The scalar product between two vectors is denoted as $\mathbf{f}\cdot\mathbf{g}$, the full contraction between two tensors as $\mathbf{A}:\mathbf{B}$, the matrix vector multiplication with no symbol $ \mathbf{A} \mathbf{g}$ and $\otimes$ is used to denote the dyadic product between two vectors $\mathbf{f}\otimes \mathbf{g}$. Furthermore, $|\cdot |$ is used to denote the Euclidean norm of the scalar, vector or tensor that the symbol encloses. For sets, we use the notation $|A|=\mathcal{H}^1(A)$ or $|A|=\mathcal{L}^2(A)$, where the dimension of the set $A$ is clear from the context.

The spaces that will be used in the following are the space of measures $\mathcal{M}$, the $L^p$ spaces, the space of functions of bounded variation $BV$ and the space of special functions of bounded variations $SBV$. Further details on these spaces may be found in \cite{AmbrosioFuscoPallara2000}, \cite{EvansGariepy1991} and \cite{Adams2003}.

The deformation mappings $\pe$ and $\pe^\pl$ will be considered uniquely defined at the jump set, by arbitrarily choosing the value at either side of the jump. With these considerations in mind, both mappings are bijective, see Fig.~\ref{Fig:Jumps}. In the compatible regions of the domain, where $\pe$ is of the form $\pe=\pe^\el\circ\pe^\pl$, the choice of $\pe$, $\pe^\pl$ and the domain of $\pe^\el$ need to be consistent with each other. Similar considerations apply for $\F^\el_{\epsilon}(\mathbf{X})=\tilde{\F}^\el_{\epsilon}\circ \pe^\pl$ in the compatible region, where the tilde is used to denote quantities in the image of $\pe^\pl$, i.e. $\tilde{\mathbf{X}}=\pe^\pl(\mathbf{X})$.

Finally, for easiness in the notation, the subindex $k$ will be reserved for dislocations, $j$ for segments of the jump set, and $l$ for domains in the compatible regions away from the dislocation cores.

\begin{figure}
\begin{center}
    {\includegraphics[width=0.75\textwidth]{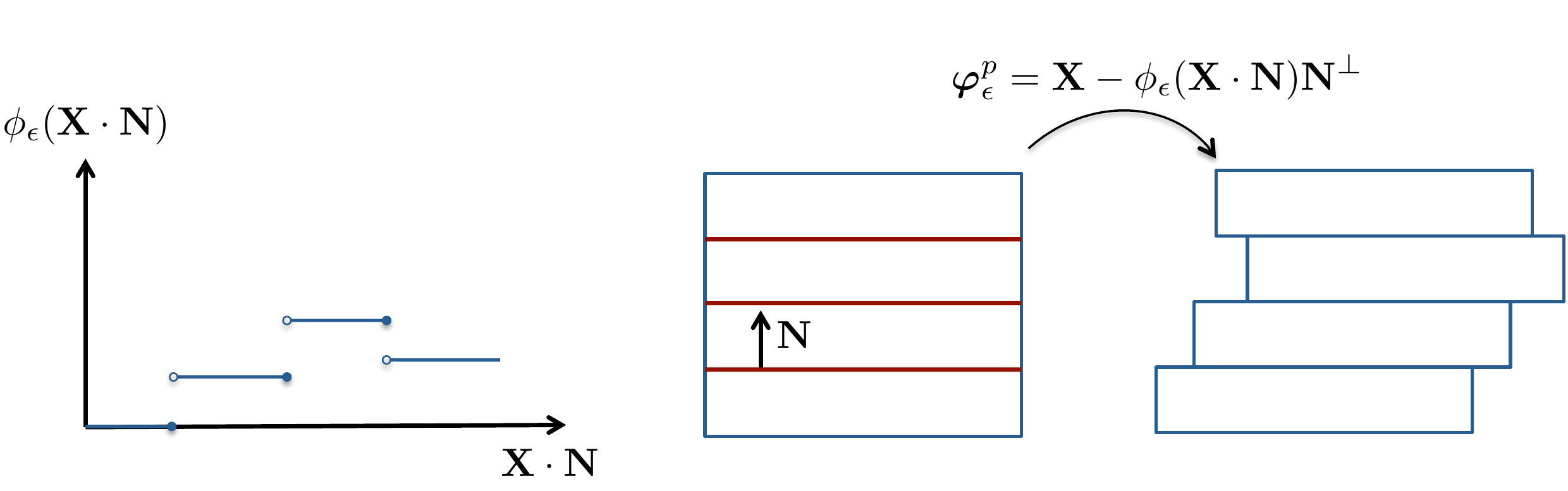}}
    \caption[]{Example of a single slip plastic deformation $\boldsymbol \varphi^\pl_{\epsilon }(\mathbf{X})  = \mathbf{X} - \phi_{\epsilon} (\mathbf{X} \cdot \mathbf{N}) \mathbf{N}^{\bot}$.
    The value of $\phi_\epsilon$ at the jump point is arbitrary, but uniquely defined, so that $\pe^\pl$ is bijective.}
    \label{Fig:Jumps}
\end{center}
\end{figure}

\section{Setting of the problem} \label{Sec:ProblemSetting}
Let $\Omega \subset \mathbb{R}^2$ be an open bounded set with Lipschitz boundary, describing the reference configuration of a defect-free crystalline material. For every $\epsilon > 0$, we consider deformation mappings $\boldsymbol \varphi_{\epsilon}:\Omega \rightarrow \boldsymbol \varphi_{\epsilon}(\Omega) \subset \mathbb{R}^2$, $\boldsymbol \varphi_{\epsilon} \in X_{\epsilon}$ defined below, with finite energy, described by the following functional
\begin{equation}\label{Eq:TotalEnergy}
E_{T\epsilon}(\pe) = E_{\epsilon}(\pe) - \int_{ \Omega} \mathbf{f} \cdot \boldsymbol \varphi_{\epsilon} \, dX - \int_{\partial \Omega} \mathbf{t} \cdot \boldsymbol \varphi_{\epsilon} \,d\mathcal H^1,
\end{equation}
where $\mathbf{f}$ are body forces, $\mathbf{t}$ are the prescribed tractions at the boundary 
(body forces and tractions are assumed to be a system of zero net force and momentum for static equilibrium) 
and
\begin{equation}\label{Eq:ElasticEnergy}
E_{\epsilon}(\pe) = \int_{\Omega\setminus \bigcup \limits_k^{N_{\epsilon}} B_{c \epsilon}(\mathbf{X}_k)} W^\el(\F^\el_{\epsilon}) \, dX + \alpha \int_{\bigcup \limits_k^{N_{\epsilon}} B_{c \epsilon}(\mathbf{X}_k)} W^\el(\F^\el_{\epsilon}) \, dX + \beta  |D\F^\el_{\epsilon}|(\Omega), 
\end{equation}
where $\alpha, \beta >0$  are material parameters and $\F^\el_{\epsilon}$ is the elastic deformation tensor. We note that only the energy terms associated to $E_{\epsilon}$ are needed for the proofs, and that additional terms may be added to Eq.~\eqref{Eq:TotalEnergy} without affecting the results. In particular, a term dependent on the plastic deformation tensor $\F^\pl_{\epsilon}$ is physically justified to provide an energetic compromise between elastic and plastic mechanisms for achieving a deformation compliant with the external loads.

The elastic energy functional given by Eq.~\eqref{Eq:ElasticEnergy} is composed of three terms. The first two represent the elastic energy outside and inside the dislocation cores respectively, where each core is considered to be a ball of radius $c\epsilon$,  around the dislocation points $\mathbf{X}_k$. As usual, we consider continuous elastic energy densities $W^\el(\F^\el_{\epsilon})$ with quadratic growth, i.e.,
\begin{equation} \label{Eq:We_growth}
 \frac{1}{C_L} |\F^\el|^2-C_L \leq W^\el(\F^\el) \leq  C_U |\F^\el|^2+C_U,
\end{equation}
with $C_L,C_U>0$. Continuum mechanics is expected to fail in the core, and the parameter $\alpha$ provides the necessary rescaling of the dislocation core energy to match its physical value, as given, for instance, by atomistic simulations. However, the value of $\alpha$ will be irrelevant for the proofs and may be set to one if desired. For its part, the last term of Eq.~\eqref{Eq:ElasticEnergy} represents the total variation of the measure $D\Fee$ over the set $\Omega$. It is mainly introduced for mathematical reasons, and it is needed to obtain $\F_{\epsilon}\approx \Fee\Fpe$ in the domain, see Proposition \ref{Thm::General_FeFp}. It can be physically interpreted as a non-local elastic deformation term that prevents, for instance, the formation of infinitely thin laminated microstructure with alternate elastic deformations. This type of energy terms are common in the so-called `gradient elasticity' theories \citep{Aifantis2009}.

The functional space $X_{\epsilon}$, defined below, includes the kinematic restrictions of slip described in Section \ref{Sec:Description}, the crystallographic conditions and scalings discussed in Section \ref{Sec:Scaling}, and considers two additional constraints that attend to mathematical simplicity. In particular, dislocations are considered to be separated from each other and from the boundary of the domain by a distance of at least several atomic spacings, and the sequence of potentially activated plastic slip systems will be considered fixed with all slip systems distinct. Although this last assumption may appear artificial and indeed does not account for the full set of compatible elastoplastic deformations that a material may suffer (a weaving structure common in textiles would be out of this scope), we will show later in Lemma \ref{Thm:Decomposition_TripleShear}  that an arbitrary plastic deformation tensor may be written as the composition of three simple shears with prescribed slip direction and 
order of activation. The domain is then separated into the dislocation cores $B_{c\epsilon}(\mathbf{X}_k)$, where the Burgers vector is spread over a ball of radius $m\epsilon<c\epsilon$ to remove the elastic singularity, and $\Omega \setminus \cup_k^{N_{\epsilon}}B_{m\epsilon}(\mathbf{X}_k)$, where the deformation mapping is compatible. The regions of overlap, $B_{c\epsilon}(\mathbf{X}_k)\setminus B_{m\epsilon}(\mathbf{X}_k)$, are  locally compatible and allow a natural extension of the definition of $\F^\pl$ everywhere inside the core region. As will be shown in the following, the specific details of the 
deformation inside each of these cores as well as the definition of $\F^\pl$ in these regions will be irrelevant in the continuum limit, justifying the use of specific functional forms for the deformation in the cores.

In the following we give a precise definition of the space $X_{\epsilon}$. 
It depends on three global constants independent of $\epsilon$, namely, the number of distinct slip systems $N_s$, the normalized maximum length of the Burgers vector for an individual slip $|\mathbf{b}_{\epsilon} / \epsilon |\leq n'$, 
and the scaling associated to the number of dislocations $C$, c.f.~Eq.~\eqref{Eq:ScalingDislocations}.
We define
\begin{equation} \label{Def:Xe}
X_{\epsilon} =\{\boldsymbol \varphi_{\epsilon}  \in SBV(\Omega; \mathbb{R}^2): \boldsymbol \varphi_{\epsilon}, \F^\el_{\epsilon}, \F^\pl_{\epsilon} \text{ satisfy conditions (C1)--(C4)}\}
\end{equation}
for some global constants $n', C, N_s>0$ and some $N_s$ global unit vectors $\mathbf{N}_1$, ... , $\mathbf{N}_{\nu}, ..., \mathbf{N}_{N_s}$, all distinct.  The constants $c,m,L,n>0$ will be defined below from $n',C,N_s$.

\begin{enumerate}
\item[(C1)] There are $N_{\epsilon}$ dislocation points $(\mathbf{X}_k)_k$, with
\begin{equation} \label{Eq:Scaling}
N_{\epsilon} \leq \frac{C}{ \epsilon},
\end{equation}
which describe the dislocations. They are well-separated in the sense that
\begin{equation} \label{Eq:DislocationsSeparation}
\begin{split}
& |\mathbf{X}_{k_1}-\mathbf{X}_{k_2}| > 3 c\epsilon , \quad \forall k_1 \neq k_2, \\
& \text{dist}(\mathbf{X}_k,\partial \Omega) > 2c \epsilon,  \quad \forall k=1, ..., N_{\epsilon},
\end{split}
\end{equation}
where $c=m+ 2L^2$, with $m=2N_s n'$ and $L>0$ as defined later in Lemma \ref{Lemma:up_inverse_Lipschitz}. The value of $m$ is discussed in (C3) and the value of $c$ results from Lemma \ref{Lemma:UniqueDecompositionVarphip}. 

The jump set in each dislocation core $\mathbf{X}_k$, $\mathcal{J} \cap B_{c\epsilon}(\mathbf{X}_k)$, consists of $N_{dk}\leq N_s$ distinct straight segments joining $\mathbf{X}_k$ with a point in $\partial B_{c \epsilon} (\mathbf{X}_k)$.
\item[(C2)] For every convex domain $ \omega \subset \mathbb{R}^2 \setminus \bigcup_{k=1}^{N_{\epsilon}}B_{m\epsilon}(\mathbf{X}_k)$, $\boldsymbol \varphi_{\epsilon}$ 
is a compatible deformation in $\omega\cap\Omega$, in the sense that there are two maps
$\boldsymbol\varphi^\el_{\epsilon}$ and $\boldsymbol \varphi^\pl_{\epsilon}$ such that
\begin{equation}
\boldsymbol \varphi_{\epsilon}(\mathbf{X}) = \left(\boldsymbol \varphi^\el_{\epsilon} \circ \boldsymbol \varphi^\pl_{\epsilon}\right)(\mathbf{X}) \quad \forall \mathbf{X} \in \omega \cap \Omega,
\end{equation}
where  $\boldsymbol \varphi^\el_{\epsilon}:\mathbb R^2\to\mathbb R^2$ is Lipschitz continuous and one-to-one
and $\boldsymbol \varphi^\pl_{\epsilon}:\mathbb R^2\to\mathbb R^2$ can be written as the composition of  $N_s$ single slip deformations, $\boldsymbol \varphi^\pl_{\epsilon} = 
\boldsymbol \varphi^\pl_{\epsilon, N_s} \circ ... \circ \boldsymbol \varphi^\pl_{\epsilon, \nu} \circ ... \circ \boldsymbol \varphi^\pl_{\epsilon, 1}$, where each $\boldsymbol \varphi^\pl_{\epsilon, \nu}$ is of the form
\begin{equation}
\boldsymbol \varphi^\pl_{\epsilon,\nu}(\mathbf{X}) = \mathbf{X} - \phi_{\epsilon,\nu} (\mathbf{X} \cdot \mathbf{N}_\nu) \mathbf{N}_\nu^{\bot}\,.
\end{equation}
Here $\mathbf{N}_\nu$ is defined after \eqref{Def:Xe} and
each function $\phi_{\epsilon,\nu}:\mathbb{R}\rightarrow \mathbb{R}$ is piecewise constant,
with discontinuity points separated by at least $\epsilon$ and jumps
no larger than $n'\epsilon$ (see Fig.~\ref{Fig:Jumps} for an illustration).
This implies that
\begin{equation}
\begin{split}
&\boldsymbol \varphi^\pl_{\epsilon, \nu} \in SBV_{\operatorname{loc}}(\mathbb{R}^2; \mathbb{R}^2),\\
&\mathcal{J}_{\boldsymbol \varphi^\pl_{\epsilon ,\nu}}  \text{consists of parallel straight infinite lines with normal } \mathbf{N}_{\nu} \text{ separated at least by distance }\epsilon,\\
& D\boldsymbol \varphi^\pl_{\epsilon, \nu} = \mathbf{I}\ \mathcal{L}^2 +  \mathbf{b}_{\epsilon \nu } \otimes \mathbf{N}_{\nu} \mathcal{H}^1 \lfloor_{\mathcal{J}_{\boldsymbol \varphi^\pl_{\epsilon, \nu}}},  \quad  \mathbf{b}_{\epsilon \nu } \cdot \mathbf{N}_{\nu} = 0, \quad \mathbf{b}_{\epsilon \nu } \in \mathbb{R}^2\ \text{ constant on each line}, |\mathbf{b}_{\epsilon \nu}| \leq n'\epsilon\,.
\end{split}
\end{equation}

Let $\omega'=\{\mathbf X\in\omega: \mathrm{dist}(\X,\partial\omega)>L^2\epsilon\}$.
Then the plastic strain $\F^\pl_{\epsilon}$ is defined by
\begin{equation}\label{eqdvarphipleps}
\F^\pl_{\epsilon}\lfloor_{\omega'} = D \boldsymbol \varphi^\pl_{\epsilon }\lfloor_{\omega'}
= \mathbf{I}\ \mathcal{L}^2\lfloor_{\omega'} +
\sum_j \mathbf{b}_{\epsilon j } \otimes \mathbf{N}_j\ \mathcal{H}^1 \lfloor_{\mathcal{J}_j\cap\omega'}\,.
\end{equation}
Here  $\mathcal J_j$ are the finitely many segments which constitute
the jump set of $\pe^\pl$ in $\omega$,
 each of which has a constant jump $\mathbf{b}_{\epsilon j}=\llbracket \boldsymbol \varphi^\pl_{\epsilon} \rrbracket$.

Note that we shall show below (Lemma \ref{Thm:Fp_uniqueness}(ii)) that
$D\pe^\pl$ is uniquely defined in $\omega'$ and that this gives a unique  definition of $\F^\pl_{\epsilon}$ from $\boldsymbol \varphi_{\epsilon}$ on $\Omega' \setminus \left( \bigcup_k^{N_{\epsilon}}B_{(m+L^2)\epsilon}(\mathbf{X}_k)\right)$,
where  $\Omega'=\{\mathbf X\in\Omega: \mathrm{dist}(\X,\partial\Omega)>L^2\epsilon\}$.

\item[(C3)] For any 
dislocation $\mathbf{X}_k$, 
 there is a map $\boldsymbol \varphi^\el_{\epsilon}$,
Lipschitz continuous and one-to-one, such that
$\boldsymbol \varphi_{\epsilon}$ can be expressed as
\begin{equation}
\boldsymbol \varphi_{\epsilon}(\mathbf{X}) = \left(\boldsymbol \varphi^\el_{\epsilon} \circ \boldsymbol \varphi^{\pc}_{\epsilon}\right)(\mathbf{X}) \quad \forall \mathbf{X} \in B_{c\epsilon}(\mathbf{X}_k).
\end{equation}
The function $\boldsymbol \varphi^\pc_{\epsilon}\in SBV_\mathrm{loc}(\mathbb R^2;\mathbb R^2)$ 
is explicitly defined from the slip-line normals and the Burgers vectors. Precisely,
in each core we are given $N_{dk}$ normals to the jump set $\mathbf{N}_j$, c.f.~(C1), with slips
$\mathbf{b}_{\epsilon j}\in\mathbb R^2$, which obey  $|\mathbf{b}_{\epsilon j}| \leq n' \epsilon$
 and $\mathbf{b}_{\epsilon j}\cdot \mathbf{N}_j = 0$ as in (C2). As a matter of convention, $\mathbf{N}_j^{\perp}$ will denote the counterclockwise 90$^{\circ}$ rotation of $\mathbf{N}_j$, and the normals to the jump sets are chosen so that $\mathbf{N}_j^{\perp}$ points radially outward from the dislocation center, to the direction of the slip line, see Fig.~\ref{Fig:DislocationCore}.

Given these parameters, we set
\begin{equation} \label{Eq:u_dislo}
  \boldsymbol \varphi^{\pc}_{\epsilon}(\mathbf{X}) =  \mathbf{X} - \sum_{j=1}^{N_{dk}} \mathbf{b}_{\epsilon j}^*(\mathbf{X}-\mathbf{X}_k) 
    \frac{\theta_j(\mathbf{X}-\mathbf{X}_k)}{2\pi},
\end{equation}
where $ \mathbf{b}_{\epsilon j}^*(\mathbf{X}) $ is the smoothed radial slip, 
\begin{equation}
\mathbf{b}^*_{\epsilon j }(\mathbf{X}) = 
 \frac{\mathbf X}{|\mathbf X|}\mathbf N_j^\perp\cdot \mathbf{b}_{\epsilon j} 
 \min\left\{\frac{|\mathbf{X}|}{m \epsilon},1\right\}\,.
\end{equation}
Each function  $\mathbf{b}^*_{\epsilon j }(\mathbf{X})$ vanishes at the origin, is Lipschitz continuous and radial,
in the sense that $\mathbf{b}^*_{\epsilon j }(\mathbf{X})$ is parallel to 
$\mathbf{X}$ for all $\mathbf{X}$. Further, on the part of the slip line outside $B_{m\epsilon}$ it coincides with the 
 ``external'' slip $ \mathbf{b}_{\epsilon j}$, in the sense that
\begin{equation} 
\mathbf{b}^*_{\epsilon j }(t \mathbf{N}_j^\perp ) =   \mathbf{b}_{\epsilon j} \hskip1cm \text{ for all } t\ge m\epsilon.
\end{equation}
The angle $\theta_j:\mathbb R^2\to[-2\pi,2\pi]$ is a function
 which obeys $\theta_j(t\mathbf{e}_1)=0$ for all $t>0$, 
 \begin{equation}
  \mathbf X = |\mathbf X| (\cos\theta_j(\mathbf X), \sin\theta_j(\mathbf X)) \hskip1cm\text{ for all } \mathbf X \in\mathbb R^2\,,
 \end{equation}
and is continuous away from $[0,\infty)\mathbf{N}_j^\perp$. Clearly 
$\theta_j(t\mathbf X)=\theta_j(\mathbf X)$ for $t>0$, and $\theta_j$ is unique up to the 
value on the jump.
 
We remark that  $ \boldsymbol \varphi^{\pc}_{\epsilon}$ is a $SBV_\mathrm{loc}$ function, with jump set $\mathbf X_k+\cup_j (0,\infty) \mathbf N_j^\perp$.
Further, we recall that
\begin{equation}\label{eqassndknk}
m=2N_s n',
\end{equation}
c.f.~(C1). This conditions guarantees that $ \boldsymbol \varphi^{\pc}_{\epsilon}$ is bijective. In particular, since all 
 $\mathbf{b}^*_{\epsilon j}$ are radial we have 
 $ \boldsymbol \varphi^{\pc}_{\epsilon}( \mathbf X_k+\mathbb R\mathbf v)\subset \mathbf X_k+\mathbb R\mathbf v$ for all unit vectors $\mathbf v$.
 An explicit computation shows that, for any unit vector $\mathbf v$ and $t\in\mathbb R$, 
 \begin{equation}
  \boldsymbol \varphi^{\pc}_{\epsilon}(\mathbf X_k+t \mathbf v)=\mathbf X_k+
  \mathbf v \left[ t - \mathrm{sgn}(t) \min\left\{\frac{|t|}{m \epsilon},1\right\} \sum_{j=1}^{N_{dk}} \mathbf N_j^\perp\cdot \mathbf{b}_{\epsilon j} \frac{\theta_j(\mathbf{v})}{2\pi}
 \right]\,.
 \end{equation}
This function is guaranteed to be bijective if  $(\boldsymbol \varphi^{\pc}_{\epsilon}
-\mathbf X_k)
\cdot \mathbf{v}$ is a monotone function of $t$, or equivalently, $\partial (\boldsymbol \varphi^{\pc}_{\epsilon}-\mathbf X_k)\cdot \mathbf{v} / \partial t >0 $. Since $|\mathbf{b}_{\epsilon j} |\le n'\epsilon$, such condition will be satisfied by setting $m$ as in \eqref{eqassndknk}.

An explicit computation shows that 
\begin{equation}\label{eqnablapedid}
 |\nabla \pe^{\pc}-\mathbf{I}|\le 2\,.
\end{equation}
Further, outside the inner core $B_{m\epsilon}(\mathbf X_k)$ the jump of $ \boldsymbol \varphi^{\pc}_{\epsilon}$ has the same form as required in (C2), in the sense that
\begin{equation}
 D \boldsymbol \varphi^{\pc}_{\epsilon} \lfloor_{(\mathbb R^2\setminus B_{m\epsilon}(\mathbf X_k))}= 
\nabla \boldsymbol \varphi^{\pc}_{\epsilon}\mathcal L^2 \lfloor_{(\mathbb R^2\setminus B_{m\epsilon}(\mathbf X_k))} +  
  \sum_{j=1}^{N_{dk}} \mathbf{b}_{\epsilon j }\otimes \mathbf N_j \mathcal H^1 \lfloor_{
    (\mathbf X_k+(m\epsilon,\infty) \mathbf N_j^\perp)}\,.
\end{equation}

\item[(C4)]  The total deformation gradient and its elastic and plastic parts are respectively defined as
\begin{align} \label{Eq:F_def}
&\F_{\epsilon} = D\boldsymbol \varphi_{\epsilon} \quad \in \mathcal{M}\left(\Omega;\mathbb{R}^{2 \times 2}\right), \\ \label{Eq:Fe_epsilon}
&\F^\el_{\epsilon} = \nabla \boldsymbol \varphi_{\epsilon} \quad \in L^1\left(\Omega;\mathbb{R}^{2 \times 2}\right), \\ \label{Eq:Fp_epsilon}
&\F^\pl_{\epsilon} = \mathbf{I}\ \mathcal{L}^2 + \sum_j \mathbf{b}_{\epsilon j} \otimes \mathbf{N}_j \  \mathcal{H}^1 \lfloor_{\mathcal{J}_j\cap \Omega'} \quad \in \mathcal{M}\left(\Omega;\mathbb{R}^{2 \times 2}\right),
\end{align}
where $\Fpe$ is set as $\mathbf{I}$ on $\Omega\setminus\Omega'$ and $\Omega'$ is defined as above. By (C2) and (C3), the jumpset $\mathcal{J}$ is a union of segments, i.e.~$\mathcal{J}=\cup_j \mathcal{J}_j$, each with a constant vector $\mathbf{b}_{\epsilon j}$. The value of $\mathbf{b}_{\epsilon j}$ inside the core is taken for each  $j$ equal to that of the corresponding segment in the compatible region $B_{c\epsilon}(\mathbf{X}_k)\setminus B_{m\epsilon}(\mathbf{X}_k)$. Then Curl $\F^\pl_{\epsilon}$ in $\Omega'$ satisfies, c.f.~\cite{ReinaConti2014},
\begin{equation}
\text{Curl }\F^\pl_{\epsilon}\lfloor_{\Omega'} = \sum_j \pm \mathbf{b}_{\epsilon j}\ \mathcal{H}^0\lfloor_{\left(\partial \mathcal{J}_j\right)\cap \Omega'},
\end{equation}
where the $+$ sign corresponds to the endpoint of segment $\mathcal{J}_j$ pointed by $\mathbf{N}^{\bot}$, and the $-$ sign to the opposite endpoint. The support of Curl $\F^\pl_{\epsilon}\lfloor_{\Omega'}$ thus corresponds to the $N_{\epsilon}$ dislocation points.

The fields $\F^\el_{\epsilon}$ that result in a finite energy belong to the space $BV\left(\Omega;\mathbb{R}^{2 \times 2}\right)$  and therefore the traces $\F^{\el+}_{\epsilon}$ and $\F^{\el-}_{\epsilon}$ on each side of the jump set $\mathcal{J}$ are well-defined quantities by Theorem 3.77 of \cite{AmbrosioFuscoPallara2000}. We then define $\F^\el_{\epsilon}$ on the jump set $\mathcal{J}$ as 
\begin{equation}\label{Eq:Fe_J}
\F^\el_{\epsilon}(\mathbf{X}) = \frac{\F^{\el+}_{\epsilon}(\mathbf{X})+\F^{\el-}_{\epsilon}(\mathbf{X})}{2} \qquad \forall\ \mathbf{X} \in \mathcal{J},
\end{equation}
although the precise definition will not affect the limiting results. 
\end{enumerate}

We first show that $\pe^\pl$ is approximately Lipschitz.
\begin{lemma} \label{Lemma:up_inverse_Lipschitz}
Let $\boldsymbol \varphi^\pl_{\epsilon}:\mathbb{R}^2 \rightarrow \mathbb{R}^2$ be a pure-slip deformation as defined in (C2), and set
${L}=2^{N_s}(1+n')^{N_s}$. Then the following holds:
\begin{enumerate}
 \item[(i)] $\boldsymbol \varphi^\pl_{\epsilon}$ is invertible, in the sense that there is
 $\boldsymbol \varphi^{\pl,{-1}}_{\epsilon}:\mathbb{R}^2 \rightarrow \mathbb{R}^2$ such that $\boldsymbol \varphi^{\pl,{-1}}_{\epsilon}(\boldsymbol \varphi^\pl_{\epsilon}(\mathbf{X}))=
 \boldsymbol \varphi^{\pl}_{\epsilon}( \boldsymbol \varphi^{\pl,{-1}}_{\epsilon}(\mathbf{X}))
 = \mathbf{X}$ for all $\mathbf{X}\in\mathbb{R}^2$. 
\item[(ii)]  For all ${\mathbf{X}}, {\mathbf{Y}} \in \mathbb{R}^2$ 
satisfying $|{\mathbf{X}}-{\mathbf{Y}}|\geq  \epsilon$ one has
\begin{equation}\label{eqvarphiplalmostlip}
|\boldsymbol \varphi^{\pl}_{\epsilon}({\mathbf{X}})-\boldsymbol \varphi^{\pl}_{\epsilon}({\mathbf{Y}})| \leq {L} |{\mathbf{X}}-{\mathbf{Y}}|
\end{equation}
and, similarly, for all ${\tilde{\mathbf{X}}}, \tilde{\mathbf{Y}} \in \mathbb{R}^2$ 
satisfying $|\tilde{\mathbf{X}}-\tilde{\mathbf{Y}}|\geq  \epsilon$,
\begin{equation}\label{eqvarphiplinvalmostlip}
|\boldsymbol \varphi^{\pl,{-1}}_{\epsilon}(\tilde{\mathbf{X}})-\boldsymbol \varphi^{\pl,{-1}}_{\epsilon}(\tilde{\mathbf{Y}})| \leq {L} 
|\tilde{\mathbf{X}}-\tilde{\mathbf{Y}}|.
\end{equation}
\item[(iii)] \label{Lemma:up_inverse_Lipschitzepsball}  For any  $\mathbf X\in\mathbb R^2$  and any  $\tilde{\mathbf{X}}\in\mathbb R^2$ one has
\begin{align}
&\boldsymbol \varphi^\pl_{\epsilon}(B_{\epsilon}(\mathbf X)) \subset {B}_{L\epsilon}(\boldsymbol\varphi^\pl_\epsilon(\mathbf X)),  \\
&\boldsymbol \varphi^{\pl,-1}_{\epsilon}(B_{\epsilon}(\tilde{\mathbf X})) \subset {B}_{L\epsilon}(\boldsymbol\varphi^{\pl,-1}_\epsilon(\tilde{\mathbf X})).
\end{align}
\item[(iv)] 
Setting $n=(1+n')^{N_s}$ one has, for all $\mathbf X$ in the jump set of $\boldsymbol \varphi^\pl_{\epsilon}$,
\begin{equation} \label{Eq:b_bound}
|\llbracket \boldsymbol \varphi^\pl_{\epsilon} \rrbracket|(\mathbf{X}) \leq n\epsilon \,.
\end{equation}
\end{enumerate}
\end{lemma}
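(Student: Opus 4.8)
The plan is to distil a single ``affine'' Lipschitz bound for one slip that is stable under composition, and to deduce all four assertions from it. For (i), I would note that each factor $\boldsymbol\varphi^\pl_{\epsilon,\nu}(\mathbf X)=\mathbf X-\phi_{\epsilon,\nu}(\mathbf X\cdot\mathbf N_\nu)\mathbf N_\nu^{\bot}$ leaves the coordinate $\mathbf X\cdot\mathbf N_\nu$ invariant (because $\mathbf N_\nu^{\bot}\cdot\mathbf N_\nu=0$), so it is a bijection of $\mathbb R^2$ whose inverse is $\mathbf Y\mapsto\mathbf Y+\phi_{\epsilon,\nu}(\mathbf Y\cdot\mathbf N_\nu)\mathbf N_\nu^{\bot}$; the latter is again a single slip of the form in (C2), with $\phi_{\epsilon,\nu}$ replaced by $-\phi_{\epsilon,\nu}$ (same $\epsilon$-separated discontinuity set, jumps still at most $n'\epsilon$). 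Hence $\boldsymbol\varphi^\pl_\epsilon=\boldsymbol\varphi^\pl_{\epsilon,N_s}\circ\dots\circ\boldsymbol\varphi^\pl_{\epsilon,1}$ is a bijection, and $\boldsymbol\varphi^{\pl,-1}_\epsilon=\boldsymbol\varphi^{\pl,-1}_{\epsilon,1}\circ\dots\circ\boldsymbol\varphi^{\pl,-1}_{\epsilon,N_s}$ is itself a composition of $N_s$ single slips of the form (C2); this lets me treat the forward and inverse maps on exactly the same footing in the remaining parts.

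The key estimate I would then establish is that any single slip $\psi$ as in (C2), say $\psi(\mathbf X)=\mathbf X-\phi(\mathbf X\cdot\mathbf N)\mathbf N^{\bot}$ with $\mathbf N$ a unit vector, satisfies $|\psi(\mathbf U)-\psi(\mathbf V)|\le(1+n')|\mathbf U-\mathbf V|+n'\epsilon$ for \emph{all} $\mathbf U,\mathbf V\in\mathbb R^2$. This follows from $\psi(\mathbf U)-\psi(\mathbf V)=(\mathbf U-\mathbf V)-(\phi(\mathbf U\cdot\mathbf N)-\phi(\mathbf V\cdot\mathbf N))\mathbf N^{\bot}$ together with the elementary observation that a real interval of length $d$ meets at most $d/\epsilon+1$ of the $\epsilon$-separated discontinuities of the piecewise constant $\phi$, each of which changes $\phi$ by at most $n'\epsilon$ in absolute value. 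For (ii) and (iii) I would chain this along $\Phi_\nu:=\boldsymbol\varphi^\pl_{\epsilon,\nu}\circ\dots\circ\boldsymbol\varphi^\pl_{\epsilon,1}$: with $\delta_\nu:=|\Phi_\nu(\mathbf X)-\Phi_\nu(\mathbf Y)|$ one gets $\delta_\nu\le(1+n')\delta_{\nu-1}+n'\epsilon$, hence $\delta_{N_s}\le(1+n')^{N_s}\delta_0+\epsilon((1+n')^{N_s}-1)$. If $\delta_0=|\mathbf X-\mathbf Y|\ge\epsilon$, the right-hand side is at most $(2(1+n')^{N_s}-1)\delta_0\le 2^{N_s}(1+n')^{N_s}\delta_0=L\delta_0$, which is \eqref{eqvarphiplalmostlip}; running the same computation on the composition $\boldsymbol\varphi^{\pl,-1}_\epsilon$ from (i) gives \eqref{eqvarphiplinvalmostlip}. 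If instead $|\mathbf X-\mathbf Y|<\epsilon$, the same right-hand side is strictly below $2(1+n')^{N_s}\epsilon\le L\epsilon$, and together with the analogous bound for $\boldsymbol\varphi^{\pl,-1}_\epsilon$ this yields the two ball inclusions of (iii).

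For (iv) I would use the chained inequality once more, but start the recursion from distance zero. Since every $\Phi_\nu$ is a composition of piecewise translations, $\boldsymbol\varphi^\pl_\epsilon$ has the expected $SBV$ structure --- its jump set is a locally finite union of segments, and $\boldsymbol\varphi^\pl_\epsilon$ coincides with a translation on each connected component of the complement --- so at $\mathcal H^1$-a.e.\ point $\mathbf X$ of the jump set there are two well-defined one-sided traces, each obtained as the limit of $\boldsymbol\varphi^\pl_\epsilon$ along any sequence in the corresponding adjacent component converging to $\mathbf X$. Choosing $\mathbf Y_k^{\pm}\to\mathbf X$ in these two components with $|\mathbf Y_k^{+}-\mathbf Y_k^{-}|\to0$ and letting $\delta_0\to0$ in the chained bound gives $|\llbracket\boldsymbol\varphi^\pl_\epsilon\rrbracket(\mathbf X)|\le\epsilon((1+n')^{N_s}-1)<n\epsilon$ with $n=(1+n')^{N_s}$.

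All the content is in the affine one-slip estimate and its chaining; the one point demanding care is that a single slip can contract pairwise distances below $\epsilon$, so one must \emph{not} iterate the ``$|\mathbf X-\mathbf Y|\ge\epsilon$'' Lipschitz bound directly --- it is precisely the additive $n'\epsilon$ term, robust under composition, that does the work and, when the recursion is started from distance zero, delivers the sharper constant $n$ in (iv). The only other item needing a short justification is the $SBV$ structure of $\boldsymbol\varphi^\pl_\epsilon$ invoked in (iv): because the maps $\boldsymbol\varphi^\pl_{\epsilon,\nu}$ are piecewise translations and carry no rotation, the contributions of distinct slip layers to the jump set have the distinct normals $\mathbf N_\nu$ and hence overlap only in an $\mathcal H^1$-null set, so the two-sided traces are unambiguous almost everywhere on $\mathcal J$.
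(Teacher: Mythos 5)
Your proposal is correct and follows essentially the same route as the paper: the affine one-slip estimate $|\psi(\mathbf U)-\psi(\mathbf V)|\le(1+n')|\mathbf U-\mathbf V|+n'\epsilon$ is exactly the paper's inequality written in the form $|\psi(\mathbf X)-\psi(\mathbf Y)|+\epsilon\le(1+n')(|\mathbf X-\mathbf Y|+\epsilon)$, and your chaining, the explicit single-slip inverses for (i), the case distinction for (ii)--(iii), and the zero-distance limit for (iv) mirror the paper's argument, with (iv) merely spelled out in slightly more detail via the one-sided traces.
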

Here and below we use the overhead tilde  to denote quantities in the intermediate (compatible) configuration $\boldsymbol \varphi^\pl_{\epsilon}(\mathbb R^2)$.

\begin{proof}
Consider a single slip deformation $\boldsymbol \varphi^\pl_{\epsilon, 1}$, c.f.~(C2). By construction, it can be written as 
\begin{equation}\label{eqvarphieps1scalar}
\boldsymbol \varphi^\pl_{\epsilon, 1 }(\mathbf{X}) = \mathbf{X} - \phi_{\epsilon} (\mathbf{X} \cdot \mathbf{N}) \mathbf{N}^{\bot}\,, 
\end{equation}
 with $\phi_{\epsilon}:\mathbb{R}\rightarrow \mathbb{R}$ piecewise constant, see Fig.~\ref{Fig:Jumps}.
Further, the discontinuity points have a separation of at least $\epsilon$ from each other, and the jumps of $\phi_\epsilon$ are no larger than $n'\epsilon$. 
This implies $|\phi_{\epsilon}(s)-\phi_{\epsilon}(t)|\le n'\epsilon \#(\mathcal{J}_{\phi_{\epsilon}}\cap [s,t])\le 
n'  (|t-s|+\epsilon)$ and therefore
\begin{equation}\label{eq1epsl}
  |\boldsymbol \varphi^\pl_{\epsilon, 1 }(\mathbf{X})-\boldsymbol \varphi^\pl_{\epsilon, 1 }(\mathbf{Y})|+\epsilon\le
  (1+n') ( |\mathbf{X}-\mathbf{Y}| + \epsilon) \hskip5mm \text{ for all } \mathbf{X},\mathbf{Y}\in\mathbb{R}^2\,.
\end{equation}
The same holds for all $\boldsymbol \varphi^\pl_{\epsilon, \nu }$. Taking the composition we obtain
\begin{equation}\label{eqcarphipilip}
  |\boldsymbol \varphi^\pl_{\epsilon  }(\mathbf{X})-\boldsymbol \varphi^\pl_{\epsilon }(\mathbf{Y})|+\epsilon\le
  (1+n')^{N_s} ( |\mathbf{X}-\mathbf{Y}| + \epsilon) \hskip5mm \text{ for all } \mathbf{X},\mathbf{Y}\in\mathbb{R}^2\,.
\end{equation}
Setting ${L}=2^{N_s}(1+n')^{N_s}$ concludes the proof of (\ref{eqvarphiplalmostlip}).
Taking the limit $\mathbf Y\to\mathbf X^+$, $\mathbf X\to \mathbf X^-$ proves (\ref{Eq:b_bound}).

To prove (i) we observe that  (\ref{eqvarphieps1scalar}) easily implies  that $\boldsymbol \varphi^\pl_{\epsilon, 1 }$ is invertible,
with inverse
\begin{equation}
\boldsymbol \varphi^{\pl,-1}_{\epsilon, 1 }(\tilde{\mathbf{X}}) = \tilde{\mathbf{X}} + \phi_{\epsilon} (\tilde{\mathbf{X}} \cdot \mathbf{N}) \mathbf{N}^{\bot}\,,
\end{equation}
where  $\phi_\epsilon$ is the same function as in (\ref{eqvarphieps1scalar}). Then $\boldsymbol \varphi^{\pl,-1}_{\epsilon}$ is defined as the composition of the inverses
 of the single slips, and the estimate
\begin{equation}\label{eqcarphipinvlip}
  |\boldsymbol \varphi^{\pl,-1}_{\epsilon  }(\tilde {\mathbf{X}})-\boldsymbol \varphi^{\pl,-1}_{\epsilon }(\tilde{\mathbf{Y}})|+\epsilon\le
  (1+n')^{N_s} ( |\tilde{\mathbf{X}}-\tilde{\mathbf{Y}}| + \epsilon) \hskip5mm \text{ for all } \tilde{\mathbf{X}},\tilde{\mathbf{Y}}\in\mathbb{R}^2
\end{equation}
 follows. This concludes the proof of (i) and (ii). Finally, assertion (iii) follows immediately from 
 (\ref{eqcarphipilip}) and  (\ref{eqcarphipinvlip}).
 \end{proof}

Now we prove that $\F_\epsilon^\pl$ is well defined.
\begin{lemma} \label{Thm:Fp_uniqueness}
Let $\boldsymbol \varphi_{\epsilon}$, $\boldsymbol \varphi^\el_{\epsilon}$, $\boldsymbol \varphi^\pl_{\epsilon}$ and $\F^\pl_{\epsilon}$ as in (C2-C4). Then the following results hold:

(i) The measure $\F^\pl_{\epsilon}$ is uniquely defined from $\boldsymbol \varphi_\epsilon$.

(ii) If 
$\mathbf Y$ is such that $\omega=B_{L^2\epsilon}(\mathbf Y)\subset\Omega \setminus \cup_k B_{m\epsilon}(\mathbf{X}_k)$,
and  $\boldsymbol \varphi_{\epsilon}=\boldsymbol \varphi^\el_{\epsilon} \circ \boldsymbol \varphi^\pl_{\epsilon}$,
$\boldsymbol \varphi_{\epsilon}=\hat{\boldsymbol \varphi}^\el_{\epsilon} \circ \hat{\boldsymbol \varphi}^\pl_{\epsilon}$
are two decompositions as in (C2) on the domain $\omega$, then on the smaller set $B_{\epsilon}(\mathbf Y)$
they coincide up to a translation,
 in the sense that  there is $\mathbf c\in\mathbb R^2$ such that
 $\hat{\boldsymbol \varphi}^\pl_{\epsilon}(\mathbf X)=
{\boldsymbol \varphi}^\pl_{\epsilon}(\mathbf X)-\mathbf c$ for $\mathbf X\in B_{\epsilon}(\mathbf Y)$ and
$\hat{\boldsymbol \varphi}^\el_{\epsilon}(\mathbf{ \tilde X})=
{\boldsymbol \varphi}^\el_{\epsilon}(\mathbf {\tilde X}+\mathbf c)$ for $\mathbf {\tilde X}\in \boldsymbol \varphi^\pl_{\epsilon}(B_{\epsilon}(\mathbf Y))$.
\end{lemma}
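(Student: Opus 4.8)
The plan is to establish (ii) first and then read off (i). Fix $\mathbf Y$ as in (ii), so that $\omega=B_{L^2\epsilon}(\mathbf Y)$ is convex and disjoint from the inner cores; by (C2) there are then two decompositions $\pe=\pe^{\el}\circ\pe^{\pl}=\hat{\boldsymbol\varphi}^{\el}_{\epsilon}\circ\hat{\boldsymbol\varphi}^{\pl}_{\epsilon}$ on $\omega$. I would introduce the transition map $\psi:=\hat{\boldsymbol\varphi}^{\pl}_{\epsilon}\circ(\pe^{\pl})^{-1}$, which is a bijection of $\mathbb R^2$ by Lemma~\ref{Lemma:up_inverse_Lipschitz}(i). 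The key observation is that on $\pe^{\pl}(\omega)$ this same map can be written through the \emph{elastic} factors: for $\mathbf Z\in\pe^{\pl}(\omega)$, putting $\mathbf X=(\pe^{\pl})^{-1}(\mathbf Z)\in\omega$, one has $\pe^{\el}(\mathbf Z)=\pe(\mathbf X)=\hat{\boldsymbol\varphi}^{\el}_{\epsilon}(\hat{\boldsymbol\varphi}^{\pl}_{\epsilon}(\mathbf X))=\hat{\boldsymbol\varphi}^{\el}_{\epsilon}(\psi(\mathbf Z))$, hence $\psi=(\hat{\boldsymbol\varphi}^{\el}_{\epsilon})^{-1}\circ\pe^{\el}$ on $\pe^{\pl}(\omega)$. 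Since $\pe^{\el}$ is Lipschitz and $\hat{\boldsymbol\varphi}^{\el}_{\epsilon}:\mathbb R^2\to\mathbb R^2$ is continuous and one-to-one, invariance of domain makes $\hat{\boldsymbol\varphi}^{\el}_{\epsilon}$ a homeomorphism onto its open image, so $(\hat{\boldsymbol\varphi}^{\el}_{\epsilon})^{-1}$ is continuous and therefore $\psi$ is \emph{continuous} on $\pe^{\pl}(\omega)$.

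Next I would localize to the ball $B:=B_{L\epsilon}(\pe^{\pl}(\mathbf Y))$. Using Lemma~\ref{Lemma:up_inverse_Lipschitz}(ii)--(iii) one checks $(\pe^{\pl})^{-1}(B)\subset B_{L^2\epsilon}(\mathbf Y)=\omega$, hence $B\subset\pe^{\pl}(\omega)$, and also $\pe^{\pl}(B_\epsilon(\mathbf Y))\subset B$. Set $\mathcal J^{*}:=(\mathcal J_{\pe^{\pl}}\cup\mathcal J_{\hat{\boldsymbol\varphi}^{\pl}_{\epsilon}})\cap\omega$; since slip planes of each system are $\epsilon$-separated, $\mathcal J^{*}$, and hence $N:=\pe^{\pl}(\mathcal J^{*})\cap B$, are contained in finite unions of straight segments, and off $\mathcal J^{*}$ both plastic maps are continuous with gradient $\mathbf I$ (true for a single slip and preserved under composition). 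Thus on every connected component $P$ of $\omega\setminus\mathcal J^{*}$ one has $\pe^{\pl}(\mathbf X)=\mathbf X+\mathbf c_P$ and $\hat{\boldsymbol\varphi}^{\pl}_{\epsilon}(\mathbf X)=\mathbf X+\hat{\mathbf c}_P$ for constants $\mathbf c_P,\hat{\mathbf c}_P$, so that $g(\mathbf Z):=\psi(\mathbf Z)-\mathbf Z$ equals the constant $\hat{\mathbf c}_P-\mathbf c_P$ on the open set $\pe^{\pl}(P)$. These open sets are disjoint and cover $B\setminus N$, so $g$ is continuous on $B$ and locally constant on $B\setminus N$.

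The conclusion of (ii) is then a rigidity statement: a continuous map on a ball that is locally constant off a finite union of segments is globally constant. For a value $\mathbf d$ attained by $g$, the set $U=\{g=\mathbf d\}$ is closed by continuity, and it is open because near any $\mathbf Z_0\in U$ a small ball $B_\rho(\mathbf Z_0)\subset B$ is divided by the finitely many segments of $N$ into finitely many open sectors, each having $\mathbf Z_0$ in its closure; the constant value of $g$ on each such sector must equal $g(\mathbf Z_0)=\mathbf d$ by continuity, so $g\equiv\mathbf d$ on $B_\rho(\mathbf Z_0)$. Connectedness of $B$ gives $U=B$, i.e.\ $\psi(\mathbf Z)=\mathbf Z+\mathbf d$ on $B$. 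Evaluating at $\mathbf Z=\pe^{\pl}(\mathbf X)$ for $\mathbf X\in B_\epsilon(\mathbf Y)$ (allowed since $\pe^{\pl}(B_\epsilon(\mathbf Y))\subset B$) yields $\hat{\boldsymbol\varphi}^{\pl}_{\epsilon}(\mathbf X)=\pe^{\pl}(\mathbf X)-\mathbf c$ with $\mathbf c:=-\mathbf d$, and the accompanying identity for the elastic factors follows from $\pe^{\el}\circ\pe^{\pl}=\hat{\boldsymbol\varphi}^{\el}_{\epsilon}\circ\hat{\boldsymbol\varphi}^{\pl}_{\epsilon}$.

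Part (i) is then a corollary: the jump set of $\pe$, its decomposition into maximal segments $\mathcal J_j$, and the normals $\mathbf N_j$ depend only on $\pe$; the Burgers vector $\mathbf b_{\epsilon j}=\llbracket\pe^{\pl}\rrbracket$ along a segment in the compatible region is, by (ii), the same for every admissible decomposition (a translation does not change jumps), and on the piece of $\mathcal J_j$ inside a core it is fixed by the convention in (C4); inserting these $\pe$-determined data into \eqref{Eq:Fp_epsilon} shows that $\Fpe$ depends on $\pe$ alone. I expect the main obstacle to be precisely the continuity step: a priori $\psi=\hat{\boldsymbol\varphi}^{\pl}_{\epsilon}\circ(\pe^{\pl})^{-1}$ is only an $SBV_{\mathrm{loc}}$ map that may have jumps, and the crux is to recognize, via the elastic representation together with invariance of domain, that it is in fact continuous, after which its piecewise-translation structure forces it to be a single translation.
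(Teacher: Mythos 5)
Your proof is correct, and it reaches the paper's conclusion by a genuinely different mechanism, even though the underlying geometry (compare the two decompositions in the intermediate configuration, where both plastic maps are piecewise translations, and let the elastic factors glue the pieces) is the same. The paper proceeds in two steps: first it identifies $\mathcal{J}_{\boldsymbol\varphi^\pl_{\epsilon}}$ with $\mathcal{J}_{\pe}$ via the $SBV$ chain rule and injectivity of $\pe^\el$, and then it runs a combinatorial ``blocks in contact'' argument: on each connected block away from the jump set both plastic maps are translations, and for two blocks whose images touch in the intermediate configuration, injectivity of $\hatpe^\el$ at a single contact point forces the offsets $\mathbf{c}_{\mathcal B}-\hat{\mathbf{c}}_{\mathcal B}$ to agree, after which connectedness of $B_{L\epsilon}(\pe^\pl(\mathbf Y))$ propagates a common offset and the pullback containing $B_\epsilon(\mathbf Y)$ finishes the proof. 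You instead package the comparison into the single transition map $\psi=\hatpe^\pl\circ(\pe^\pl)^{-1}$, prove it is continuous on $B_{L\epsilon}(\pe^\pl(\mathbf Y))$ through the elastic representation $\psi=(\hatpe^\el)^{-1}\circ\pe^\el$ (here you need invariance of domain, which the paper never invokes -- it only uses injectivity), observe that $\psi-\mathrm{id}$ is locally constant off a finite union of segments, and conclude by an open--closed connectedness argument that $\psi$ is a single translation; part (i) then follows exactly as in the paper by covering the region away from the enlarged cores with such balls and using the (C3)--(C4) convention inside. What your route buys is economy: you bypass the separate chain-rule step for the uniqueness of $\mathcal{J}_{\boldsymbol\varphi^\pl_{\epsilon}}$ (it drops out of the translation identity), and the rigidity step is a clean topological lemma; what the paper's route buys is that it uses weaker input (no openness of $\hatpe^\el$, no continuity of its inverse) and it records explicitly the identity $\mathcal{J}_{\boldsymbol\varphi^\pl_{\epsilon}}=\mathcal{J}_{\pe}$, which is of independent interest. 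Two small points to tighten: take $\mathcal J^{*}$ to be the union of the pullbacks of the jump lines of all the single-slip factors of both decompositions (rather than the measure-theoretic jump sets), so that each component of the complement is literally a translation domain and $N$ is manifestly a finite union of segments; and note that the elastic identity you derive holds for $\tilde{\mathbf X}\in\hatpe^\pl(B_\epsilon(\mathbf Y))=\pe^\pl(B_\epsilon(\mathbf Y))-\mathbf c$, which matches the intended meaning of the statement after the translation is taken into account.
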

 
\begin{proof}
We first show that the second assertion implies the first. 
Recall that $\mathbf F_\epsilon^\pl=\mathbf{I}$ outside 
$\Omega'=\{\mathbf{X}\in \Omega : \text{dist}\left(\mathbf{X},\partial \Omega \right)>L^2\epsilon\}$.
Every $\mathbf Y\in \Omega'\setminus\cup_k B_{(m+L^2)\epsilon}(\mathbf{X}_k)$ has the property stated in (ii), 
therefore  $\mathbf F_\epsilon^\pl$ is unique on this set.
Further, this shows that it is unique in the outer region $B_{c\epsilon}(\mathbf X_k)
\setminus B_{(m+L^2)\epsilon}(\mathbf{X}_k)$ of each core, and therefore by (C3) also in the interior. This concludes the proof of (i).

We now turn to the second assertion. Let $\mathbf Y$ be as stated.
By the previous Lemma, $\pe^\pl\left( B_{\epsilon}(\mathbf{Y}) \right) \subset B_{L\epsilon}\left(\pe^\pl(\mathbf{Y}) \right)$ and $\pe^{\pl,-1}\left( B_{L\epsilon}\left(\pe^\pl(\mathbf{Y}) \right) \right) \subset B_{L^2\epsilon}(\mathbf{Y})$, c.f.~Fig.~\ref{Fig:Unique_sequence}.
We shall show that $D\boldsymbol \varphi^\pl_{\epsilon}=D\hat{\boldsymbol \varphi}^\pl_{\epsilon}$ on $B_\epsilon(\mathbf Y)$  by showing that $\mathcal{J}_{\boldsymbol \varphi^\pl_{\epsilon}}$ and $\mathbf{b}_{\epsilon}$ are uniquely defined from $\boldsymbol \varphi_{\epsilon}$ on such domain.

\textbf{Uniqueness of $\mathcal{J}_{\boldsymbol \varphi^\pl_{\epsilon}}$.}
By definition, $\boldsymbol \varphi_{\epsilon}=\boldsymbol \varphi^\el_{\epsilon}\circ \boldsymbol 
\varphi^\pl_{\epsilon}$, with $\boldsymbol \varphi^\pl_{\epsilon}\in SBV_{\mathrm{loc}}(\mathbb{R}^2;\mathbb{R}^2)$ and $\boldsymbol \varphi^\el_{\epsilon}$ Lipschitz continuous. Therefore, by the chain rule, c.f.~Theorem 3.96 of \cite{AmbrosioFuscoPallara2000}, and since $\nabla \pe^\pl =\mathbf{I}$,
\begin{equation}
D\boldsymbol \varphi_{\epsilon} = \nabla \boldsymbol \varphi^\el_{\epsilon} \circ \boldsymbol \varphi^\pl_{\epsilon} \ \mathcal{L}^2+\left(\boldsymbol \varphi^\el_{\epsilon}(\boldsymbol \varphi^{\pl+}_{\epsilon})- \boldsymbol \varphi^\el_{\epsilon}(\boldsymbol \varphi^{\pl-}_{\epsilon}) \right) \otimes \mathbf{N}\ \mathcal{H}^1\lfloor_{\mathcal{J}_{\boldsymbol \varphi^\pl_{\epsilon}}},
\end{equation}
and $\mathcal{J}_{\boldsymbol \varphi_{\epsilon}} \subseteq \mathcal{J}_{\boldsymbol \varphi^\pl_{\epsilon}}$. For $\mathbf{X} \in \mathcal{J}_{\boldsymbol \varphi^\pl_{\epsilon}}$, $\boldsymbol \varphi^{\pl+}_{\epsilon}(\mathbf{X})\neq \boldsymbol \varphi^{\pl-}_{\epsilon}(\mathbf{X})$ and since $\boldsymbol \varphi^\el_{\epsilon}$ is one-to-one, then $\boldsymbol \varphi^\el_{\epsilon}\left(\boldsymbol \varphi^{\pl+}_{\epsilon}(\mathbf{X})\right) \neq \boldsymbol \varphi^\el_{\epsilon}\left(\boldsymbol \varphi^{\pl-}_{\epsilon}(\mathbf{X})\right)$. 
The function 
$\boldsymbol \varphi_{\epsilon}$ is then not approximately continuous in $\mathbf{X}$ and $\mathcal{H}^1\left(\mathcal{J}_{\boldsymbol \varphi^\pl_{\epsilon}}\setminus \mathcal{J}_{\boldsymbol \varphi_{\epsilon}} \right)=\emptyset$. 
Therefore, $\mathcal{J}_{\boldsymbol \varphi^\pl_{\epsilon}}  = \mathcal{J}_{\boldsymbol \varphi_{\epsilon}}$ and $\mathcal{J}_{\boldsymbol \varphi^\pl_{\epsilon}}$ is uniquely defined from $\boldsymbol \varphi_{\epsilon}$, up to null sets.

\textbf{Uniqueness of $\mathbf{b}_{\epsilon}$.}
We consider two decompositions $\boldsymbol \varphi_{\epsilon}= \boldsymbol \varphi^\el_{\epsilon} \circ
\boldsymbol \varphi^\pl_{\epsilon} =  \hat{\boldsymbol \varphi}^\el_{\epsilon} \circ \hat{\boldsymbol \varphi}^\pl_{\epsilon}$
in $B_{L^2\epsilon}(\mathbf Y)$.
We write $B_{L^2\epsilon}(\mathbf{Y})$ as the union of finitely many 
non overlapping connected open domains $\mathcal{B}$, i.e.~$\omega = \cup \bar{\mathcal{B}}$, such that $\mathcal{B}\cap \mathcal{J} = \emptyset$. For each domain $\mathcal{B}$, $\boldsymbol \varphi^\pl_{\epsilon}$ and $\hat{\boldsymbol \varphi}^\pl_{\epsilon}$ can be written, by construction, as $\boldsymbol \varphi^\pl_{\epsilon}(\mathbf{X})=\mathbf{X}+\mathbf{c}_B$ and $\hat{\boldsymbol \varphi}^\pl_{\epsilon}(\mathbf{X}) = \mathbf{X} + \hat{\mathbf{c}}_{\mathcal{B}}$, with $\mathbf{c}_{\mathcal{B}}$ and $\hat{\mathbf{c}}_{\mathcal{B}}$ constant vectors. Then,
\begin{align}
\boldsymbol \varphi_{\epsilon}(\mathbf{X})=\left(\boldsymbol \varphi^\el_{\epsilon} \circ \boldsymbol \varphi^\pl_{\epsilon} \right)(\mathbf{X}) =  \left(\hat{\boldsymbol \varphi}^\el_{\epsilon} \circ \hat{\boldsymbol \varphi}^\pl_{\epsilon} \right)(\mathbf{X}) =\boldsymbol \varphi^\el_{\epsilon}\left(\mathbf{X}+\mathbf{c}_{\mathcal{B}} \right) = \hat{\boldsymbol \varphi}^\el_{\epsilon}\left(\mathbf{X}+\hat{\mathbf{c}}_{\mathcal{B}} \right).
\end{align}

Let $\mathbf{d}_{\mathcal{B}}=\mathbf{c}_{\mathcal{B}}-\hat{\mathbf{c}}_{\mathcal{B}}$. We proceed to show that all blocks $\mathcal{B}$ such that $\mathcal{B}\cap B_{\epsilon}(\mathbf{Y})\neq \emptyset$ have the same $\mathbf{d}_{\mathcal{B}}$. For that purpose, consider two blocks $\mathcal{B}$ and $\mathcal{B}'$ such that they are in contact in the intermediate configuration defined by $\pe^\pl$, i.e.~$\left(\bar{\mathcal{B}}+\mathbf{c}_{\mathcal{B}} \right)\cap \left( \bar{\mathcal{B}'}+\mathbf{c}_{\mathcal{B}'}\right) \neq \emptyset$.
Then, there exists $\mathbf{X}\in \bar{ \mathcal{B}}$, $\mathbf{X}'\in \bar{ \mathcal{B}'}$, such that $\mathbf{X}+\mathbf{c}_{\mathcal{B}} = \mathbf{X}'+\mathbf{c}_{\mathcal{B}'}$ and $\pe^\el\left( \mathbf{X}+\mathbf{c}_{\mathcal{B}}\right)=\pe^\el(\mathbf{X}'+\mathbf{c}_{\mathcal{B}'})$. 
Then, since $\boldsymbol \varphi_{\epsilon}=\boldsymbol \varphi^\el_{\epsilon} \circ \boldsymbol \varphi^\pl_{\epsilon} =\hat{\boldsymbol \varphi}^\el_{\epsilon} \circ \hat{\boldsymbol \varphi}^\pl_{\epsilon}$,
\begin{equation}
\boldsymbol \varphi^\el_{\epsilon}\left(\mathbf{X}+\mathbf{c}_{\mathcal{B}} \right) = \hat{\boldsymbol \varphi}^\el_{\epsilon}\left(\mathbf{X}+\hat{\mathbf{c}}_{\mathcal{B}} \right)=\boldsymbol \varphi^\el_{\epsilon}\left(\mathbf{X}'+\mathbf{c}_{\mathcal{B}'} \right) = \hat{\boldsymbol \varphi}^\el_{\epsilon}\left(\mathbf{X}'+\hat{\mathbf{c}}_{\mathcal{B}'} \right).
\end{equation}
Since $\hat{\boldsymbol \varphi}^\el_{\epsilon}$ is injective, $\mathbf{X}+\hat{\mathbf{c}}_{\mathcal{B}} = \mathbf{X}'+\hat{\mathbf{c}}_{\mathcal{B}'}$. Subtracting this equation from $\mathbf{X}+\mathbf{c}_{\mathcal{B}} = \mathbf{X}'+\mathbf{c}_{\mathcal{B}'}$, one obtains 
 that $\mathbf{d}_{\mathcal{B}}=\mathbf{d}_{\mathcal{B}'}$ or, equivalently, that $\mathbf{c}_{\mathcal{B}} - \mathbf{c}_{\mathcal{B}'} = \hat{\mathbf{c}}_{\mathcal{B}} -  \hat{\mathbf{c}}_{\mathcal{B}'} $ for every two pair of blocks in contact in $B_{L\epsilon}\left(\pe^\pl(\mathbf{Y}) \right)$. Since its pullback to the reference configuration contains $B_{\epsilon}(\mathbf{Y})$, one obtains $\mathbf{b}_{\epsilon} = \llbracket \boldsymbol \varphi^\pl_{\epsilon} \rrbracket = \llbracket \hat{\boldsymbol \varphi}^\pl_{\epsilon} \rrbracket = \hat{\mathbf{b}}_{\epsilon}$ on $B_{\epsilon}(\mathbf{Y})$ and the result follows.
\end{proof}

The next Lemma shows that the decomposition of $\pe^\pl$ is unique, and will be used to prove $\det \Fp=1$.
\begin{lemma}\label{Lemma:UniqueDecompositionVarphip}
Let $\pe \in X_{\epsilon}$ as defined in \eqref{Def:Xe}. Consider $\mathbf{Y}$ such that $\operatorname{dist}\left(\mathbf{Y},\partial \Omega \right)>2L^2\epsilon$, with $L$ as in Lemma \ref{Lemma:up_inverse_Lipschitz} and $B_{L^2\epsilon}(\mathbf{Y})$ satisfying $B_{L^2\epsilon}(\mathbf{Y})\cap\left(\cup_k B_{m\epsilon}(\mathbf{X}_k) \right)=\emptyset$.  If $\pe^\pl$ and $\hatpe^\pl$ are two deformations on $B_{L^2\epsilon}(\mathbf{Y})$, satisfying $\F^\pl_{\epsilon}= D\pe^\pl=D\hatpe^\pl$, each with its decomposition of the form
 \begin{equation} \label{Eq:Double_definition}
 \begin{split}
 &\boldsymbol \varphi^\pl_{\epsilon} = 
\boldsymbol \varphi^\pl_{\epsilon, N_s} \circ ... \circ \boldsymbol \varphi^\pl_{\epsilon, \nu} \circ ... \circ \boldsymbol \varphi^\pl_{\epsilon, 1}, \\
&\hat{\boldsymbol \varphi}^\pl_{\epsilon} = 
\hat{\boldsymbol \varphi}^\pl_{\epsilon, N_s} \circ ... \circ \hat{\boldsymbol \varphi}^\pl_{\epsilon, \nu} \circ ... \circ \hat{\boldsymbol \varphi}^\pl_{\epsilon, 1},
\end{split}
\end{equation}
then, there are constant vectors $\mathbf{c}_{\nu}$, $\nu=0, .., N_s$ such that
\begin{equation} \label{Eq:ResultDecomposition}
\begin{split}
&\boldsymbol \varphi^{\pl}_{\epsilon, \nu}(\mathbf{X}) = \hat{\boldsymbol \varphi}^{\pl}_{\epsilon, \nu} (\mathbf{X}-\mathbf{c}_{\nu-1})+\mathbf{c}_{\nu},\\ 
&D\boldsymbol \varphi^{\pl}_{\epsilon, \nu}\left(\boldsymbol \varphi^{\pl}_{\epsilon, \nu-1} \circ ... \circ \boldsymbol \varphi^{\pl}_{\epsilon, 1} \right)=D\hat{\boldsymbol \varphi}^{\pl}_{\epsilon, \nu} \left( \hat{\boldsymbol \varphi}^{\pl}_{\epsilon, \nu-1} \circ ... \circ \hat{\boldsymbol \varphi}^{\pl}_{\epsilon, 1} \right)
\end{split}
\end{equation}
in  $B_{\epsilon}(\mathbf{Y})$.
\end{lemma}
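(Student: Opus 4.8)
The plan is to prove the statement by induction on the number of slip systems $N_s$, peeling off the \emph{last} slip $\boldsymbol \varphi^{\pl}_{\epsilon,N_s}$ at each step and reducing to the $(N_s-1)$-slip compositions $\psi:=\boldsymbol \varphi^{\pl}_{\epsilon,N_s-1}\circ\dots\circ\boldsymbol \varphi^{\pl}_{\epsilon,1}$ and its hatted analogue $\hat\psi$. The base case $N_s=1$ is immediate: $D\boldsymbol \varphi^{\pl}_{\epsilon,1}=D\hat{\boldsymbol \varphi}^{\pl}_{\epsilon,1}$ on the connected ball $B_{L^2\epsilon}(\mathbf Y)$ forces $\boldsymbol \varphi^{\pl}_{\epsilon,1}-\hat{\boldsymbol \varphi}^{\pl}_{\epsilon,1}$ to be a constant $\mathbf c_1$, which (both being single slips with the same normal $\mathbf N_1$) is exactly \eqref{Eq:ResultDecomposition} with $\mathbf c_0=0$.

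For the inductive step I would first record three structural facts, each following from the piecewise-translation nature of single slips (every $\boldsymbol \varphi^{\pl}_{\epsilon,\nu}$ has $\nabla\boldsymbol \varphi^{\pl}_{\epsilon,\nu}=\mathbf I$ a.e., since $\phi_{\epsilon,\nu}$ is piecewise constant) together with \cite{ReinaConti2014} and Lemma \ref{Lemma:up_inverse_Lipschitz}: (i) every partial composition $\psi_\mu=\boldsymbol \varphi^{\pl}_{\epsilon,\mu}\circ\dots\circ\boldsymbol \varphi^{\pl}_{\epsilon,1}$ is bijective and piecewise a translation, and $\mathcal J_{\psi_\mu}$ is a locally finite union of straight segments with normals in $\{\mathbf N_1,\dots,\mathbf N_\mu\}$ and jump constant on each segment; (ii) since $D\pe^\pl=D\hatpe^\pl$ on the connected ball, $\pe^\pl-\hatpe^\pl$ is a constant $\mathbf c_{N_s}$ there, and the jump set $\mathcal J:=\mathcal J_{\pe^\pl}=\mathcal J_{\hatpe^\pl}$ and the Burgers vectors $\mathbf b_{\epsilon j}$ are common to the two decompositions; (iii) Lemma \ref{Lemma:up_inverse_Lipschitz}(iii) gives $\psi(B_\epsilon(\mathbf Y))\subset B_{L\epsilon}(\psi(\mathbf Y))$ and $B_{L\epsilon}(\psi(\mathbf Y))\subset\psi(B_{L^2\epsilon}(\mathbf Y))$, and likewise for $\hat\psi$.

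Now I peel off the last slip. Since $\boldsymbol \varphi^{\pl}_{\epsilon,N_s}$ shears along $\mathbf N_{N_s}^\perp$ it leaves the $\mathbf N_{N_s}$-component unchanged, so $\pe^\pl\cdot\mathbf N_{N_s}=\psi\cdot\mathbf N_{N_s}$ and $\hatpe^\pl\cdot\mathbf N_{N_s}=\hat\psi\cdot\mathbf N_{N_s}$, whence $\psi\cdot\mathbf N_{N_s}=\hat\psi\cdot\mathbf N_{N_s}+\gamma$ on $B_{L^2\epsilon}(\mathbf Y)$ with $\gamma:=\mathbf c_{N_s}\cdot\mathbf N_{N_s}$. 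Moreover, as $\mathbf N_{N_s}\notin\{\mathbf N_1,\dots,\mathbf N_{N_s-1}\}$, the segments of $\mathcal J$ with normal $\mathbf N_{N_s}$ are precisely the $\psi$-preimages of pieces of $\mathcal J_{\boldsymbol \varphi^{\pl}_{\epsilon,N_s}}$ that avoid $\mathcal J_\psi$; along such a segment $\psi$ is locally a single translation, so $\llbracket\pe^\pl\rrbracket$ there equals the slip $\llbracket\boldsymbol \varphi^{\pl}_{\epsilon,N_s}\rrbracket\perp\mathbf N_{N_s}$, and the level $\psi(\mathbf X)\cdot\mathbf N_{N_s}$ of that segment is a discontinuity point of $\phi_{\epsilon,N_s}$ with that jump. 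Writing the same for the hatted decomposition and combining with $\psi\cdot\mathbf N_{N_s}=\hat\psi\cdot\mathbf N_{N_s}+\gamma$ and the common data from (ii), one gets: $\phi_{\epsilon,N_s}$ has a discontinuity with a given jump at a level $s$ iff $\hat\phi_{\epsilon,N_s}$ does at $s-\gamma$, for all $s$ attained by $\psi\cdot\mathbf N_{N_s}$ on $B_{L^2\epsilon}(\mathbf Y)$. By (iii) the interval spanned by the levels attained on the \emph{smaller} ball $B_\epsilon(\mathbf Y)$ is contained in that set, so on it the difference $s\mapsto\phi_{\epsilon,N_s}(s)-\hat\phi_{\epsilon,N_s}(s-\gamma)$ has no jumps and equals a single constant $k$. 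Hence $\boldsymbol \varphi^{\pl}_{\epsilon,N_s}(\tilde{\mathbf X})=\hat{\boldsymbol \varphi}^{\pl}_{\epsilon,N_s}(\tilde{\mathbf X}-\mathbf c_{N_s-1})+\mathbf c_{N_s}$ on the slab of levels $R$ attained by $\psi\cdot\mathbf N_{N_s}$ on $B_\epsilon(\mathbf Y)$, where $\mathbf c_{N_s-1}$ is determined by $\mathbf c_{N_s-1}\cdot\mathbf N_{N_s}=\gamma$ and $\mathbf c_{N_s-1}\cdot\mathbf N_{N_s}^\perp=\mathbf c_{N_s}\cdot\mathbf N_{N_s}^\perp-k$; this slab contains $\psi(B_\epsilon(\mathbf Y))$ and $\hat\psi(B_\epsilon(\mathbf Y))+\mathbf c_{N_s-1}$.

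Finally I recurse. From $\pe^\pl=\hatpe^\pl+\mathbf c_{N_s}$ and the identity just obtained, injectivity of $\boldsymbol \varphi^{\pl}_{\epsilon,N_s}$ (Lemma \ref{Lemma:up_inverse_Lipschitz}(i)) gives $\psi=\hat\psi+\mathbf c_{N_s-1}$ on $B_\epsilon(\mathbf Y)$, hence $D\psi=D\hat\psi$ there; applying the induction hypothesis to the $(N_s-1)$-slip factorizations of $\psi$ and of $\hat\psi(\cdot)+\mathbf c_{N_s-1}$ (the latter being the factorization of $\hat\psi$ with its outermost slip translated) yields constants $\mathbf c_0=0,\dots,\mathbf c_{N_s-1}$ — the top one consistent with the $\mathbf c_{N_s-1}$ above — with $\boldsymbol \varphi^{\pl}_{\epsilon,\nu}(\mathbf X)=\hat{\boldsymbol \varphi}^{\pl}_{\epsilon,\nu}(\mathbf X-\mathbf c_{\nu-1})+\mathbf c_\nu$ for $\nu\le N_s-1$; together with $\nu=N_s$ this is the first line of \eqref{Eq:ResultDecomposition}, and the second line follows by differentiating and using $\nabla\psi_\mu=\mathbf I$. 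The step I expect to be the main obstacle is the ``no gaps'' argument in the previous paragraph: because $\psi$ and $\hat\psi$ are only piecewise translations, $\psi\cdot\mathbf N_{N_s}$ is genuinely discontinuous and its image over a ball need not be an interval, so one must exploit exactly the discrepancy between the radii $L^2\epsilon$ and $\epsilon$ in the statement — through (iii) — to ensure that any discontinuity of $\phi_{\epsilon,N_s}$ lying between two attained levels is witnessed by a segment of $\mathcal J$ inside $B_{L^2\epsilon}(\mathbf Y)$ and therefore matched; a secondary, purely bookkeeping, difficulty is tracking on which slab or ball each identity holds and checking that the translations produced at the last-slip stage coincide with those delivered by the induction hypothesis.
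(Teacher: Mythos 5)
Your strategy is in essence the one the paper uses: peel off the outermost slip, exploit that the normals $\mathbf N_1,\dots,\mathbf N_{N_s}$ are all distinct so that the $\mathbf N_{N_s}$-jump data of the last slip can be read off from $\F^\pl_\epsilon$, extract a translation constant, and descend to the inner slips. However, as written the argument does not close, and the problem is not ``purely bookkeeping.'' After peeling you only obtain $\psi=\hat\psi+\mathbf c_{N_s-1}$, hence $D\psi=D\hat\psi$, on $B_\epsilon(\mathbf Y)$ (in fact only on a thin slab-like subset of $B_{L^2\epsilon}(\mathbf Y)$ determined by the levels attained over $B_\epsilon(\mathbf Y)$). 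But your induction hypothesis for $N_s-1$ slips needs agreement of the derivatives on a ball of radius $L_{N_s-1}^2\epsilon$, with $L_{N_s-1}=2^{N_s-1}(1+n')^{N_s-1}$, in order to conclude on $B_\epsilon(\mathbf Y)$, and you cannot rescale $\epsilon$ because the slip spacing and jump bounds are tied to the same $\epsilon$. So the induction loses a factor of order $L_\nu^2$ at every step and delivers the conclusion only on a much smaller ball than $B_\epsilon(\mathbf Y)$; running the peeling instead on the larger ball needed by the induction hypothesis makes your ``no gaps'' step correspondingly harder and the radii no longer fit inside $B_{L^2\epsilon}(\mathbf Y)$ for large $N_s$. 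The paper avoids this by not re-invoking the lemma at all: it iterates directly with a chain of \emph{connected} balls $D_{\nu}=B_{L\epsilon/(2+2n')^{N_s-\nu}}$ centered at the successive images of $\mathbf Y$, so that only a factor $2+2n'$ of radius is lost per peeled slip, and the choice $L=(2+2n')^{N_s}$ is exactly what guarantees that the final domain $E_1$ still contains $B_\epsilon(\mathbf Y)$.

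The second issue is the step you yourself flag as the main obstacle: the claim that every discontinuity of $\phi_{\epsilon,N_s}$ at a level lying between attained levels is witnessed by a segment of $\mathcal J$ inside $B_{L^2\epsilon}(\mathbf Y)$, so that $s\mapsto\phi_{\epsilon,N_s}(s)-\hat\phi_{\epsilon,N_s}(s-\gamma)$ is a single constant. This is the crux and is left unproven: since $\psi\cdot\mathbf N_{N_s}$ jumps across the inner slip lines, its range over a ball can have gaps of size up to order $n'\epsilon$, and you provide no mechanism forcing the constant to match across such a gap. The paper organizes this point differently: it identifies $D\boldsymbol\varphi^\pl_{\epsilon,N_s}$ (equivalently $D\boldsymbol\varphi^{\pl,-1}_{\epsilon,N_s}$) from $\F^\pl_\epsilon$ on $\pe^\pl\left(B_{L^2\epsilon}(\mathbf Y)\right)$ — arguing that the $\mathbf N_{N_s}$-jump lines necessarily traverse regions where the inner slips are locally translations, since all other normals are different — and then integrates $D\boldsymbol\varphi^{\pl,-1}_{\epsilon,N_s}= D\bar{\boldsymbol\varphi}^{\pl,-1}_{\epsilon,N_s}$ over the connected ball $D_{N_s}$ in the deformed configuration, so that a single integration constant $\mathbf c_{N_s-1}$ appears; connectedness of the $D_\nu$ replaces your level-set ``no gaps'' argument. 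If you keep your scalar-profile formulation, you must actually prove the witnessing statement (that every relevant level is attained by $\psi\cdot\mathbf N_{N_s}$ on $B_{L^2\epsilon}(\mathbf Y)$ away from $\mathcal J_\psi$), which is precisely the traversal property the paper invokes, and do so on the enlarged balls demanded by the corrected domain tracking.
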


\begin{proof}
Consider two possible plastic deformation mappings $\pe^\pl$, $\hat{\boldsymbol \varphi}^\pl_{\epsilon}$ on $B_{L^2\epsilon}(\mathbf{Y})$ and their decomposition as in \eqref{Eq:Double_definition}. By Lemma \ref{Thm:Fp_uniqueness}, $\F^\pl_{\epsilon}=D\boldsymbol \varphi^\pl_{\epsilon}=D\hat{\boldsymbol \varphi}^\pl_{\epsilon}$ on $B_{L^2\epsilon}(\mathbf Y)$, and by integration one obtains $\boldsymbol \varphi^\pl_{\epsilon} =\hat{\boldsymbol \varphi}^\pl_{\epsilon} +\mathbf{c}_{N_s}$, where $\mathbf{c}_{N_s}$ corresponds to a rigid translation. Define $\bar{\boldsymbol \varphi}^\pl_{\epsilon, N_s}(\mathbf{X})= \hat{\boldsymbol \varphi}^\pl_{\epsilon, N_s}(\mathbf{X})+\mathbf{c}_{N_s}$. Then,
\begin{equation}
\boldsymbol \varphi^\pl_{\epsilon, N_s} \circ  \boldsymbol \varphi^\pl_{\epsilon, N_s-1} \circ... \circ \boldsymbol \varphi^\pl_{\epsilon, 1} =  \bar{\boldsymbol \varphi}^\pl_{\epsilon, N_s} \circ \hat{\boldsymbol \varphi}^\pl_{\epsilon, N_s-1}\circ ... \circ  \hat{ \boldsymbol \varphi}^\pl_{\epsilon, 1} \quad \text{on } B_{L^2\epsilon}(\mathbf{Y}).
\end{equation}

\begin{figure} 
\begin{center}
    {\includegraphics[width=0.5\textwidth]{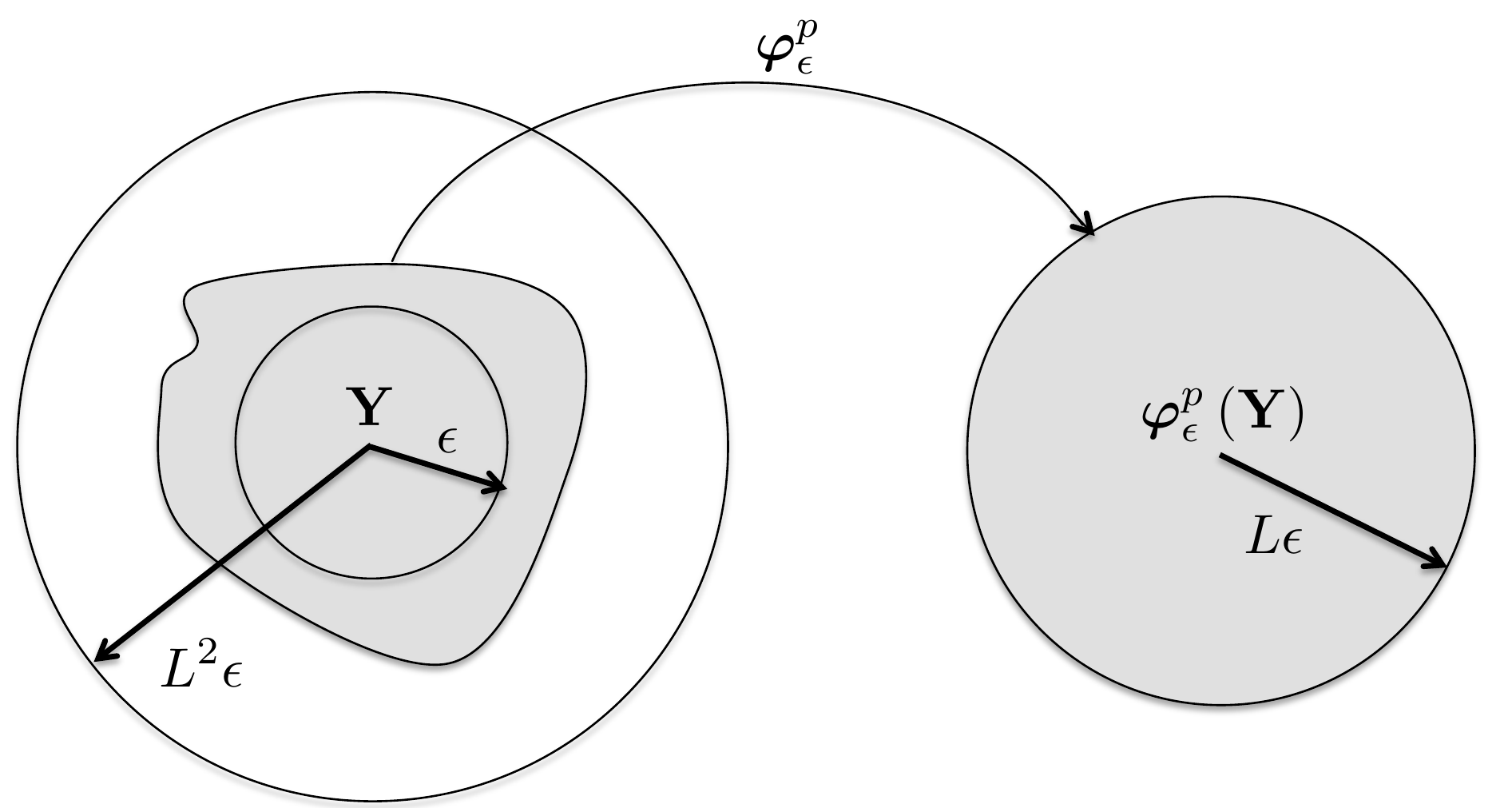}}
    \caption[]{Choice of domains to prove the uniqueness of $\Fpe$ in $B_{\epsilon}(\mathbf{Y})$, for $B_{L^2\epsilon}(\mathbf{Y}) \subset \Omega \setminus \cup_k B_{m\epsilon}(\mathbf{X}_k)$.}
    \label{Fig:Unique_sequence}
\end{center}
\end{figure}

The last deformation applied in the sequence corresponds to slip along jump sets with normal $\mathbf{N}_{N_s}$. Since the normals $\mathbf{N}_{\nu}$, $\nu=1, ..., N_s$ are all distinct, the jump set with normal $\mathbf{N}_{N_s}$ on configuration $\boldsymbol \varphi^\pl_{\epsilon, N_s-1} \circ ... \circ \boldsymbol \varphi^\pl_{\epsilon, 1}\left(B_{L^2\epsilon}(\mathbf{Y})\right)$ will necessarily traverse regions where $\F^\pl=\mathbf{I}$. The segments in those regions will suffer a translation upon pullback to the reference configuration, and will thus have that same normal $\mathbf{N}_{N_s}$, each with a given Burgers vector. The ensemble of segments with normal $\mathbf{N}_{N_s}$ will, by construction, form a set of straight lines in the deformed configuration with a constant Burgers vector per line, potentially different between lines. 
Thus, $D\boldsymbol \varphi^\pl_{\epsilon, N_s}$ is uniquely characterized from $\F^\pl$  on the set $\p^\pl_{\epsilon}\left(B_{L^2\epsilon} (\mathbf{Y})\right)$, which implies
$D\boldsymbol \varphi^{\pl,-1}_{\epsilon, N_s}= D\bar{\boldsymbol \varphi}^{\pl,-1}_{\epsilon, N_s}$ and
\begin{align}
&\boldsymbol \varphi^{\pl,-1}_{\epsilon, N_s}(\tilde{\mathbf{X}}) = \bar{\boldsymbol \varphi}^{\pl,-1}_{\epsilon, N_s} (\tilde{\mathbf{X}})+ \mathbf{c}_{N_s-1} = \hat{\boldsymbol \varphi}^{\pl,-1}_{\epsilon, N_s} (\tilde{\mathbf{X}}-\mathbf{c}_{N_s})+ \mathbf{c}_{N_s-1}\quad \text{on } D_{N_s},\\
& \boldsymbol \varphi^{\pl}_{\epsilon, N_s}(\mathbf{X})= \bar{\boldsymbol \varphi}^{\pl}_{\epsilon, N_s} (\mathbf{X}-\mathbf{c}_{N_s-1})= \hat{\boldsymbol \varphi}^{\pl}_{\epsilon, N_s} (\mathbf{X}-\mathbf{c}_{N_s-1})+ \mathbf{c}_{N_s} \quad \text{on } \p^{\pl,-1}_{\epsilon,N_s}(D_{N_s}),
\end{align}
where $D_{N_s}=B_{L\epsilon} (\pe^\pl(\mathbf{Y})) \subset \pe^\pl\left(B_{L^2\epsilon} (\mathbf{Y})\right)$ and $\mathbf{c}_{N_s-1}$ and $\mathbf{c}_{N_s}$ are constant vectors. We define $E_{N_s}=\pe^{\pl,-1}\left(D_{N_s} \right)$ so that
\begin{equation}
\boldsymbol \varphi^\pl_{\epsilon, N_s-1}\circ... \circ \boldsymbol \varphi^\pl_{\epsilon, 1} = 
\hat{\boldsymbol \varphi}^\pl_{\epsilon, N_s-1}\circ ... \circ  \hat{ \boldsymbol \varphi}^\pl_{\epsilon, 1} + \mathbf{c}_{N_s-1} \quad \text{on } E_{N_s}.
\end{equation}
Defining $\bar{\boldsymbol \varphi}^\pl_{\epsilon, N_s-1}(\mathbf{X})=\hat{\boldsymbol \varphi}^\pl_{\epsilon, N_s-1}(\mathbf{X})+\mathbf{c}_{N_s-1}$, it is readily obtained
\begin{equation}
  \boldsymbol \varphi^\pl_{\epsilon, N_s-1}\circ  \boldsymbol \varphi^\pl_{\epsilon, N_s-2}\circ... \circ \boldsymbol \varphi^\pl_{\epsilon, 1}=  \bar{\boldsymbol \varphi}^\pl_{\epsilon, N_s-1}\circ \hat{\boldsymbol \varphi}^\pl_{\epsilon, N_s-2} \circ ... \circ  \hat{ \boldsymbol \varphi}^\pl_{\epsilon, 1}  \quad \text{on } E_{N_s}.
\end{equation}

The above procedure may be iterated for $\nu=N_s-1$ successively till $\nu=1$, obtaining for all $\nu$
\begin{align}
&\bar{\boldsymbol \varphi}^\pl_{\epsilon, \nu}(\mathbf{X})=\hat{\boldsymbol \varphi}^\pl_{\epsilon, \nu}(\mathbf{X})+\mathbf{c}_{\nu}, \\
&\boldsymbol \varphi^{\pl}_{\epsilon,\nu}(\mathbf{X})= \bar{\boldsymbol \varphi}^{\pl}_{\epsilon, \nu} (\mathbf{X}-\mathbf{c}_{\nu-1})  = \hat{\boldsymbol \varphi}^{\pl}_{\epsilon, \nu} (\mathbf{X}-\mathbf{c}_{\nu-1})+\mathbf{c}_{\nu} \quad \text{on } \p^{\pl,-1}_{\epsilon, \nu}(D_{\nu}),\\ \label{Eq:Derivative}
&D\boldsymbol \varphi^{\pl}_{\epsilon, \nu}(\mathbf{X})= D\bar{\boldsymbol \varphi}^{\pl}_{\epsilon, \nu} (\mathbf{X}-\mathbf{c}_{\nu-1})=D\hat{\boldsymbol \varphi}^{\pl}_{\epsilon, \nu} (\mathbf{X}-\mathbf{c}_{\nu-1}) \quad \text{on }\p^{\pl,-1}_{\epsilon, \nu}(D_{\nu}),\\
&  \boldsymbol \varphi^\pl_{\epsilon, \nu-1}\circ  \boldsymbol \varphi^\pl_{\epsilon, \nu-2}\circ... \circ \boldsymbol \varphi^\pl_{\epsilon, 1}=  \bar{\boldsymbol \varphi}^\pl_{\epsilon, \nu-1}\circ \hat{\boldsymbol \varphi}^\pl_{\epsilon, \nu-2} \circ... \circ  \hat{ \boldsymbol \varphi}^\pl_{\epsilon, 1}, \quad \text{on } E_{\nu},
\end{align}
where $D_{\nu}=B_{L\epsilon/(2+2n')^{N_s-\nu}} \left((\p^\pl_{\epsilon,\nu}\circ ... \circ \p^\pl_{\epsilon,1})(\mathbf{Y})\right)$, $E_{\nu}=\left(\p^\pl_{\epsilon,\nu}\circ .. \circ 
\p^\pl_{\epsilon,1}\right)^{-1}\left(D_{\nu} \right)$ and $\mathbf{c}_{\nu}$ are constant vectors. 
By (\ref{eq1epsl}), $D_{\nu}\subset
\p^{\pl,-1}_{\nu+1}(D_{\nu+1})$.
Note that for $\nu=2$, the last expression gives $\boldsymbol \varphi^{\pl}_{\epsilon, 1}(\mathbf{X})= \bar{\boldsymbol \varphi}^{\pl}_{\epsilon, 1} (\mathbf{X}) $, and therefore $\mathbf{c}_{0}=0$. 

The composition of two consecutive deformation mappings then gives
\begin{equation}\label{eqiteratcnu1}
\boldsymbol \varphi^{\pl}_{\epsilon, \nu}(\boldsymbol \varphi^{\pl}_{\epsilon, \nu-1}(\mathbf{X})) = \hat{\boldsymbol \varphi}^{\pl}_{\epsilon, \nu}\left(\boldsymbol \varphi^{\pl}_{\epsilon, \nu-1}(\mathbf{X})-\mathbf{c}_{\nu-1}\right) +\mathbf{c}_\nu= \hat{\boldsymbol \varphi}^{\pl}_{\epsilon, \nu}\left(\hat{\boldsymbol \varphi}^{\pl}_{\epsilon, \nu-1}(\mathbf{X}-\mathbf{c}_{\nu-2})\right) +\mathbf{c}_\nu
\end{equation}
and the composition of the first $\nu$ mappings yields
\begin{equation}\label{eqiteratcnu2}
\boldsymbol \varphi^{\pl}_{\epsilon, \nu}\circ \boldsymbol \varphi^{\pl}_{\epsilon, \nu-1} \circ ... \circ \boldsymbol \varphi^{\pl}_{\epsilon, 1} = \hat{\boldsymbol \varphi}^{\pl}_{\epsilon, \nu}\circ \hat{\boldsymbol \varphi}^{\pl}_{\epsilon, \nu-1} \circ ... \circ \hat{\boldsymbol \varphi}^{\pl}_{\epsilon, 1}(\mathbf{X}-\mathbf{c}_0) +\mathbf{c}_{\nu}\quad 
\text{on $E_{1}$},
\end{equation}
where $\mathbf{c}_0=0$. Thus, from \eqref{Eq:Derivative},
\begin{equation}
\begin{split}
D\boldsymbol \varphi^{\pl}_{\epsilon, \nu}\left(\boldsymbol \varphi^{\pl}_{\epsilon, \nu-1} \circ ...  \circ \boldsymbol \varphi^{\pl}_{\epsilon, 1} \right)&=D\hat{\boldsymbol \varphi}^{\pl}_{\epsilon, \nu} \left(\boldsymbol \varphi^{\pl}_{\epsilon, \nu-1} \circ ... \circ \boldsymbol \varphi^{\pl}_{\epsilon, 1} -\mathbf{c}_{\nu-1}\right) =D\hat{\boldsymbol \varphi}^{\pl}_{\epsilon, \nu} \left( \hat{\boldsymbol \varphi}^{\pl}_{\epsilon, \nu-1} \circ... \circ \hat{\boldsymbol \varphi}^{\pl}_{\epsilon, 1} \right) \quad \text{on } E_{1}.
\end{split}
\end{equation}

We note that $B_{\epsilon}(\mathbf{Y}) \subset E_1$, and $E_\nu \subset E_{\nu+1}$, and thus Eqs.~\eqref{Eq:ResultDecomposition} hold for  all $\nu$.
\end{proof}

\begin{lemma} \label{Lemma:SizeControl}
Let $\mathcal{J}$ be the jump set of $\pe \in X_{\epsilon}$. Then, the following results hold:
\begin{enumerate}
\item[(i)] There exists  $A>0$ such that for any ball $B_r(\mathbf X) \subset \mathbb{R}^2$ of radius $r>\epsilon$
\begin{equation} \label{Eq:NoClustering}
|  \mathcal{J}\cap  B_r(\mathbf X)| \leq \frac{A r^2}{\epsilon}.
\end{equation}
\item[(ii)] There exists $C^*>0$ depending only on $\Omega$ such that
\begin{equation} \label{Eq:Scaling_b}
|\mathcal{J}\cap \Omega| \leq \frac{C^*}{\epsilon}.
\end{equation}
\end{enumerate}
\end{lemma}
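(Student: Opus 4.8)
Here is how I would prove Lemma \ref{Lemma:SizeControl}.

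The plan is to bound $|\mathcal J\cap B_r(\mathbf X)|$ by splitting it into the part lying inside the dislocation cores, controlled by a packing argument, and the part away from the cores, which I cover by small convex cells on each of which the compatible decomposition (C2) is available. For the cores: by (C1) the jump set inside each $B_{c\epsilon}(\mathbf X_k)$ consists of at most $N_s$ straight segments joining $\mathbf X_k$ to $\partial B_{c\epsilon}(\mathbf X_k)$, so $|\mathcal J\cap B_{c\epsilon}(\mathbf X_k)|\le N_s c\epsilon$; and since the $\mathbf X_k$ are $3c\epsilon$-separated, the disjoint balls $B_{3c\epsilon/2}(\mathbf X_k)$ show by comparison of areas that at most $\kappa\, r^2/\epsilon^2$ of the cores meet $B_r(\mathbf X)$ once $r>\epsilon$, with $\kappa$ depending only on $c$. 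So the cores contribute at most $\kappa N_s c\, r^2/\epsilon$.

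For the part away from the cores I first establish a \emph{cell estimate}: if $\omega\subset\mathbb R^2\setminus\bigcup_k B_{m\epsilon}(\mathbf X_k)$ is convex with $\operatorname{diam}\omega=d$, then $|\mathcal J\cap\omega\cap\Omega|\le C(N_s,n')\max\{d,\epsilon\}^2/\epsilon$. To prove it, use (C2) to write $\pe=\pe^\el\circ\pe^\pl$ on $\omega\cap\Omega$ with $\pe^\el$ Lipschitz and injective and $\pe^\pl=\boldsymbol\varphi^\pl_{\epsilon,N_s}\circ\dots\circ\boldsymbol\varphi^\pl_{\epsilon,1}$; exactly as in the proof of Lemma \ref{Thm:Fp_uniqueness}, injectivity of $\pe^\el$ gives $\mathcal J\cap\omega=\mathcal J_{\pe^\pl}\cap\omega$ up to $\mathcal H^1$-null sets. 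Setting $\psi_\mu=\boldsymbol\varphi^\pl_{\epsilon,\mu}\circ\dots\circ\boldsymbol\varphi^\pl_{\epsilon,1}$ with $\psi_0=\mathrm{id}$, the chain rule for compositions (Theorem 3.96 of \cite{AmbrosioFuscoPallara2000}) gives $\mathcal J_{\pe^\pl}\subseteq\bigcup_{\nu=1}^{N_s}\psi_{\nu-1}^{-1}(\mathcal J_{\boldsymbol\varphi^\pl_{\epsilon,\nu}})$. Each $\boldsymbol\varphi^\pl_{\epsilon,\nu}$, hence each $\psi_\mu$ and each $\psi_\mu^{-1}$, is a translation off its one-dimensional jump set; in particular $\psi_{\nu-1}^{-1}$ is a bijection that does not increase $\mathcal H^1$. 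By \eqref{eqcarphipilip}, applied to the partial composition $\psi_{\nu-1}$ (which has the same structure as $\pe^\pl$ with at most $N_s$ factors), $\psi_{\nu-1}(\omega)$ lies in a ball of radius $\le L\max\{d,\epsilon\}$; and since $\mathcal J_{\boldsymbol\varphi^\pl_{\epsilon,\nu}}$ is a family of parallel lines at mutual distance $\ge\epsilon$, its intersection with a ball of radius $\rho\ge\epsilon$ has $\mathcal H^1$-measure at most $6\rho^2/\epsilon$. Applying $\psi_{\nu-1}^{-1}$ and summing over the $N_s$ slip systems gives the cell estimate.

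To finish (i), cover $B_r(\mathbf X)$ by $O(r^2/\epsilon^2)$ open balls of radius $\epsilon$. Any such ball has diameter $2\epsilon<2L^2\epsilon=(c-m)\epsilon$, so a ball meeting some $B_{m\epsilon}(\mathbf X_k)$ cannot contain a point outside $B_{c\epsilon}(\mathbf X_k)$; hence every point of $\mathcal J\cap B_r(\mathbf X)$ lies either in $\bigcup_k B_{c\epsilon}(\mathbf X_k)$ or in one of the covering balls that is disjoint from all $B_{m\epsilon}(\mathbf X_k)$, to which the cell estimate applies with $d=2\epsilon$. Adding the core contribution yields $|\mathcal J\cap B_r(\mathbf X)|\le A\,r^2/\epsilon$ with $A=A(N_s,n')$, which is (i). Part (ii) is then immediate: pick $R$ with $\Omega\subset B_R(\mathbf X_0)$ and apply (i) with $r=R$ for $\epsilon<R$ (the case $\epsilon\ge R$ being trivial after enlarging the constant), obtaining $|\mathcal J\cap\Omega|\le AR^2/\epsilon=:C^*/\epsilon$ with $C^*$ depending only on $\Omega$ and the fixed global constants $N_s,n',C$. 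The main obstacle is the cell estimate: because each slip is pulled back through the previous ones, $\mathcal J_{\pe^\pl}$ is not a single family of well-separated parallel lines but acquires kinks, so one must argue that this does not spoil the $r^2/\epsilon$ scaling. What rescues the argument is that every partial composition $\psi_\mu$ is a piecewise translation — it neither increases $\mathcal H^1$ nor destroys the $\epsilon$-separation of the single-slip jump sets in the intermediate configurations — combined with the almost-Lipschitz bound of Lemma \ref{Lemma:up_inverse_Lipschitz}, which confines the image of a small cell to a controllably small ball where the $\epsilon$-separation caps the length; the subordinate but necessary point is the elementary observation that $\epsilon$-scale cells cannot straddle the core annuli $B_{c\epsilon}(\mathbf X_k)\setminus B_{m\epsilon}(\mathbf X_k)$, which lets (C2) be used cell by cell without a delicate partition of the non-convex region $\Omega\setminus\bigcup_k B_{m\epsilon}(\mathbf X_k)$.
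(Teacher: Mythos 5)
Your proof is correct and follows essentially the same route as the paper's: the cores are handled via (C1) and the $3c\epsilon$-separation, and away from the cores the jump length is bounded cell by cell using the (C2) decomposition, the piecewise-translation structure of the partial compositions, the almost-Lipschitz bound of Lemma \ref{Lemma:up_inverse_Lipschitz} and the $\epsilon$-separation of the parallel slip lines, giving $O(\epsilon)$ per $\epsilon$-cell and hence $Ar^2/\epsilon$ after counting $O(r^2/\epsilon^2)$ cells, with (ii) obtained from one large ball covering $\Omega$. The only cosmetic difference is that the paper centers its covering balls outside the outer cores so that the dilated balls $B_{2L\epsilon}(\mathbf Y_l)$ avoid the inner cores, whereas you dispose of cells meeting an inner core by observing that, since $c-m=2L^2>2$, such a cell lies entirely inside $B_{c\epsilon}(\mathbf X_k)$ and is already counted in the core contribution.
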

We recall that by definition $\mathcal J\subset\Omega$.
\begin{proof}
Consider first a ball $B_{\epsilon}\left(\mathbf{Y} \right)$ such that
 $B_{2L\epsilon}\left(\mathbf{Y} \right)\cap \left(\cup_k B_{m\epsilon}(\mathbf{X}_k) \right)=\emptyset$. In this ball, the representation 
 of $\pe^\pl$ from (C2) holds. Since 
 the $\p^\pl_{\epsilon,\nu}$ are piecewise translations and 
 the normals $N_\nu$ are all different, the jump set of $\pe^\pl$ is the union of the counterimages of the jump sets of 
the individual $\p^\pl_{\epsilon,\nu}$, 
\begin{equation*}
 \mathcal{J}_{\pe^\pl}=
 \mathcal{J}_{\p^\pl_{\epsilon,1}}
 \cup 
 \p^{\pl,-1}_{\epsilon,1}( \mathcal{J}_{\p^\pl_{\epsilon,2}}) 
\cup
 \dots \cup
 (\p^{\pl,-1}_{\epsilon,1}\circ \dots \circ\p^{\pl,-1}_{\epsilon,N_s-1}) ( \mathcal{J}_{\p^\pl_{\epsilon,N_s}})
 \,.
\end{equation*}
For every $\nu$, let $\mathbf Y_\nu=(\p^\pl_{\epsilon,\nu-1}\circ \dots \circ \p^\pl_{\epsilon,1})(\mathbf Y)$. Since the jump 
set of $\p^\pl_{\epsilon,\nu}$ consists of parallel lines with distance at least $\epsilon$, we have
\begin{equation*}
 | \mathcal{J}_{\p^\pl_{\epsilon,\nu}} \cap B_{L\epsilon}(\mathbf Y_\nu)| \le (2L+1)2L\epsilon 
\end{equation*}
and, recalling that $(\p^{\pl}_{\epsilon,\nu-1}\circ \dots \circ\p^{\pl}_{\epsilon,1})(B_\epsilon(\mathbf Y)) \subset
B_{L\epsilon}(\mathbf Y_\nu)$, 
\begin{equation*}
 | (\p^{\pl,-1}_{\epsilon,1}\circ \dots \circ\p^{\pl,-1}_{\epsilon,\nu-1}) (\mathcal{J}_{\p^\pl_{\epsilon,\nu}} ) \cap B_{\epsilon}(\mathbf Y)| \le (2L+1)2L\epsilon \,.
\end{equation*}
Therefore $|\mathcal{J}_{\pe^\pl}\cap B_\epsilon (\mathbf{Y})|\le N_s(2L+1)2L\epsilon $.

Consider now a generic ball  $B_r(\mathbf X)$.
We cover $B_r(\mathbf X)\setminus \left( \cup_k B_{c\epsilon}(\mathbf{X}_k) \right)$ with balls $\omega_l=B_{\epsilon}(\mathbf{Y}_l)$ with finite overlap such that $B_{2L\epsilon}(\mathbf{Y}_l)\cap \left( \cup_k B_{m\epsilon}(\mathbf{X}_k) \right) =\emptyset$. The number of those sets $\omega_l$ is bounded by $A_1 r^2/\epsilon^2$ for some constant $A_1$. Similarly, 
by (\ref{Eq:DislocationsSeparation})
the number of cores intersecting $B_r$ is bounded by $A_2 r^2/\epsilon^2$ for some constant $A_2$, and the total length of the jump set in a core is not larger than $c\epsilon N_s$. Then,
\begin{equation}
\begin{split}
|\mathcal{J}\cap B_r| &\leq \sum_l |\mathcal{J}\cap \omega_l| + \sum_k |\mathcal{J}\cap \left(B_r\cap B_{c\epsilon}(\mathbf{X}_k)\right)| \\
&\leq A_1 \frac{r^2}{\epsilon^2} N_s \left(2L+1\right) 2L\epsilon + A_2 \frac{r^2}{\epsilon^2} N_s c\epsilon \leq A \frac{r^2}{\epsilon},
\end{split}
\end{equation}
which proves (i)

For assertion (ii), since $\Omega$ is a bounded domain, it suffices to cover it with balls of the two types used in (i), or 
to choose in (i) a ball which covers $\Omega$.
\end{proof}

The next Lemma will be used to show that the elastic energy associated to cores vanishes in the continuum limit $\epsilon \rightarrow 0$.
\begin{lemma}\label{Eq:ElasticEnergyCore}
Let $\boldsymbol \varphi^\pc_{\epsilon}$ be as defined in \eqref{Eq:u_dislo}, and $N_{\epsilon}$ and $B_{c\epsilon}(\mathbf{X}_k)$ as defined in (C1) and (C3). Then  there exists $C_c>0$ such that
\begin{equation}
\sum_k^{N_{\epsilon}} \int_{B_{c \epsilon (\mathbf{X}_k)}} |\nabla \boldsymbol \varphi^\pc_{\epsilon}|^2 dX \leq C_c \epsilon.
\end{equation}
\end{lemma}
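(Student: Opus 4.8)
The plan is to reduce the assertion to the uniform pointwise bound on $\nabla \boldsymbol \varphi^\pc_{\epsilon}$ that is already recorded in \eqref{eqnablapedid}, combined with the scaling \eqref{Eq:Scaling} on the number of dislocation cores. Nothing deep is involved: since each core is a ball of radius $c\epsilon$, there are at most $C/\epsilon$ of them, and the integrand is bounded, the total integral is of order $(C/\epsilon)\cdot\epsilon^2 = \epsilon$.

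First I would recall why $|\nabla \boldsymbol \varphi^\pc_{\epsilon}-\mathbf I|\le 2$ holds $\mathcal L^2$-a.e.\ (this is \eqref{eqnablapedid}): from the explicit formula \eqref{Eq:u_dislo}, the only potentially singular contribution is $\mathbf b^*_{\epsilon j}(\mathbf X-\mathbf X_k)\nabla\theta_j(\mathbf X-\mathbf X_k)$, but inside the inner core the radial slip vanishes linearly, $|\mathbf b^*_{\epsilon j}(\mathbf X)|\le (n'/m)|\mathbf X|$, which exactly compensates the $|\mathbf X|^{-1}$ growth of $\nabla\theta_j$, while $|\nabla \mathbf b^*_{\epsilon j}|\lesssim |\mathbf b_{\epsilon j}|/(m\epsilon)\le n'/m$ there; outside $B_{m\epsilon}$ the factor $\mathbf b^*_{\epsilon j}$ is constant along rays and the whole correction stays bounded. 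With $m=2N_sn'$ as in \eqref{eqassndknk} the sum over the at most $N_s$ slip lines in a core stays below $2$, so $|\nabla \boldsymbol \varphi^\pc_{\epsilon}|\le |\mathbf I|+2=\sqrt2+2$, hence $|\nabla \boldsymbol \varphi^\pc_{\epsilon}|^2\le (\sqrt2+2)^2=:C'$ a.e.

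Next I would integrate over a single core: since $\mathcal L^2(B_{c\epsilon}(\mathbf X_k))=\pi c^2\epsilon^2$,
\[
\int_{B_{c\epsilon}(\mathbf X_k)}|\nabla \boldsymbol \varphi^\pc_{\epsilon}|^2\,dX \le C'\pi c^2\epsilon^2,
\]
and then sum over the $N_\epsilon$ cores, using $N_\epsilon\le C/\epsilon$ from (C1):
\[
\sum_{k=1}^{N_\epsilon}\int_{B_{c\epsilon}(\mathbf X_k)}|\nabla \boldsymbol \varphi^\pc_{\epsilon}|^2\,dX \le N_\epsilon\,C'\pi c^2\epsilon^2 \le C'\pi c^2 C\,\epsilon,
\]
which is the claim with $C_c:=C'\pi c^2 C$. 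Since $c=m+2L^2$ depends only on the global constants $n',N_s$, so does $C_c$, as required.

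I do not expect any obstacle here. The only step needing a line or two is the uniform bound on $\nabla \boldsymbol \varphi^\pc_{\epsilon}$ near the core center, and that is already stated as \eqref{eqnablapedid}, so the argument is a direct computation. If a sharper estimate were ever wanted one could treat the annulus $B_{c\epsilon}(\mathbf X_k)\setminus B_{m\epsilon}(\mathbf X_k)$ separately, where $\nabla \boldsymbol \varphi^\pc_{\epsilon}-\mathbf I$ decays like $|\mathbf b_{\epsilon j}|/|\mathbf X-\mathbf X_k|$, but since the crude area bound already gives the optimal $\mathcal O(\epsilon)$ scaling, this refinement is unnecessary.
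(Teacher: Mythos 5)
Your argument is correct and is essentially the paper's own proof: both rest on the pointwise bound \eqref{eqnablapedid}, the area $\pi c^2\epsilon^2$ of each core, and the scaling $N_\epsilon\le C/\epsilon$ from \eqref{Eq:Scaling}. The extra discussion of why \eqref{eqnablapedid} holds is fine but not needed, since the paper records that bound in (C3).
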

\begin{proof}
From \eqref{eqnablapedid} and scaling  \eqref{Eq:Scaling}, the result is readily obtained
\begin{equation}
\sum_k^{N_{\epsilon}} \int_{B_{c \epsilon (\mathbf{X}_k)}} |\nabla \boldsymbol \varphi^\pc_{\epsilon}|^2 dX \leq 16 \sum_k^{N_{\epsilon}} |B_{c\epsilon}(\mathbf{X}_k)|  \leq C_c \epsilon\,.
\end{equation}
\end{proof}

In closing this Section we show that the decomposition we assumed for the plastic strain is generically admissible. This is not used in the current argument, but illustrates
the generality of our assumptions.
\begin{lemma}\label{Thm:Decomposition_TripleShear}
Consider an arbitrary $\Fp \in C^{\infty}\left(\Omega;\mathbb{R}^{2\times 2}\right)$ such that 
$\det \Fp=1$,
$F^\pl_{11}\neq 0$ and $F^\pl_{12}\neq F^\pl_{11}$. Then $\Fp$ admits a unique description as the composition of the following three sequential simple shear deformations
\begin{equation} \label{Eq:TripleShear}
\Fp=\left(\mathbf{I}-\gamma\ \mathbf{e}_2\otimes \mathbf{e}_1 \right)\left(\mathbf{I}+\eta\ \mathbf{e}_1\otimes \mathbf{e}_2 \right)\left(\mathbf{I}+\mu\ \mathbf{e}_3^{\bot}\otimes \mathbf{e}_3 \right)
\end{equation}
with $\mathbf{e}_3 = \sqrt{2}/2\left(\mathbf{e}_1+\mathbf{e}_2 \right)$. In particular, 
\begin{align} \label{Eq:mu}
&\mu = 2\frac{F^\pl_{11}-1}{F^\pl_{12}-F^\pl_{11}} \\
&\eta = 1+F^\pl_{12}-F^\pl_{11} \\ \label{Eq:gamma}
&\gamma = \frac{1+F^\pl_{21}-F^\pl_{22}}{F^\pl_{12}-F^\pl_{11}},
\end{align}
where $F^\pl_{11}$, $F^\pl_{12}$, $F^\pl_{21}$ and $F^\pl_{22}$ are the components of $\F^\pl$.
\end{lemma}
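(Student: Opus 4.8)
The plan is to reduce the assertion to a pointwise identity between $2\times2$ matrices and to solve the resulting scalar system explicitly; no compactness or analytic input is needed beyond elementary linear algebra, and the role of the $C^{\infty}$ hypothesis is only to transfer regularity to the coefficients at the end.

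First I would record a structural observation: each of the three factors in \eqref{Eq:TripleShear} has the form $\mathbf{I}+c\,\mathbf{a}\otimes\mathbf{b}$ with $\mathbf{a}\cdot\mathbf{b}=0$ (namely $\mathbf{a}=\mathbf{e}_2,\ \mathbf{b}=\mathbf{e}_1$; then $\mathbf{a}=\mathbf{e}_1,\ \mathbf{b}=\mathbf{e}_2$; then $\mathbf{a}=\mathbf{e}_3^{\bot},\ \mathbf{b}=\mathbf{e}_3$). By the matrix determinant lemma $\det(\mathbf{I}+c\,\mathbf{a}\otimes\mathbf{b})=1+c\,\mathbf{a}\cdot\mathbf{b}=1$, so any product of the three automatically has determinant one; this both shows that $\det\Fp=1$ is necessary and tells us that only three of the four scalar matching equations can be independent.

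Next, fixing $\mathbf{X}\in\Omega$, I would compute the right-hand side of \eqref{Eq:TripleShear} entrywise. Writing $\mathbf{e}_3^{\bot}\otimes\mathbf{e}_3=\frac{1}{2}(-\mathbf{e}_1+\mathbf{e}_2)\otimes(\mathbf{e}_1+\mathbf{e}_2)$ turns the third factor into an explicit matrix in $\mu$, and multiplying out produces a $2\times2$ matrix whose entries depend polynomially on $\mu,\eta$ and affinely on $\gamma$. Equating it to $\Fp$ and solving in the natural order: the $(1,1)$ and $(1,2)$ entries give $F^\pl_{11}=1+\frac{\mu}{2}(\eta-1)$ and $F^\pl_{12}=\eta+\frac{\mu}{2}(\eta-1)$, whose difference yields $\eta=1+F^\pl_{12}-F^\pl_{11}$ and then $\mu=2(F^\pl_{11}-1)/(F^\pl_{12}-F^\pl_{11})$, well-defined exactly because $F^\pl_{12}\neq F^\pl_{11}$; the $(2,1)$ and $(2,2)$ entries give $F^\pl_{21}=-\gamma F^\pl_{11}+\frac{\mu}{2}$ and $F^\pl_{22}=-\gamma F^\pl_{12}+1+\frac{\mu}{2}$, whose difference yields $\gamma=(1+F^\pl_{21}-F^\pl_{22})/(F^\pl_{12}-F^\pl_{11})$, which is \eqref{Eq:mu}--\eqref{Eq:gamma}. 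Uniqueness is immediate, since every step determined its unknown with no remaining freedom.

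The only point that really requires care --- and the closest thing to an obstacle --- is checking consistency, i.e.\ that the fourth scalar equation (say the $(2,1)$ entry, equivalently the $(2,2)$ entry) is automatically satisfied by the $\mu,\eta,\gamma$ just produced. Here I would use the determinant identity: the computed product and $\Fp$ both have determinant one, and by construction they already agree in the $(1,1)$ and $(1,2)$ entries and in the combination $F^\pl_{21}-F^\pl_{22}$; subtracting the two determinant relations and using $F^\pl_{11}\neq F^\pl_{12}$ (an alternative derivation of $\gamma$ directly from the $(2,1)$ entry would instead use $F^\pl_{11}\neq 0$, which is why that hypothesis appears) forces the remaining two entries to agree individually. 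Finally, since $F^\pl_{12}-F^\pl_{11}$ and $F^\pl_{11}$ are nowhere zero on $\Omega$, the closed-form expressions exhibit $\mu,\eta,\gamma$ as $C^{\infty}$ functions on $\Omega$, which completes the proof.
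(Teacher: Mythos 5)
Your proposal is correct and follows essentially the same route as the paper: expand the triple product entrywise, read off $\eta$ and $\mu$ from the first-row equations, and obtain $\gamma$ from the second row together with $\det\Fp=1$. The only minor (and legitimate) variation is that you extract $\gamma$ from the difference of the $(2,1)$ and $(2,2)$ entries, needing only $F^\pl_{12}\neq F^\pl_{11}$, and then use the determinant to force the remaining entry to match, whereas the paper solves the $(2,1)$ entry directly (which is where $F^\pl_{11}\neq 0$ enters) and uses $\det\Fp=1$ to arrive at the stated formula for $\gamma$.
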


\begin{proof}
Expanding the product in \eqref{Eq:TripleShear} one obtains
\[ \F^\pl = \left( \begin{array}{cc}
F^\pl_{11} & F^\pl_{12} \\
F^\pl_{21} & F^\pl_{22}  \end{array} \right) =\left( \begin{array}{cc}
1-\frac{\mu}{2}+\eta \frac{\mu}{2} & -\frac{\mu}{2}+\eta+\eta\frac{\mu}{2} \\
-\gamma +\gamma\frac{\mu}{2}-\frac{\mu}{2} \gamma \eta + \frac{\mu}{2} &\gamma \frac{\mu}{2} -\gamma \eta -\gamma \eta \frac{\mu}{2} + 1 + \frac{\mu}{2}  \end{array} \right).
\] 

From the expressions of $F^\pl_{11}= 1 + \frac{\mu}{2}(\eta -1)$ and  $F^\pl_{12}= \eta + \frac{\mu}{2}(\eta -1)$, 
one obtains $\eta = 1+F^\pl_{12}-F^\pl_{11} $. If $F^\pl_{12}\neq F^\pl_{11}$, then, plugging the result into the equation for $F^\pl_{11} $ we get $F^\pl_{11}=1+\frac{\mu}{2} \left(F^\pl_{12}-F^\pl_{11} \right)$, and so \eqref{Eq:mu}. Finally, the expression for $F^\pl_{21}$ simplifies to 
\begin{equation}
F^\pl_{21} = \frac{\mu}{2}+\gamma [ -\frac{\mu}{2}(\eta -1)-1] =  \frac{\mu}{2}-\gamma F^\pl_{11}.
\end{equation}
Using the previous result for $\mu$, and the fact that $\det \F^\pl=1$, one obtains
\begin{equation}
\gamma F^\pl_{11}= \frac{F^\pl_{11}-F^\pl_{11}F^\pl_{22}+F^\pl_{11}F^\pl_{21}}{F^\pl_{12}-F^\pl_{11}},
\end{equation}
which simplifies to \eqref{Eq:gamma} if $F^\pl_{11}\neq 0$.
\end{proof}
We note that the choice of the orientation of the three slip systems was arbitrary and different orientations may be chosen. The conditions for obtaining a unique representation under such choice will then differ from that of this lemma.

\section{Main results.} \label{Sec:MainResults}
\begin{proposition} \label{Thm::General_FeFp}
Let $\boldsymbol \varphi_{\epsilon} \in X_{\epsilon}$ as defined in \eqref{Def:Xe},  $\F^\el_{\epsilon}$ and $\F^\pl_{\epsilon} $  as defined in \eqref{Eq:Fe_epsilon} and \eqref{Eq:Fp_epsilon} respectively and $L$ as defined in Lemma \ref{Lemma:up_inverse_Lipschitz}. Assume $\sup_{\epsilon} E_{\epsilon} (\pe)< \infty$. Then, there is $C_1$ depending on $\sup_{\epsilon} E_{\epsilon} (\pe)$, such that for any set K compactly contained in $\Omega$ and for any $\epsilon \in (0,1)$ satisfying $\epsilon < \operatorname{dist}(\partial \Omega, K)/(2L^2)$ 
\begin{equation}
|D\boldsymbol \varphi_{\epsilon}-\F^\el_{\epsilon}\F^\pl_{\epsilon}|(K) \leq C_1 \epsilon^{1/2}.
\end{equation}
 
\end{proposition}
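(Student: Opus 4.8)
The plan is to reduce the estimate to the jump set, and then treat the dislocation cores and the compatible region separately. Since $\epsilon<\operatorname{dist}(\partial\Omega,K)/(2L^2)$, every $\mathbf{Y}\in K$ satisfies $B_{2L^2\epsilon}(\mathbf{Y})\subset\Omega$, so $K\subset\Omega'$ and the local representations (C2)--(C3) together with Lemmas~\ref{Thm:Fp_uniqueness}--\ref{Lemma:UniqueDecompositionVarphip} are available on balls centred in $K$. On $K$ the plastic strain is the one of \eqref{Eq:Fp_epsilon} with its full singular part; because $\nabla\boldsymbol\varphi^\pl_\epsilon=\mathbf{I}$ and $\nabla\boldsymbol\varphi^\pc_\epsilon$ enters $D\boldsymbol\varphi_\epsilon$ only through $\F^\el_\epsilon$, the absolutely continuous parts of $D\boldsymbol\varphi_\epsilon$ and of $\F^\el_\epsilon\F^\pl_\epsilon$ coincide, so $D\boldsymbol\varphi_\epsilon-\F^\el_\epsilon\F^\pl_\epsilon$ is concentrated on $\mathcal{J}\cap K$ and equals
\[
\sum_j\Bigl(\llbracket\boldsymbol\varphi_\epsilon\rrbracket-\tfrac12\bigl(\F^{\el+}_\epsilon+\F^{\el-}_\epsilon\bigr)\mathbf{b}_{\epsilon j}\Bigr)\otimes\mathbf{N}_j\ \mathcal{H}^1\lfloor_{\mathcal{J}_j\cap K}
\]
with the traces of \eqref{Eq:Fe_J}; it therefore suffices to bound $\int_{\mathcal{J}\cap K}|\llbracket\boldsymbol\varphi_\epsilon\rrbracket-\tfrac12(\F^{\el+}_\epsilon+\F^{\el-}_\epsilon)\mathbf{b}_{\epsilon j}|\,d\mathcal{H}^1$.

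Next I would derive a pointwise bound for the integrand. On a ball where (C2) holds, for $\mathbf{X}\in\mathcal{J}_j$ the traces $\boldsymbol\varphi^{\pl\pm}_\epsilon(\mathbf{X})$ differ by $\mathbf{b}_{\epsilon j}$, so writing $\tilde{\mathbf{X}}=\boldsymbol\varphi^{\pl-}_\epsilon(\mathbf{X})$ and using that $\boldsymbol\varphi^\el_\epsilon$ is Lipschitz, $\llbracket\boldsymbol\varphi_\epsilon\rrbracket(\mathbf{X})=\boldsymbol\varphi^\el_\epsilon(\tilde{\mathbf{X}}+\mathbf{b}_{\epsilon j})-\boldsymbol\varphi^\el_\epsilon(\tilde{\mathbf{X}})=\int_0^1\nabla\boldsymbol\varphi^\el_\epsilon(\tilde{\mathbf{X}}+s\mathbf{b}_{\epsilon j})\mathbf{b}_{\epsilon j}\,ds$; subtracting $\tfrac12(\F^{\el+}_\epsilon+\F^{\el-}_\epsilon)\mathbf{b}_{\epsilon j}$ and using $|\mathbf{b}_{\epsilon j}|\le n\epsilon$ (Lemma~\ref{Lemma:up_inverse_Lipschitz}(iv)) bounds the integrand by $n\epsilon$ times the sum of $\int_0^1|\nabla\boldsymbol\varphi^\el_\epsilon(\tilde{\mathbf{X}}+s\mathbf{b}_{\epsilon j})-\F^{\el-}_\epsilon(\mathbf{X})|\,ds$ and $\tfrac12|\F^{\el+}_\epsilon(\mathbf{X})-\F^{\el-}_\epsilon(\mathbf{X})|$. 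Inside a core, $\boldsymbol\varphi_\epsilon=\boldsymbol\varphi^\el_\epsilon\circ\boldsymbol\varphi^\pc_\epsilon$ with the explicit $\boldsymbol\varphi^\pc_\epsilon$ of (C3), whose jump is $\mathbf{b}^*_{\epsilon j}=\mathbf{b}_{\epsilon j}$ outside $B_{m\epsilon}(\mathbf{X}_k)$; the same estimate applies, plus an extra term $|\F^\el_\epsilon(\mathbf{X})|\,|\mathbf{b}_{\epsilon j}-\mathbf{b}^*_{\epsilon j}|\le n'\epsilon\,|\F^\el_\epsilon(\mathbf{X})|$ supported on $\mathcal{J}\cap\bigcup_kB_{m\epsilon}(\mathbf{X}_k)$.

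Finally I would integrate over $\mathcal{J}\cap K$. The jump term of $\F^\el_\epsilon$ integrates to $|D^j\F^\el_\epsilon|(\Omega)\le|D\F^\el_\epsilon|(\Omega)\le E_\epsilon/\beta$ (the $\beta$-term of \eqref{Eq:ElasticEnergy}), so multiplied by $n\epsilon$ it contributes $O(\epsilon)$. For the oscillation term I would cover $\mathcal{J}\cap K$ away from the cores by the finitely overlapping $\epsilon$-balls of the proof of Lemma~\ref{Lemma:SizeControl} (on each of which $\mathcal{H}^1(\mathcal{J})\lesssim\epsilon$), pass to the intermediate configuration by the block translations of (C2) (which have unit Jacobian, so $L^1$- and total-variation norms of $\F^\el_\epsilon=\nabla\boldsymbol\varphi^\el_\epsilon$ transfer and $\mathcal{H}^1$ on $\mathcal{J}_j$ is preserved), and bound $\int_0^1|\nabla\boldsymbol\varphi^\el_\epsilon(\tilde{\mathbf{X}}+s\mathbf{b}_{\epsilon j})-\F^{\el-}_\epsilon|\,ds$ integrated over $\mathcal{J}_j$ by $|D\F^\el_\epsilon|$ of a scale-$\epsilon$ tube about the segment; finite overlap of these tubes, Cauchy--Schwarz against $\mathcal{H}^1(\mathcal{J}\cap K)\le C^*/\epsilon$ (Lemma~\ref{Lemma:SizeControl}), the prefactor $n\epsilon$, and the energy bounds $\|\F^\el_\epsilon\|_{L^2(\Omega)}^2\lesssim E_\epsilon+1$ (quadratic growth \eqref{Eq:We_growth}) and $|D\F^\el_\epsilon|(\Omega)\le E_\epsilon/\beta$ then produce the rate $C\epsilon^{1/2}$. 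The core term $n'\epsilon\int_{\mathcal{J}\cap\bigcup_kB_{c\epsilon}(\mathbf{X}_k)}|\F^\el_\epsilon|\,d\mathcal{H}^1$ is handled by a trace estimate on each core ball together with Lemma~\ref{Eq:ElasticEnergyCore}, the bound $|\bigcup_kB_{c\epsilon}(\mathbf{X}_k)|\le CN_\epsilon\epsilon^2\le C\epsilon$ from \eqref{Eq:Scaling}, and $\|\F^\el_\epsilon\|_{L^1(\bigcup_kB_{c\epsilon}(\mathbf{X}_k))}\le\|\F^\el_\epsilon\|_{L^2(\Omega)}\,|\bigcup_kB_{c\epsilon}(\mathbf{X}_k)|^{1/2}\lesssim\epsilon^{1/2}$, which again gives $O(\epsilon^{1/2})$. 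Summing the three contributions yields the claim, with $C_1$ depending only on $\sup_\epsilon E_\epsilon(\boldsymbol\varphi_\epsilon)$.

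The hard part will be the trace/slicing estimate for the oscillation term: $\nabla\boldsymbol\varphi^\el_\epsilon$ is only $BV$ and is in general discontinuous across $\mathcal{J}$, the segments $\mathcal{J}_j$ may lie at arbitrary angles to the Burgers vectors $\mathbf{b}_{\epsilon j}$ and to the jump set of $\nabla\boldsymbol\varphi^\el_\epsilon$ itself, and every constant must remain uniform in $\epsilon$ while the numbers of dislocations and of slip segments diverge. Making this rigorous requires a careful bookkeeping on a scale-$\epsilon$ covering of $\mathcal{J}$, exploiting the geometric constraints (C1)--(C3), Lemma~\ref{Lemma:up_inverse_Lipschitz} and Lemma~\ref{Lemma:SizeControl}, together with the unit-Jacobian change of variables between the reference and intermediate configurations.
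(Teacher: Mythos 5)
Away from the dislocation cores your plan is essentially the paper's own argument for Proposition \ref{Thm::General_FeFp}: the difference measure is carried by $\mathcal{J}$, the jump $\llbracket \boldsymbol \varphi_{\epsilon}\rrbracket$ is written as a line integral of $\F^\el_{\epsilon}$ over the pulled-back Burgers segment (a finite union of straight segments precisely because $\pe^\pl$ is a piecewise translation), and the oscillation is absorbed by the Poincar\'e/trace estimate \eqref{Eq:PoincareSegmentsBall} on finitely overlapping balls of radius of order $\epsilon$; comparing with $\tfrac12(\F^{\el+}_{\epsilon}+\F^{\el-}_{\epsilon})$ instead of the ball average used in step (S2) changes nothing of substance, since you must route the comparison through a ball average anyway. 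Note, though, that this region contributes $O(n\epsilon\,|D\F^\el_{\epsilon}|(\Omega))=O(\epsilon)$: your invocation of Cauchy--Schwarz against $\mathcal{H}^1(\mathcal{J}\cap K)\le C^*/\epsilon$ is neither needed nor usable there, because $\F^\el_{\epsilon}$ carries no $\epsilon$-uniform $L^2$ control on the one-dimensional set $\mathcal{J}$; the rate $\epsilon^{1/2}$ comes exclusively from the cores, as in your final estimate $\|\F^\el_{\epsilon}\|_{L^1(\cup_k B_{c\epsilon}(\mathbf{X}_k))}\lesssim \epsilon^{1/2}$.

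The genuine gap is the core region. You assert that ``the same estimate applies'' on $\mathcal{J}\cap B_{c\epsilon}(\mathbf{X}_k)$ up to an extra term $n'\epsilon|\F^\el_{\epsilon}|$ supported in $B_{m\epsilon}(\mathbf{X}_k)$, but inside a core the only available decomposition is $\boldsymbol\varphi_\epsilon=\boldsymbol\varphi^\el_\epsilon\circ\boldsymbol\varphi^{\pc}_\epsilon$ from (C3), and $\boldsymbol\varphi^{\pc}_\epsilon$ is not a piecewise translation anywhere in $B_{c\epsilon}(\mathbf{X}_k)$: one only has $|\nabla\pe^{\pc}-\mathbf{I}|\le 2$, c.f.~\eqref{eqnablapedid}, and $\nabla\pe^{\pc}$ varies by order one over distances of order $\epsilon$. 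Consequently $\F^\el_\epsilon=(\nabla\boldsymbol\varphi^\el_\epsilon\circ\boldsymbol\varphi^{\pc}_\epsilon)\,\nabla\boldsymbol\varphi^{\pc}_\epsilon\neq\nabla\boldsymbol\varphi^\el_\epsilon\circ\boldsymbol\varphi^{\pc}_\epsilon$, so the chain-rule identity on which your oscillation estimate rests fails; the preimage of the Burgers segment is no longer a union of straight segments; $\boldsymbol\varphi^\el_\epsilon$ in (C3) is only Lipschitz with no $\epsilon$-uniform constant, so pointwise bounds on $\nabla\boldsymbol\varphi^\el_\epsilon$ are unavailable; and the (C2)-based identification of $\F^\pl_\epsilon$ with some $D\pe^\pl$ does not even hold in the annulus $B_{(m+L^2)\epsilon}(\mathbf{X}_k)\setminus B_{m\epsilon}(\mathbf{X}_k)$, c.f.~Lemma \ref{Thm:Fp_uniqueness}(ii) and the convention in (C4). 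The paper sidesteps all of this by never matching the two measures inside the cores: in (S1) it bounds $|D\boldsymbol\varphi_\epsilon|$ and $|\F^\el_\epsilon\F^\pl_\epsilon|$ separately on $\cup_k B_{c\epsilon}(\mathbf{X}_k)$, using that the jump of $\boldsymbol\varphi_\epsilon$ vanishes at the dislocation point, which yields $|\llbracket\boldsymbol\varphi_\epsilon\rrbracket(\mathbf{X})|\le|D\F^\el_\epsilon|(B_{c\epsilon}(\mathbf{X}_k))$ along the core slip lines, together with H\"older over the $N_\epsilon\le C/\epsilon$ cores, which is exactly where $\epsilon^{1/2}$ originates. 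Your core step needs either this separate-bound device or a substantially more careful analysis of $\boldsymbol\varphi^{\pc}_\epsilon$ (including the terms generated by $\nabla\pe^{\pc}$ and its variation); as written it does not go through.
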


\begin{proof}

We decompose the proof into three steps that concern, respectively, the dislocation cores, the compatible subdomains away from the dislocations and, finally, the total domain. These intermediate results will often use a combination of the Boundary Trace theorem (Theorem 3.87 of \cite{AmbrosioFuscoPallara2000}) and  the Poincar\'e inequality (Theorem 3.44 of \cite{AmbrosioFuscoPallara2000}) for $BV$ functions, that we make precise here.  Note that the constant in the Boundary Trace theorem depends on the domain. If the domain is a ball $B_1$ of radius 1, the theorem asserts $\int_{\partial B_1} |f| d\mathcal{H}^1 \leq C_{t1} \|f\|_{BV(B_1)} = C_{t1} \int_{B_1} |f| dX + C_{t1} |Df|(B_1)$ for some constant $C_{t1}>0$. If $B_1$ is replaced by $B_{\epsilon}$, where $B_{\epsilon}$ 
represents a ball of radius $\epsilon$, a scaling argument yields the inequality $\int_{\partial B_{\epsilon}} |f| d\mathcal{H}^1 \leq \tfrac{C_{t1}}\epsilon \int_{B_\epsilon} |f| dX + C_{t1} |Df|(B_\epsilon)
$ for the same constant $C_{t1}$ as before.
In particular, if $f\in BV(B_{\epsilon})$, then
\begin{equation} \label{Eq:Trace_Poincare}
\int_{\partial B_{\epsilon}} |f-\bar{f}| d\mathcal{H}^1 \leq C_{t1} |Df|(B_{\epsilon})+\frac{C_{t_1}}{\epsilon} \int_{B_{\epsilon}} |f-\bar{f}| dX \leq C_{tp} |Df|(B_{\epsilon}) ,
\end{equation}
where $\bar{f}=|B_{\epsilon}|^{-1}\int_{B_{\epsilon}}f dX$ and 
the second inequality follows from Poincar\'e's inequality, see also \cite{AmbrosioFuscoPallara2000}[Remark~3.45] for a corresponding scaling argument. 
The constants $C_{t1}$ and $C_{tp}$ are independent of $\epsilon$ and of course, independent of $f$. 
The above result is also true for $\mathcal{J}\cap B_{\epsilon}$, \begin{equation}  \label{Eq:PoincareSegmentsBall}
\int_{\mathcal{J}\cap B_{\epsilon}} |f-\bar{f}| d\mathcal{H}^1 \leq C_{tp'} |Df|(B_{\epsilon}) ,
\end{equation}
as long as $\mathcal{J}\cap B_{\epsilon}$ is the union of at most $M$ segments, where the constant $C_{tp'}$ depends only on $M$.
 This estimate holds for $f$ on the jump set being defined as either one of the two one-sided traces $f^+$ and $f^-$, or as the average $(f^++f^-)/2$.
To see this, let $C_{p1}$ be the 
constant entering the estimate corresponding to (\ref{Eq:Trace_Poincare})
for the half ball, which does not depend on the orientation of the half ball.
Given a segment $\gamma \subset B_{\epsilon}$, consider the diameter $\gamma_1$  parallel to $\gamma$, and let $\gamma'$ be the projection of 
$\gamma$ onto $\gamma_1$.
By the reverse triangle inequality and the fundamental theorem of calculus, working for notational simplicity in coordinates such that 
$\gamma$ is parallel to $e_1$,
\begin{equation}
\begin{split}
\Big |\int_{\gamma} & |f-\bar{f}| d\mathcal{H}^1- \int_{\gamma'} |f-\bar{f}| d\mathcal{H}^1\Big| =\Big| \int_{[{X_1},{X_2}]} \left(|f-\bar{f}|(X,Y_2)-|f-\bar{f}|(X,Y_1)\right) d\mathcal{H}^1\Big|\\
&\leq  \int_{[{X_1},{X_2}]}  \Big| \left(|f-\bar{f}|(X,Y_2)-|f-\bar{f}|(X,Y_1)\right) \Big| d\mathcal{H}^1 \leq \int_{[{X_1},{X_2}]}  \Big| \left((f-\bar{f})(X,Y_2)-(f-\bar{f})(X,Y_1)\right)\Big| d\mathcal{H}^1\\
&\leq  \int_{[{X_1},{X_2}]}  \Big|  \int_{[{Y_1},{Y_2}]} D_2(f-\bar{f})d\mathcal{H}^1 \Big| d\mathcal{H}^1 \leq \int_{D_{\gamma}} \Big| D_{2}(f-\bar{f}) \Big|dX 
\end{split}
\end{equation}
where $D_2$ denotes the distributional derivative in direction orthogonal to segment $\gamma$, and 
$(X_1,Y_2)$, $(X_2,Y_2)$ are the endpoints of $\gamma$, 
$(X_1,Y_1)$, $(X_2,Y_1)$  the endpoints of $\gamma'$, see  Fig.~\ref{Fig:TracePoincare}. Therefore, 
\begin{equation}
\int_{\gamma}|f-\bar{f}| d\mathcal{H}^1 \leq \int_{\gamma'}|f-\bar{f}| d\mathcal{H}^1  +  |Df | (D_{\gamma}) \leq \int_{\gamma_1}|f-\bar{f}| d\mathcal{H}^1  +  |Df | (B_\epsilon)\leq\left(C_{p1}+1  \right) |Df|  (B_{\epsilon}),
\end{equation}
so that for $M$ segments, \eqref{Eq:PoincareSegmentsBall} holds with $C_{tp'}=M(C_{p1}+1)$.

\begin{figure}
\begin{center}
    {\includegraphics[width=0.35\textwidth]{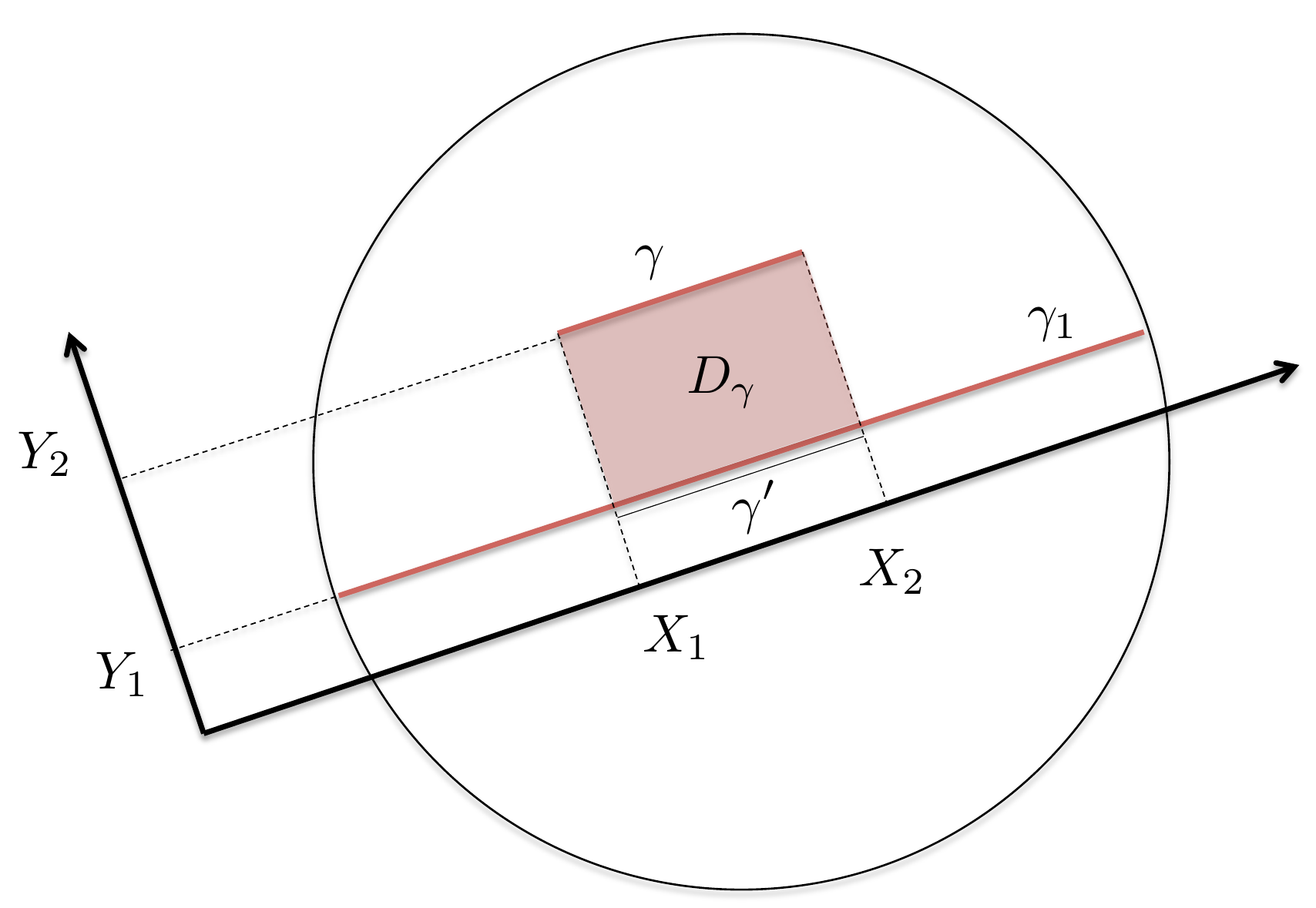}}
    \caption[]{Sketch of the strategy to bound the absolute value of a function over a segment in a ball, with the variation of the function over the ball.}
    \label{Fig:TracePoincare}
\end{center}
\end{figure}

\begin{enumerate}

\item[(S1)] In this first step we show
\begin{align} \label{Eq:FeFp_Step1}
&|D \boldsymbol \varphi_{\epsilon}| \left( \bigcup \limits_k^{N_{\epsilon}} B_{c \epsilon} (\mathbf{X}_k)\right) \leq C_2 \epsilon^{1/2}, \\ \label{Eq:FeFp_Step1b}
&|\F^\el_{\epsilon}\F^\pl_{\epsilon}| \left( \bigcup \limits_k^{N_{\epsilon}} B_{c \epsilon} (\mathbf{X}_k)\right) \leq C_3 \epsilon^{1/2}.
\end{align}
We start with (\ref{Eq:FeFp_Step1}).
The displacement jump at $\mathbf{X} \in \mathcal{J}_j\cap B_{c\epsilon}(\mathbf{X}_k)$ is
\begin{equation}\label{eqjumpphiepsX}
\llbracket \boldsymbol \varphi_{\epsilon}(\mathbf{X}) \rrbracket = \boldsymbol \varphi^+_{\epsilon}(\mathbf{X}) - \boldsymbol \varphi^-_{\epsilon}(\mathbf{X}) = \left( \boldsymbol \varphi^+_{\epsilon}(\mathbf{X}) - \boldsymbol \varphi^+_{\epsilon}\left(\mathbf{X}_k \right) \right) -\left( \boldsymbol \varphi^-_{\epsilon}(\mathbf{X}) - \boldsymbol \varphi^-_{\epsilon}\left(\mathbf{X}_k \right) \right)
\end{equation}
where we have used $\boldsymbol \varphi^+_{\epsilon}\left(\mathbf{X}_k \right)=\boldsymbol \varphi^-_{\epsilon}\left(\mathbf{X}_k \right)$, c.f.~(C3) in \eqref{Def:Xe}. 
We further note that between $\mathbf{X}_k$ and $\mathbf{X}$ either above or below $\mathcal{J}_j$ we do not cross any other jump set, 
 and therefore $D\boldsymbol \varphi_{\epsilon}  = \F^{\el}_{\epsilon}\mathcal L^2$ on both sides. Since 
 $\F^{\el}_{\epsilon}$ has a trace on each side of the jump set, we obtain 
\begin{equation}\label{gsfgsf}
 \boldsymbol \varphi^+_{\epsilon}(\mathbf{X}) - \boldsymbol \varphi^+_{\epsilon}\left(\mathbf{X}_k \right) 
 =\int_{\mathbf{X}_k}^{\mathbf{X}} \F^{\el+}_{\epsilon} \, d\mathbf{X}' \,,
\end{equation}
 and the same on the other side. 
  Inserting in (\ref{eqjumpphiepsX}),
 \begin{equation}
 \boldsymbol \varphi^+_{\epsilon}(\mathbf{X}) - \boldsymbol \varphi^-_{\epsilon}\left(\mathbf{X}\right) 
 =\int_{\mathbf{X}_k}^{\mathbf{X}} (\F^{\el+}_{\epsilon}-\F^{\el-}_{\epsilon}) \, d\mathbf{X}' \,,
\end{equation}
which implies $|\llbracket \boldsymbol \varphi_{\epsilon}(\mathbf{X}) \rrbracket|\le |D \F^{\el}_{\epsilon}|(B_{c \epsilon} (\mathbf{X}_k))$.
\begin{figure}
\begin{center}
    {\includegraphics[width=0.35\textwidth]{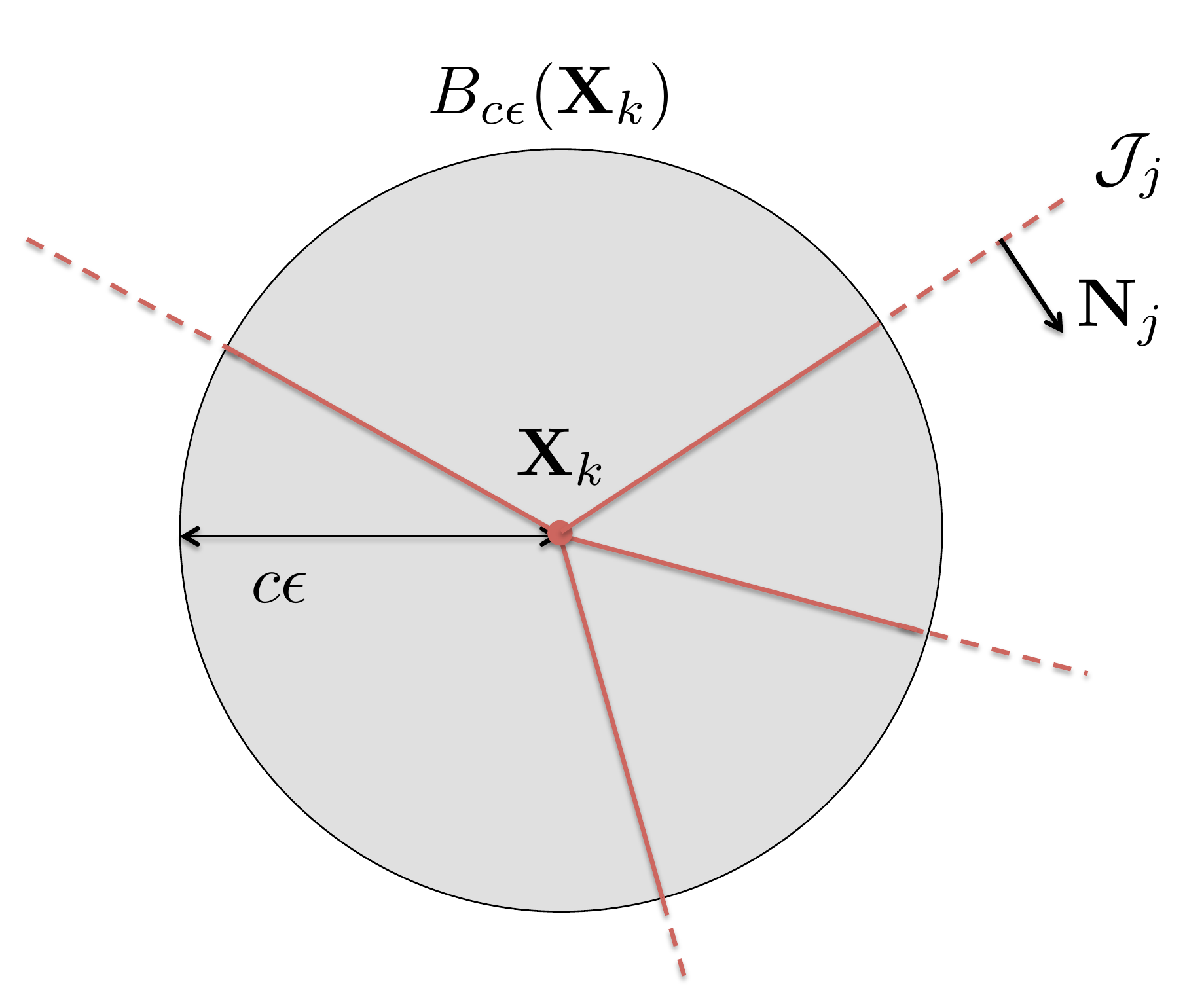}}
    \caption[]{Example of a dislocation core $B_{c\epsilon} (\mathbf{X}_k)$. The dislocation is centered at $\mathbf{X}_k$ and is the boundary of a finite number of slip lines $\mathcal{J}_j$, $1 \leq j \leq N_{dk}\leq N_s$.}
    \label{Fig:DislocationCore}
\end{center}
\end{figure}
Integrating over one of the slip lines inside the core,  which have length $c \epsilon$, and then summing over all of the slip lines, we obtain
\begin{equation} 
\sum_j^{N_{dk}} \int_{\mathcal{J}_j \cap B_{c\epsilon }(\mathbf{X}_k)} |\llbracket \boldsymbol \varphi_{\epsilon} \rrbracket| \, d \mathcal{H}^1  \leq  N_s c \epsilon |D\F^\el_{\epsilon}| \left( B_{c \epsilon }(\mathbf{X}_k) \right),  
\end{equation}
where we have used $N_{dk} \leq N_s$, c.f.~(C1) in \eqref{Def:Xe}. Summing now over all dislocation cores, we obtain 
\begin{equation} \label{Eq:eq3}
\sum_k^{N_{\epsilon}} \sum_j^{N_{dk}} \int_{\mathcal{J}_j\cap B_{c\epsilon }(\mathbf{X}_k)}|\llbracket \boldsymbol \varphi_{\epsilon} \rrbracket| \, d \mathcal{H}^1  \leq  N_s c  \epsilon
|D\F^\el_{\epsilon}| \left(\bigcup \limits_k^{N_{\epsilon}} B_{c \epsilon} (\mathbf{X}_k)\right)
\le C_4 \epsilon E_\epsilon\,.
\end{equation}
We turn to the absolutely continuous part of the gradient of $\boldsymbol \varphi_\epsilon$, which by H\"older's inequality can be estimated as
\begin{equation}\label{eqcontrfepsl1}
\sum_k^{N_{\epsilon}} \int_{B_{c \epsilon }(\mathbf{X}_k)} |\nabla \boldsymbol \varphi_{\epsilon}|\, dX=
\sum_k^{N_{\epsilon}} \int_{B_{c \epsilon }(\mathbf{X}_k)} |\F^\el_{\epsilon}|\, dX\le
\sqrt\pi c\epsilon N_\epsilon^{1/2} \left( \sum_k^{N_{\epsilon}}  \int_{B_{c \epsilon }(\mathbf{X}_k)} |\F^\el_{\epsilon}|^2\, dX\right)^{1/2}\,.
\end{equation}
Recalling the quadratic growth of the energy, c.f.~\eqref{Eq:We_growth}, and the definition  \eqref{Eq:ElasticEnergy} we see that the parenthesis is bounded.
Since $N_\epsilon\le C/\epsilon$ (recall \eqref{Eq:Scaling}), this proves \eqref{Eq:FeFp_Step1}.

We now turn to \eqref{Eq:FeFp_Step1b}. By \eqref{Eq:Fe_J} and the bound $|\mathbf{b}_{\epsilon j}|\leq n' \epsilon \leq n \epsilon$, c.f.~(C2) and (C3) in \eqref{Def:Xe},
\begin{align} 
|\F^\el_{\epsilon} \F^\pl_{\epsilon}|& \left(  B_{c \epsilon }(\mathbf{X}_k) \right) \leq  \int_{ B_{c \epsilon }(\mathbf{X}_k)} | \F^\el_{\epsilon}| \, dX + n \epsilon   \sum_j^{N_{dk}} \int_{\mathcal{J}_j\cap B_{c\epsilon }(\mathbf{X}_k)}| \F^\el_{\epsilon}| \, d\mathcal{H}^1  \,.
\end{align}
In the second term we use (\ref{Eq:PoincareSegmentsBall}), so that it
becomes
\begin{align} 
n \epsilon   \sum_j^{N_{dk}} \int_{\mathcal{J}_j\cap B_{c\epsilon }(\mathbf{X}_k)}| \F^\el_{\epsilon}| \, d\mathcal{H}^1  \le
n \epsilon   \sum_j^{N_{dk}} \int_{\mathcal{J}_j\cap B_{c\epsilon }(\mathbf{X}_k)}| \overline \F^\el_{\epsilon}| \, d\mathcal{H}^1  +C_{5}
 n \epsilon  |D\F^\el_{\epsilon}|(B_{c\epsilon }(\mathbf{X}_k))\,,
\end{align}
where $ \overline \F^\el_{\epsilon}$ denotes the average of $\F^\el_{\epsilon}$ over $B_{c\epsilon }(\mathbf{X}_k)$, which obeys
\begin{equation}\label{eqaveragefeps}
|\bar{\F}^{\el}_{\epsilon }(\mathbf{X}_k)| = \frac{1}{\pi c^2\epsilon^2} \left|\int_{B_{c\epsilon}(\mathbf{X}_k)} \F^\el_{\epsilon} \,dX\right|
\le \frac{1}{\pi c^2\epsilon^2}\int_{B_{c\epsilon}(\mathbf{X}_k)}  \left|\F^\el_{\epsilon}\right| \,dX\,.
\end{equation}
Therefore
\begin{align} \label{eqs1final}
|\F^\el_{\epsilon} \F^\pl_{\epsilon}|& \left(  B_{c \epsilon }(\mathbf{X}_k) \right) \leq  
(1+\frac{n N_s c}{\pi c^2})
\int_{ B_{c \epsilon }(\mathbf{X}_k)} | \F^\el_{\epsilon}| \, dX +  C_{5}
 n \epsilon  |D\F^\el_{\epsilon}|(B_{c\epsilon }(\mathbf{X}_k))\,.
\end{align}
After summing over $k$, the first term is bounded as in (\ref{eqcontrfepsl1}), the second as in (\ref{Eq:eq3}). This concludes
the proof of  \eqref{Eq:FeFp_Step1b}.

\item[(S2)] Next, we cover the domain $ K \setminus \bigcup \limits_k^{N_{\epsilon}}B_{c\epsilon} (\mathbf{X}_k)$ with finitely many balls $B_{\epsilon}(\mathbf Y_l)$, with $\mathbf Y_l \in K$, such that the larger balls $B_{L^2\epsilon}(\mathbf Y_l)$ do not intersect the dislocations cores $B_{m\epsilon} (\mathbf{X}_k)$ and have finite overlap. Such a covering is possible in view of Eq.~\eqref{Eq:DislocationsSeparation} and, by the definition of $K$, $B_{L^2\epsilon}(\mathbf Y_l)\subset \Omega' \subset \Omega$.

We shall now show that for any $l$ one has
\begin{equation}\label{eqestimateS2oneball}
 |D\boldsymbol \varphi_{\epsilon}-\F^\el_{\epsilon}\F^\pl_{\epsilon}| \left(B_{\epsilon}(\mathbf Y_l) \right)   \leq C_6 \epsilon |D\mathbf F^\el_\epsilon|(B_{h\epsilon}(\mathbf Y_l) )\,,
\end{equation}
with $h=1+Ln \leq L^2$. Summing over all the balls will give the estimate on  $ K \setminus \bigcup \limits_k^{N_{\epsilon}}B_{c\epsilon} (\mathbf{X}_k)$.

Fix $l$ and let $\boldsymbol \varphi_{\epsilon}^\el$ and $\boldsymbol \varphi_{\epsilon}^\pl$ be as in (C2),
applied to the set $B_{2L^2\epsilon}(\mathbf Y_l)$. Then,
$\boldsymbol\varphi_\epsilon=\boldsymbol\varphi^\el_\epsilon\circ\boldsymbol\varphi^\pl_\epsilon$. Taking the gradient we obtain
$\mathbf F^\el_\epsilon=\nabla \boldsymbol\varphi_\epsilon^\el \circ \boldsymbol\varphi_\epsilon^\pl$ and
\begin{equation}
D\boldsymbol\varphi_\epsilon=\mathbf F^\el_\epsilon \mathcal{L}^2 +  \sum_j \llbracket \boldsymbol \varphi_{\epsilon}\rrbracket\otimes \mathbf{N}_j\ \mathcal{H}^1 \lfloor_{\mathcal{J}_j} \hskip1cm \text{ in $B_{h\epsilon}(\mathbf Y_l)$}\,.
\end{equation}
Therefore, recalling (\ref{eqdvarphipleps}),
\begin{equation}\label{eqdvarphifefpeps}
 |D\boldsymbol \varphi_{\epsilon}-\F^\el_{\epsilon}\F^\pl_{\epsilon}| \left(B_{\epsilon}(\mathbf Y_l) \right)   
 =\int_{B_{\epsilon}(\mathbf Y_l) \cap \mathcal J}
 |\llbracket \boldsymbol \varphi_{\epsilon}\rrbracket- {\F}^\el_{\epsilon} \mathbf{b}_{\epsilon}| 
 d\mathcal H^1\,,
\end{equation}
where $\mathbf{b}_{\epsilon}$ takes the constant value $\mathbf{b}_{\epsilon j}$ on each segment $\mathcal{J}_j \subset B_{\epsilon}(\mathbf Y_l) \cap \mathcal J$.
For $\mathbf X\in B_{\epsilon}(\mathbf Y_l) \cap \mathcal J_j$ 
we write $\tilde{\mathbf{X}}^+ = \boldsymbol \varphi^{\pl+}_{\epsilon}(\mathbf{X})$, $\tilde{\mathbf{X}}^- = \boldsymbol \varphi^{\pl-}_{\epsilon}(\mathbf{X})$ and $\tilde{\mathbf{X}}^+ - \tilde{\mathbf{X}}^- = \mathbf{b}_{\epsilon j}$.
Then,
\begin{equation} \label{Eq:eq4}
\llbracket \boldsymbol \varphi_{\epsilon}\rrbracket(\mathbf{X}) = \boldsymbol \varphi^\el_{\epsilon}(\tilde{\mathbf{X}}^+)-\boldsymbol \varphi^\el_{\epsilon}(\tilde{\mathbf{X}}^-) = \int_{\tilde{\mathbf{X}}^-}^{\tilde{\mathbf{X}}^+} \tilde{\nabla} \boldsymbol \varphi^\el_{\epsilon} \, d\tilde{\mathbf{X}}= \int_0^1{\F}^\el_{\epsilon}(\boldsymbol\varphi^{\pl,-1}_\epsilon(
\tilde{\mathbf{X}}^-+ t \mathbf{b}_{\epsilon j}))\mathbf{b}_{\epsilon j}\, dt\,.
\end{equation}
The value of $\F^\el_\epsilon$ on the jump set was defined in (\ref{Eq:Fe_J}).
Since $S=[\tilde{\mathbf{X}}^+,\tilde{\mathbf{X}}^-]$ is a segment of length no larger than $n\epsilon$,
by Lemma \ref{Lemma:up_inverse_Lipschitz}, 
$\boldsymbol\varphi_\epsilon^{\pl,-1}(S) \subset B_{h\epsilon}(\mathbf{Y}_l)$. Furthermore, $\boldsymbol\varphi_\epsilon^{\pl,-1}(S)$ consists of the union of at most $M=(1+n)^{N_s}$
segments. Indeed, each $\boldsymbol\varphi_{\epsilon,\nu}^{\pl,-1}$ is piecewise a translation, and a segment of length no larger than
 $n\epsilon$ crosses at most $\lfloor n\rfloor +1$ discontinuity points, hence gets decomposed into at most
 $\lfloor n\rfloor +1$ segments, each of length no larger than $n\epsilon$ (plus possibly a finite number of points, which may be ignored). Iterating at most $N_s$ times gives the result. 
 Then, by \eqref{Eq:PoincareSegmentsBall} one obtains
\begin{equation}\label{eqS2trace}
  \int_{\boldsymbol\varphi_\epsilon^{\pl,-1}(S)} |\mathbf F^\el_\epsilon- \overline {\mathbf F}^\el_\epsilon| \, 
  d\mathcal H^1
    \le C_{7} |D\mathbf F^\el_\epsilon | \left( B_{ h\epsilon}(\mathbf Y_l)\right),
\end{equation}
where $\overline {\mathbf F}^\el_\epsilon$ is the average of $\F^\el_{\epsilon} $ over $B_{h\epsilon}(\mathbf Y_l)$,
\begin{equation}
\overline {\mathbf F}^\el_\epsilon = \frac{1}{|{B}_{h\epsilon }({\mathbf{Y}}_l)|} \int_{{B}_{h\epsilon }({\mathbf{Y}}_l) } {\F}^\el_{\epsilon}  \, d{X}\,.
\end{equation}
Therefore (\ref{Eq:eq4}) and (\ref{eqS2trace}) yield
\begin{equation} 
\left|\llbracket \boldsymbol \varphi_{\epsilon}\rrbracket (\mathbf{X})
-\overline {\mathbf F}^\el_\epsilon\mathbf{b}_{\epsilon j}\right|\le 
  \int_{\boldsymbol\varphi_\epsilon^{\pl,-1}(S)} |\mathbf F^\el_\epsilon- \overline {\mathbf F}^\el_\epsilon| 
 \,  d\mathcal H^1
\le C_{7} |D\mathbf F^\el_\epsilon | \left( B_{ h\epsilon}(\mathbf Y_l)\right)\,.
\end{equation}
We write  (\ref{eqdvarphifefpeps})  as
\begin{equation} 
 |D\boldsymbol \varphi_{\epsilon}-\F^\el_{\epsilon}\F^\pl_{\epsilon}| \left(B_{\epsilon}(\mathbf Y_l) \right)   
 \le\int_{B_{\epsilon}(\mathbf Y_l) \cap \mathcal J}
 |\llbracket \boldsymbol \varphi_{\epsilon}\rrbracket- \overline{\F}^\el_{\epsilon} \mathbf{b}_{\epsilon}| \, 
 d\mathcal H^1
+ |\mathbf{b}_{\epsilon}|  \int_{B_{\epsilon}(\mathbf Y_l) \cap \mathcal J  } |\mathbf F^\el_\epsilon- \overline {\mathbf F}^\el_\epsilon|\,  d\mathcal H^1\,.
\end{equation}
Since, by \eqref{Eq:PoincareSegmentsBall}, 
\begin{equation}\label{eqfreasfr}
  \int_{B_{\epsilon}(\mathbf Y_l) \cap \mathcal J  } |\mathbf F^\el_\epsilon- \overline {\mathbf F}^\el_\epsilon| d\mathcal H^1  \le C_{8}  |D\mathbf F^\el_\epsilon | \left( B_{ h\epsilon}(\mathbf Y_l)\right),
\end{equation}
this concludes the proof of \eqref{eqestimateS2oneball}. We remark that by (C2) and the proof of Lemma \ref{Lemma:SizeControl},  $|B_{\epsilon}(\mathbf Y_k) \cap \mathcal J |$ consists of a finite union of segments bounded independently of $\epsilon$.

\item[(S3)] We combine the results for the dislocation cores in (S1) and each $B_{\epsilon}(\mathbf Y_l)$ in (S2), which, by construction, have a finite overlap. Then, 
\begin{equation} \label{Eq:S3Proof}
 |D\boldsymbol \varphi_{\epsilon}-\F^\el_{\epsilon}\F^\pl_{\epsilon}| \left(K\right)   \leq C_1 \epsilon^{1/2}.
\end{equation}
\end{enumerate}
\end{proof}

\begin{theorem} \label{Thm:Compactness}
Let $\boldsymbol \varphi_{\epsilon} \in X_{\epsilon}$, $\F_{\epsilon}, \F^\el_{\epsilon}$ and $\F^\pl_{\epsilon}$ as defined in Section \ref{Sec:ProblemSetting}, and $\sup_{\epsilon} E_{\epsilon}(\boldsymbol \varphi_{\epsilon}) < \infty$. Then there exist
$\mathbf F^\pl$, $\mathbf F^\el$, $\boldsymbol\varphi$ and a
subsequence (not relabeled) such that 
\begin{align}  \label{Eq:Fp_compactness_M}
&\F^\pl_{\epsilon} \overset{*}{\rightharpoonup} \F^\pl \qquad \text{in } \mathcal{M} \left(\Omega; \mathbb{R}^{2 \times 2}\right), \\ \label{Eq:Fe_compactness_L2}
&\F^\el_{\epsilon} \rightharpoonup \F^\el \qquad \text{in } L^2 \left(\Omega; \mathbb{R}^{2 \times 2}\right),\\ \label{Eq:Fe_compactness_BV}
&\F^\el_{\epsilon} \overset{*}{\rightharpoonup} \F^\el \qquad \text{in }  BV \left(\Omega; \mathbb{R}^{2 \times 2}\right), \\ \label{Eq:Fe_compactness_L1}
&\F^\el_{\epsilon} \rightarrow \F^\el \qquad \text{in } L^1 \left(\Omega; \mathbb{R}^{2 \times 2}\right), \\ \label{Eq:varphi_compactness}
&\boldsymbol \varphi_{\epsilon} - \mathbf{v}_{\epsilon} \rightarrow \boldsymbol \varphi \qquad \text{in }  L^1_{\mathrm{loc}}\left(\Omega; \mathbb{R}^2 \right), \\ \label{Eq:F_compactness}
&\F_{\epsilon} = D\boldsymbol \varphi_{\epsilon}\overset{*}{\rightharpoonup} D\boldsymbol \varphi=\F \qquad \text{in } \mathcal{M}_{\mathrm{loc}}\left(\Omega;\mathbb{R}^{2 \times 2} \right),\\
&\operatorname{Curl}\ \F^\pl_{\epsilon }\lfloor_{\Omega'}  \overset{*}{\rightharpoonup}\operatorname{Curl}\ \F^\pl \qquad \text{in } \mathcal{M}\left(\Omega;\mathbb{R}^{ 2} \right)
\end{align}
as $\epsilon \rightarrow 0$, where  $\mathbf{v}_{\epsilon}$ is a sequence of rigid translations and $\Omega'=\{\mathbf{X}\in \Omega : \operatorname{dist}\left(\mathbf{X},\partial \Omega \right)>L^2\epsilon\}$. The support of the measure $\operatorname{Curl }\F^\pl_{\epsilon }\lfloor_{\Omega'} $ thus exclusively consists of dislocation points. Furthermore $\boldsymbol \varphi$ is approximately continuous $\mathcal{H}^1$-almost everywhere.

\end{theorem}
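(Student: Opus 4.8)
The plan is to establish, in order, the a priori bounds implied by the energy, then perform successive compactness extractions, identify the limits, and finally prove the approximate continuity of $\p$, which is the only genuinely new assertion.

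First I would extract the a priori bounds. From $\sup_\epsilon E_\epsilon(\pe)<\infty$ and the quadratic growth \eqref{Eq:We_growth} of $W^\el$ one reads off that $\Fee$ is bounded in $L^2(\Omega;\mathbb R^{2\times2})$ and that $|D\Fee|(\Omega)$ is bounded, hence $\Fee$ is bounded in $BV(\Omega;\mathbb R^{2\times2})$; the $BV$ compactness theorem then gives a subsequence with $\Fee\to\Fe$ in $L^1$ and $\Fee\overset{*}{\rightharpoonup}\Fe$ in $BV$, and the $L^2$-bound upgrades this to $\Fee\rightharpoonup\Fe$ in $L^2$ (the weak $L^2$-limit must coincide with the strong $L^1$-limit). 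This yields \eqref{Eq:Fe_compactness_L2}--\eqref{Eq:Fe_compactness_L1}. For the plastic part, Lemma \ref{Lemma:SizeControl}(ii) together with the bound $|\mathbf b_{\epsilon j}|\le n\epsilon$ from Lemma \ref{Lemma:up_inverse_Lipschitz}(iv) controls the singular part of $\Fpe$ by $n\epsilon\,\mathcal H^1(\mathcal J\cap\Omega)\le nC^*$, so $\Fpe$ is bounded in $\M(\Omega;\mathbb R^{2\times2})$ and, along a further subsequence, $\Fpe\overset{*}{\rightharpoonup}\Fp$, giving \eqref{Eq:Fp_compactness_M}. Similarly, by (C4) the measure $\operatorname{Curl}\Fpe\lfloor_{\Omega'}$ is supported on the $N_\epsilon$ dislocation points with total mass at most $N_\epsilon N_s n'\epsilon\le CN_sn'$ by \eqref{Eq:Scaling}, hence weakly-$*$ precompact; since $\operatorname{Curl}$ is a first-order linear operator and every $\psi\in C_c^\infty(\Omega)$ is eventually supported in $\Omega'$, moving the $\operatorname{Curl}$ onto $\psi$ and using \eqref{Eq:Fp_compactness_M} shows $\operatorname{Curl}\Fpe\lfloor_{\Omega'}\to\operatorname{Curl}\Fp$ distributionally, which the mass bound promotes to weak-$*$ convergence in $\M$. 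A single subsequence is kept throughout by a diagonal argument.

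Next I would prove the compactness of $\pe$. By Proposition \ref{Thm::General_FeFp}, for any $K\Subset\Omega$ and $\epsilon$ small enough $|D\pe-\Fee\Fpe|(K)\le C_1\epsilon^{1/2}$, while $|\Fee\Fpe|(K)\le\int_K|\Fee|\,dX+n\epsilon\int_{\mathcal J\cap K}|\Fee|\,d\mathcal H^1$ and the last integral is bounded by $C\epsilon^{-1}\|\Fee\|_{L^1(\Omega)}+C|D\Fee|(\Omega)$ via the covering argument in the proof of Lemma \ref{Lemma:SizeControl} combined with the trace estimate \eqref{Eq:PoincareSegmentsBall}; hence $|D\pe|(K)\le C$ uniformly in $\epsilon$. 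Fixing a ball $B\Subset\Omega$ and setting $\mathbf v_\epsilon$ equal to the mean of $\pe$ over $B$ (a rigid translation), the Poincar\'e inequality makes $\pe-\mathbf v_\epsilon$ bounded in $BV$ on every ball $B'\Subset\Omega$; $BV$ compactness and a diagonalization over an exhausting sequence of balls give $\pe-\mathbf v_\epsilon\to\p$ in $L^1_{\mathrm{loc}}(\Omega;\mathbb R^2)$, which is \eqref{Eq:varphi_compactness}, and passing to the distributional limit yields $D\pe\overset{*}{\rightharpoonup}D\p=\F$ in $\M_{\mathrm{loc}}$, i.e. \eqref{Eq:F_compactness}.

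Finally, the approximate continuity of $\p$, which I expect to be the main obstacle. Decompose $D\pe=\Fee\mathcal L^2+\F^s_\epsilon$ with $\F^s_\epsilon=\sum_j\ljump\pe\rjump\otimes\N_j\,\mathcal H^1\lfloor_{\mathcal J_j}$ the jump part; since $\Fee\to\Fe$ in $L^1$, $\Fee\mathcal L^2\to\Fe\mathcal L^2$ in total variation, so $\F^s_\epsilon\overset{*}{\rightharpoonup}\nu:=D\p-\Fe\mathcal L^2$. The crucial estimate is that for $\X\in K$ and small $r$, $|\F^s_\epsilon|(B_r(\X))\le C\int_{B_{2r}(\X)}|\Fee|\,dX+C\epsilon^{1/2}$: on $\mathcal J$ one writes $\ljump\pe\rjump=\Fee\mathbf b_{\epsilon j}+(\ljump\pe\rjump-\Fee\mathbf b_{\epsilon j})$, the contribution of the second summand integrating to $|D\pe-\Fee\Fpe|(B_r(\X))\le C_1\epsilon^{1/2}$ by Proposition \ref{Thm::General_FeFp}, and the first summand being bounded by $n\epsilon\int_{\mathcal J\cap B_r}|\Fee|\,d\mathcal H^1$, which the length bound of Lemma \ref{Lemma:SizeControl}(i) together with \eqref{Eq:PoincareSegmentsBall} controls by $C\int_{B_{2r}(\X)}|\Fee|\,dX+C\epsilon|D\Fee|(\Omega)$. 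Letting $\epsilon\to0$, using lower semicontinuity of the total variation under weak-$*$ convergence and $\Fee\to\Fe$ in $L^1$, one obtains $|\nu|(B_r(\X))\le C\int_{B_{2r}(\X)}|\Fe|\,dX$ for every $\X\in\Omega$ and all small $r$; since the right-hand side is arbitrarily small on neighbourhoods of $\mathcal L^2$-null sets (equiintegrability of $|\Fe|\in L^1$), a Besicovitch covering argument gives $|\nu|\ll\mathcal L^2$. Therefore $D\p=\Fe\mathcal L^2+\nu\ll\mathcal L^2$, so $\p\in W^{1,1}_{\mathrm{loc}}(\Omega;\mathbb R^2)$ and, by the Federer and Vol'pert theorem, its approximate discontinuity set satisfies $\mathcal H^1(S_{\p})=\mathcal H^1(J_{\p})=0$, i.e. $\p$ is approximately continuous $\mathcal H^1$-a.e. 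The delicate point is precisely this last estimate: because $\mathcal H^1(\mathcal J)$ is of order $1/\epsilon$ while each jump is of order $\epsilon$, it is only the combination of the length bound of Lemma \ref{Lemma:SizeControl}, the $BV$-trace inequality and Proposition \ref{Thm::General_FeFp} --- which together dominate the jump measure by the equiintegrable elastic part --- that rules out concentration of $\nu$ on a rectifiable set of positive $\mathcal H^1$-measure.
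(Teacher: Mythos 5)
Your proposal reproduces the paper's own proof almost verbatim for the bulk of the statement: the $L^2$/$BV$ bounds and compactness for $\Fee$, the mass bound and weak* extraction for $\Fpe$ and for $\operatorname{Curl}\Fpe\lfloor_{\Omega'}$ together with the distributional identification of the latter limit, and the local bound on $|D\pe|$ obtained from Proposition \ref{Thm::General_FeFp}, the $\epsilon$-ball cover with finite overlap, the trace--Poincar\'e estimate \eqref{Eq:PoincareSegmentsBall} and the core estimate, followed by Poincar\'e and $BV$ compactness for $\pe-\mathbf v_\epsilon$. Where you genuinely deviate is the last step. The paper applies Cauchy--Schwarz to the same ball estimate to get $|D\p|(B_r(\X))\le C\,r\,\|\Fe\|_{L^2(B_{2r}(\X))}$, deduces that the $1$-density of $|D\p|$ vanishes at every point, and invokes Proposition 3.92 of Ambrosio--Fusco--Pallara to conclude $\mathcal H^1(S_{\p})=0$; this deliberately leaves open a possible Cantor part. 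You instead keep the $L^1$ form $|\nu|(B_r(\X))\le C\int_{B_{2r}(\X)}|\Fe|\,dX$ and claim the stronger conclusion $|\nu|\ll\mathcal L^2$, i.e.\ $D\p=\Fe\,\mathcal L^2+\nu$ has no singular part at all. That stronger statement is not false --- it is consistent with, and anticipated by, $\F=\Fe\Fp$ with $\Fp\in L^\infty$ proved later in Theorem \ref{Thm:F=FeFp} --- so your route would in fact sharpen this theorem.

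There is, however, a gap in the way you close it: ``a Besicovitch covering argument'' does not work as stated, because Besicovitch gives bounded overlap for the selected balls $B_{r_i}(\X_i)$, whereas your estimate has the doubled balls $B_{2r_i}(\X_i)$ on the right-hand side, and pairwise disjoint balls at different scales can have doubles all containing a common point; hence $\sum_i\int_{B_{2r_i}(\X_i)}|\Fe|\,dX$ is not controlled by a fixed multiple of $\int_U|\Fe|\,dX$. The repair is simple but must be made explicit: for a compact $\mathcal L^2$-null set $E\subset\subset\Omega$ cover $E$ at a single scale $r$ by balls $B_r(\X_i)$ whose centers form a maximal $r$-separated subset of $E$; then the doubled balls $B_{2r}(\X_i)$ have overlap bounded by a universal constant, so $|\nu|(E)\le C\int_{\{\operatorname{dist}(\cdot,E)\le 2r\}}|\Fe|\,dX$, and letting $r\to0$ and using $\mathcal L^2(E)=0$ gives $|\nu|(E)=0$; inner regularity then yields $|\nu|\ll\mathcal L^2$. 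Alternatively, if you only want the assertion of the theorem, the paper's shortcut already follows from your own estimate: $\limsup_{r\to0}|D\p|(B_r(\X))/r=0$ at every $\X$, hence the jump set of $\p$ is $\mathcal H^1$-negligible and $\p$ is approximately continuous $\mathcal H^1$-a.e., with no covering argument needed.
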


We recall that $\mu_{\epsilon}$ is said to weakly* converge to $\mu$ in $\mathcal{M}_{\mathrm{loc}}(\Omega)$, if 
\begin{equation}
\lim_{\epsilon \rightarrow 0}\int_{\Omega} f d\mu_{\epsilon} = \int_{\Omega} f d\mu
\end{equation}
for all $f \in C_c(\Omega)$, c.f.~Definition 1.58 of \cite{AmbrosioFuscoPallara2000}.

\begin{proof}

{\bf  Convergence of $\F^\pl_{\epsilon}$.}

By the definitions of $\F^\pl_{\epsilon}$ and $\mathcal{J}$, and the bound on $\mathbf{b}_{\epsilon j}$  \eqref{Eq:b_bound}, $|\F^\pl_{\epsilon}|(\Omega) \leq \sqrt2
|\Omega| + \sum_j |\mathbf{b}_{\epsilon j}| |\mathcal{J}_j| \leq \sqrt2|\Omega| + n \epsilon |\mathcal{J}| $. Therefore $\sup_{\epsilon} |\F^\pl_{\epsilon}|(\Omega) < \infty$ and \eqref{Eq:Fp_compactness_M} follows by weak* compactness (Theorem 1.59 of \cite{AmbrosioFuscoPallara2000}).

{\bf Convergence of $\F^\el_{\epsilon}$.}

By the growth condition of the elastic energy \eqref{Eq:We_growth}, $\sup_{\epsilon} \|\F^\el_{\epsilon}\|_{L^2(\Omega)} < \infty$. Then, \eqref{Eq:Fe_compactness_L2} follows by weak compactness (Theorem 1.36 of \cite{AmbrosioFuscoPallara2000}). Furthermore, since $\Omega$ is  bounded and the energy imposes $\sup_{\epsilon}|D\F^\el_{\epsilon}|(\Omega) < \infty$, we obtain $\sup_{\epsilon} \|\F^\el_{\epsilon}\|_{BV(\Omega)} < \infty$. Hence \eqref{Eq:Fe_compactness_BV} and \eqref{Eq:Fe_compactness_L1} follow by $BV$ compactness (Theorem 3.23 of \cite{AmbrosioFuscoPallara2000}).

{\bf Convergence of $\boldsymbol \varphi_{\epsilon}$ and $D\boldsymbol \varphi_{\epsilon}$.}

Consider a fixed closed ball $B$ compactly contained in $\Omega$, and define $\mathbf{v}_{\epsilon}=\frac{1}{|B|}\int_B \pe \, dX$. Next, choose $K'\subset \subset \Omega$ and define $K=K'\cup B$. Then for $\epsilon<\text{dist}(K,\partial \Omega)/(2L^2)$, we show that $|D\boldsymbol \varphi_{\epsilon}|\left(K\right) < \infty $ by decomposing the domain into the dislocation cores and the remainder of the domain, c.f.~proof of Proposition \ref{Thm::General_FeFp}. By \eqref{Eq:FeFp_Step1},
\begin{equation} \label{Eq:eq7}
|D\boldsymbol \varphi_{\epsilon}| \left(\bigcup \limits_k^{N_{\epsilon}} B_{c\epsilon (\mathbf{X}_k)}\right)\leq C_2 \epsilon^{1/2}.
\end{equation}

Next, we cover $K\setminus \bigcup_k^{N_{\epsilon}}B_{c\epsilon}(\mathbf{X}_k)$ with balls of radius $\epsilon$, $B_{\epsilon}(\mathbf{Y}_l)$, $l\in \mathbb{N}$ with finite overlap, 
as in Proposition \ref{Thm::General_FeFp} (S3) and we further define
\begin{equation} \label{Eq:eq6}
\bar{\F}^{\el}_{\epsilon }(\mathbf{Y}_l) = \frac{1}{\pi \epsilon^2} \int_{B_{\epsilon}(\mathbf{Y}_l)} \F^\el_{\epsilon} \, dX.
\end{equation}
By  Proposition \ref{Thm::General_FeFp}, $|D\boldsymbol \varphi_{\epsilon} - \F^\el_{\epsilon} \F^\pl_{\epsilon}| \left(K\right) \le  C_1 \epsilon^{1/2}$. The bound
on $|\F^\el_{\epsilon} \F^\pl_{\epsilon}|(K)$ is obtained with a similar strategy as in Proposition \ref{Thm::General_FeFp}. For each ball, 
since $|\mathbf b_{\epsilon j}|\le n\epsilon$ 
we have
\begin{align}
  | \F^\el_{\epsilon} \F^\pl_{\epsilon}|  \left(B_{\epsilon}(\mathbf{Y}_l)\right) 
  \le \int_{B_{\epsilon}(\mathbf{Y}_l)} | \F^\el_{\epsilon}|\, dX + n\epsilon \int_{\mathcal{J} \cap B_{\epsilon}(\mathbf{Y}_l)} 
   |\F^\el_{\epsilon}| \, d\mathcal{H}^1\,,
\end{align}
and $\mathcal{J} \cap B_{\epsilon}(\mathbf{Y}_l)$ consists of finitely many segments. 
Using \eqref{Eq:PoincareSegmentsBall} as in  (\ref{eqfreasfr}), and estimating the average via
 (\ref{eqaveragefeps}) as in (\ref{eqs1final}) gives
\begin{align}
  | \F^\el_{\epsilon} \F^\pl_{\epsilon}|  \left(B_{\epsilon}(\mathbf{Y}_l)\right) 
  \le C_9 \int_{B_{\epsilon}(\mathbf{Y}_l)} | \F^\el_{\epsilon}|\, dX + C_{10} n\epsilon
  |D\F^\el_{\epsilon}|(  B_{\epsilon}(\mathbf{Y}_l))\,.
\end{align}

Summing over all balls, 
\begin{align}\label{eqfefpbyl}
  | \F^\el_{\epsilon} \F^\pl_{\epsilon}| \left(K\setminus \left(\cup_kB_{c\epsilon}(\mathbf{X}_k)\right)\right)
  \le C_{11} \int_{\Omega} | \F^\el_{\epsilon}| dX + C_{12}\epsilon
  |D\F^\el_{\epsilon}|( \Omega)
\end{align}
and therefore, recalling (\ref{Eq:eq7}),
\begin{equation}\label{eqconvdphifinal} 
\begin{split}
|D\boldsymbol \varphi_{\epsilon}| \left(K\right) & \leq 
|D\boldsymbol \varphi_{\epsilon}| \left(\cup_k B_{c\epsilon (\mathbf{X}_k)}\right)+
|D\boldsymbol \varphi_{\epsilon} - \F^\el_{\epsilon} \F^\pl_{\epsilon}| \left(K\right)  +  | \F^\el_{\epsilon} \F^\pl_{\epsilon}|  \left(K \setminus \cup_k B_{c\epsilon (\mathbf{X}_k)} \right)  \\
&\leq C_2 \epsilon^{1/2}+ C_1  \epsilon^{1/2} + C_{11} \int_{\Omega} |\F^\el_{\epsilon}| \, dX+ \epsilon C_{12}  |D\F^\el_{\epsilon}| \left( \Omega \right) .
\end{split}
\end{equation}

Therefore $\sup_{\epsilon} |D\boldsymbol \varphi_{\epsilon}| \left(K\right)< \infty$. Then, by the Poincar\'e theorem for $BV$ functions, c.f.~Theorem 3.44 of \cite{AmbrosioFuscoPallara2000}, $\sup_{\epsilon} \|\boldsymbol \varphi_{\epsilon}-\bar{\boldsymbol \varphi}_{\epsilon }\|_{L^1(K)} < \infty$, where $\bar{\boldsymbol \varphi}_{\epsilon }$ is the average of $\boldsymbol \varphi_{\epsilon}$ over $K$. Since $B \subset K$ it follows that $|\mathbf{v}_{\epsilon}- \bar{\boldsymbol \varphi}_{\epsilon }|$
is bounded as $\epsilon \rightarrow 0$. Then, as a result, by $BV$-compactness, c.f.~Theorem 3.23 of \cite{AmbrosioFuscoPallara2000}) and the definition of weak* convergence in $BV$ (Definition 3.11 of \cite{AmbrosioFuscoPallara2000}), \eqref{Eq:varphi_compactness} and \eqref{Eq:F_compactness} follow.

{ \bf Continuity of the limiting solution}

From the previous results,  $\boldsymbol \varphi_{\epsilon} -\mathbf{v}_{\epsilon} \rightarrow \boldsymbol \varphi$ in $L^1_{\mathrm{loc}}(\Omega)$. Then, by lower semicontinuity of the variation measure, c.f.~page 172 of \cite{EvansGariepy1991},

\begin{equation}
|D\boldsymbol \varphi |(B_r) \leq \liminf_{\epsilon \rightarrow 0} |D\boldsymbol \varphi_{\epsilon}|(B_r) \quad \forall B_r \subset \subset \Omega,
\end{equation}
where $B_r$ denotes a ball of radius $r$.   We write $B_{2r}$ for the ball with the same center and radius $2r$, and assume $B_{2r}\subset\subset\Omega$.

Let us consider a set $K\subset \subset \Omega$, such that $B_{2r} \subset\subset K$, and take $\epsilon < \text{dist}(\partial \Omega,K)/(2L^2)$.
Then, defining $K_{\epsilon}=K\setminus \left(\cup_k B_{c\epsilon}(\mathbf{X}_k) \right)$, we obtain by Proposition \ref{Thm::General_FeFp}, \eqref{Eq:eq7} and the definition of $\F^\pl_{\epsilon}$
\begin{equation}
\begin{split}
|D\boldsymbol \varphi_{\epsilon}| (B_r) &\leq |D\boldsymbol \varphi_{\epsilon}| (B_r\cap K_{\epsilon})+ |D\boldsymbol \varphi_{\epsilon}| \left(B_r\cap \bigcup \limits_k^{N_{\epsilon}} B_{c \epsilon}(\mathbf{X}_k)\right) \\
&\leq |D\boldsymbol \varphi_{\epsilon}-\F^\el_{\epsilon}\F^\pl_{\epsilon}| (B_r\cap K_{\epsilon})+|\F^\el_{\epsilon}\F^\pl_{\epsilon}| (B_r\cap K_{\epsilon})+ |D\boldsymbol \varphi_{\epsilon}| \left(B_r\cap \bigcup \limits_k^{N_{\epsilon}} B_{c \epsilon}(\mathbf{X}_k)\right) \\
&\leq C_{13} \epsilon^{1/2}+ |\F^\el_{\epsilon}\F^\pl_{\epsilon}|(B_r\cap K_{\epsilon}) \,.
\end{split}
\end{equation}
To estimate the last term we cover $B_r\cap K_{\epsilon}$ with balls of radius $\epsilon$ which do not intersect the cores,
as done in (\ref{eqfefpbyl}). The balls have finite overlap and are all contained in the larger ball $B_{2r}$.
Therefore
\begin{equation}
\begin{split}
|D\boldsymbol \varphi_{\epsilon}| (B_r)
&\leq C_{13} \epsilon^{1/2}+C_{14} \|\F^\el_{\epsilon}\|_{L^1(B_{2r}\cap K_{\epsilon})} + C_{15} \epsilon|D\F^\el_{\epsilon}|\left(B_{2r}\right).
\end{split}
\end{equation}
Passing to the limit, since (\ref{Eq:Fe_compactness_L1}) implies
$\|\F^\el_{\epsilon}\|_{L^1(B_{2r})}\to 
\|\F^\el\|_{L^1(B_{2r})}$, we obtain
\begin{equation}
|D\boldsymbol \varphi |(B_r) \leq \liminf_{\epsilon \rightarrow 0} |D\boldsymbol \varphi_{\epsilon}|(B_r)
\le C_{14} \|\F^\el\|_{L^1(B_{2r})} \le C_{14} 2 \sqrt\pi r  \|\F^\el\|_{L^2(B_{2r)}}\,.
\end{equation}
Therefore
\begin{equation}
|D\boldsymbol \varphi|(B_r) \leq C_{16} r \|\F^\el  \|_{L^2\left(B_{2r} \right) }.
\end{equation}
Since, by \eqref{Eq:Fe_compactness_L2}, $\F^\el \in L^2(\Omega)$,
\begin{equation} \label{Eq:Continuity}
\liminf_{r\rightarrow 0} \frac{|D\boldsymbol \varphi|(B_r)}{r}=0.
\end{equation}
As a result, c.f.~Proposition 3.92 of \cite{AmbrosioFuscoPallara2000}, the jump set of $D\boldsymbol \varphi$ has $\mathcal{H}^1$ measure zero, and $\boldsymbol \varphi$ is approximately 
continuous $\mathcal{H}^1$-almost everywhere. We note that it may contain a Cantor part.

{\bf Convergence of Curl $\F^\pl_{\epsilon}$}

By the scaling \eqref{Eq:Scaling} and the bound on the Burgers vector \eqref{Eq:b_bound}, Curl $\F^\pl_{\epsilon}\lfloor_{\Omega'}$ is equibounded. Therefore, by weak* compactness (Theorem 1.59  of \cite{AmbrosioFuscoPallara2000}),
\begin{equation}
\text{Curl}\ \F^\pl_{\epsilon}\lfloor_{\Omega'} \overset{*}{\rightharpoonup} \mathbf{G} \quad \text{in } \mathcal{M}\left(\Omega; \mathbb{R}^2 \right).
\end{equation}

Since $\F^\pl_\epsilon$ also converges in $\mathcal{M}$, we have $\mathbf{G}= \text{Curl}\ \F^\pl$. Indeed, for every $\phi \in C_c^{\infty}(\Omega)$, one can choose $\epsilon$ such that $\text{supp}\ \phi \subset \subset \Omega'$. Then, c.f~Appendix A,
\begin{equation}
\lim_{\epsilon \rightarrow 0} \int_{\Omega} \phi\ d\left( \text{Curl}\ \Fpe- \text{Curl}\ \Fp \right) = \lim_{\epsilon \rightarrow 0} \int_{\Omega'} \phi\ d\left( \text{Curl}\ \Fpe- \text{Curl}\ \Fp \right) = \lim_{\epsilon \rightarrow 0} - \int_{\Omega'} D\phi \times d\left(\Fpe-\Fp \right) = 0,
\end{equation}
where the cross product is considered between $D\phi$ and each column of $\left(\Fpe-\Fp \right)$.

\end{proof}

\begin{theorem} \label{Thm:F=FeFp}
Let $\boldsymbol \varphi_{\epsilon}$, $\F_{\epsilon} = D\boldsymbol \varphi_{\epsilon}$, $\F^\el_{\epsilon}$, $\F^\pl_{\epsilon}$ as defined in Section \ref{Sec:ProblemSetting},  and $\F=D\boldsymbol \varphi$, $\F^\el$ and $\F^\pl$ as in Theorem \ref{Thm:Compactness}. Further, $\sup_{\epsilon} E_{\epsilon}(\pe) < \infty$. Then, 
\begin{equation}\label{eqffefp}
\F = \F^\el\F^\pl.
\end{equation}
Additionally, $\F^\pl \in L^{\infty}\left(\Omega;\mathbb{R}^{2\times2} \right)$, $\F \in L^2\left(\Omega;\mathbb{R}^{2\times2} \right)$ and $\boldsymbol \varphi \in 
 W^{1,2} \left(\Omega; \mathbb{R}^2 \right)$. 
\end{theorem}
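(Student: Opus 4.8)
The plan is to combine the almost-multiplicativity estimate of Proposition~\ref{Thm::General_FeFp} with the compactness of Theorem~\ref{Thm:Compactness}; the only genuinely new ingredient is the identification of the weak-$*$ limit of the singular part of $\Fpe$ when it is paired with $\Fee$. I would first record the regularity of $\Fp$. Write $\Fpe=\mathbf I\,\mathcal L^2\lfloor_{\Omega}+\mathbf P_\epsilon$ with $\mathbf P_\epsilon=\sum_j\mathbf b_{\epsilon j}\otimes\mathbf N_j\,\mathcal H^1\lfloor_{\mathcal J_j\cap\Omega'}$. Using $|\mathbf b_{\epsilon j}|\le n\epsilon$ and Lemma~\ref{Lemma:SizeControl}(i) one gets $|\Fpe|(B_r(\mathbf X))\le(\sqrt2\pi+nA)r^2$ for every ball with $\bar B_r(\mathbf X)\subset\Omega$ and $\epsilon<r$. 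By lower semicontinuity of the total variation under the weak-$*$ convergence \eqref{Eq:Fp_compactness_M}, the same bound holds for $|\Fp|$; a standard differentiation argument then shows that $\Fp$ is absolutely continuous with respect to $\mathcal L^2$ with bounded density, i.e.\ $\Fp\in L^\infty(\Omega;\mathbb R^{2\times2})$. Since $\mathbf I\,\mathcal L^2\lfloor_{\Omega}$ is a fixed measure, it follows that $\mathbf P_\epsilon\overset{*}{\rightharpoonup}\mathbf P\,\mathcal L^2$ with $\mathbf P:=\Fp-\mathbf I\in L^\infty$.

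Next, by Proposition~\ref{Thm::General_FeFp}, $|D\pe-\Fee\Fpe|(K)\to0$ for every $K\subset\subset\Omega$, so by \eqref{Eq:F_compactness} the measures $\Fee\Fpe=\Fee\,\mathcal L^2+\Fee\mathbf P_\epsilon$ converge weak-$*$ in $\M_{\mathrm{loc}}(\Omega;\mathbb R^{2\times2})$ to $\F=D\p$. Since $\Fee\to\Fe$ in $L^1$ by \eqref{Eq:Fe_compactness_L1}, the first summand converges to $\Fe\,\mathcal L^2$, and the theorem reduces to showing $\Fee\mathbf P_\epsilon\overset{*}{\rightharpoonup}\Fe\mathbf P$ in $\M_{\mathrm{loc}}(\Omega)$, which then gives $\F=\Fe\,\mathcal L^2+\Fe\mathbf P=\Fe\Fp$.

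To prove this, fix a test field $\phi\in C_c(\Omega;\mathbb R^{2\times2})$, let $K\subset\subset\Omega$ be a neighbourhood of $\operatorname{supp}\phi$, and let $\Fe_\delta=\Fe*\rho_\delta$ be a mollification. I would insert the intermediate quantity obtained by replacing $\Fee$ by $\Fe_\delta$ in the product $\Fee\mathbf P_\epsilon$. For fixed $\delta$, since $\Fe_\delta$ is continuous, the pairing of this intermediate object with $\phi$ converges as $\epsilon\to0$, by $\mathbf P_\epsilon\overset{*}{\rightharpoonup}\mathbf P\,\mathcal L^2$, to the corresponding expression with $\mathbf P$ in place of $\mathbf P_\epsilon$; and that expression converges, as $\delta\to0$, to $\int_\Omega\phi:\Fe\mathbf P\,dX$ because $\Fe_\delta\to\Fe$ in $L^1$ and $\mathbf P\in L^\infty$. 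The remaining task, and the main obstacle, is to bound the error $\int\phi:(\Fee-\Fe_\delta)\,d\mathbf P_\epsilon$ uniformly in $\epsilon$ with a bound that vanishes as $\delta\to0$. This error is controlled by $\|\phi\|_\infty\,n\epsilon\int_{\mathcal J\cap K}|\Fee-\Fe_\delta|\,d\mathcal H^1$, and the difficulty is that $\mathcal H^1(\mathcal J\cap K)$ grows like $1/\epsilon$, so the factor $\epsilon$ alone is not enough. To handle it I would cover $K$ by balls $B_\epsilon(\mathbf Y_l)$ of radius $\epsilon$ with finite overlap avoiding the dislocation cores, together with the core balls $B_{c\epsilon}(\mathbf X_k)$; on each of these balls the trace–Poincar\'e inequality \eqref{Eq:PoincareSegmentsBall} allows one to replace $\Fee$ on the jump segments by its average over a slightly larger concentric ball, at the cost of $C\,|D\Fee|$ of that ball, and Lemma~\ref{Lemma:SizeControl}(i) bounds the jump length in each small ball by $A\epsilon$ and the number of small balls by $C/\epsilon^2$, while there are at most $N_\epsilon\le C/\epsilon$ core balls, each carrying jump length at most $N_s c\epsilon$. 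Summing, $n\epsilon$ times the average-replacement contribution is bounded by $Cn\epsilon\sup_\epsilon|D\Fee|(\Omega)\to0$ — the uniform $BV$ bound on $\Fee$ from the $\beta$-term in \eqref{Eq:ElasticEnergy} being essential here — and $n\epsilon$ times the remaining contribution is bounded, using the uniform continuity of $\Fe_\delta$, the finite overlap and the disjointness of the cores, by $C\|\Fe-\Fe_\delta\|_{L^1(K)}$ plus a quantity that tends to $0$ as $\epsilon\to0$ for each fixed $\delta$. Letting first $\epsilon\to0$ and then $\delta\to0$ makes the error vanish, and a standard two-parameter limit argument yields $\int\phi:d(\Fee\mathbf P_\epsilon)\to\int_\Omega\phi:\Fe\mathbf P\,dX$, hence \eqref{eqffefp}.

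The auxiliary assertions then follow at once: $\Fp\in L^\infty(\Omega;\mathbb R^{2\times2})$ was established in the first step, and since $\Fe\in L^2$ by \eqref{Eq:Fe_compactness_L2}, the pointwise product $\F=\Fe\Fp$ belongs to $L^2(\Omega;\mathbb R^{2\times2})$; since $D\p=\F$ is thus an $L^2$ density — in particular purely absolutely continuous, which is consistent with and sharpens the continuity statement of Theorem~\ref{Thm:Compactness} by excluding a Cantor part — $\p$ has an $L^2$ weak gradient, and combined with the $L^1_{\mathrm{loc}}$ convergence \eqref{Eq:varphi_compactness} and Poincar\'e's inequality on the bounded Lipschitz domain $\Omega$ this gives $\p\in W^{1,2}(\Omega;\mathbb R^2)$.
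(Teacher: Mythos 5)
Your proposal is correct, and it reaches the identity by a route that differs from the paper's in its two technical pivots, while resting on the same core estimates (Proposition \ref{Thm::General_FeFp}, the jump-length bound of Lemma \ref{Lemma:SizeControl}, the bound $|\mathbf b_{\epsilon j}|\le n\epsilon$, the trace--Poincar\'e inequality \eqref{Eq:PoincareSegmentsBall}, and the uniform $BV$ bound on $\Fee$). For $\Fp\in L^\infty$ you use the density bound $|\Fpe|(B_r)\le C r^2$ for $r>\epsilon$, lower semicontinuity of the total variation under \eqref{Eq:Fp_compactness_M}, and differentiation of measures; the paper instead mollifies $\Fpe$ at scale $\epsilon$, shows the mollifications $\boldsymbol\mu_\epsilon$ are equibounded in $L^\infty$ and identifies their weak-$*$ limit with $\Fp$. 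For the commutation of limit and product, the paper keeps a single limit $\epsilon\to0$: it inserts $\Fee\boldsymbol\mu_\epsilon$, and for the troublesome term $\int\phi(\Fee\Fpe-\Fee\boldsymbol\mu_\epsilon)$ exploits the symmetry of the kernel to transfer the convolution from $\Fpe-\mathbf I$ onto $\phi\Fee$, after which \eqref{Eq:PoincareSegmentsBall} applied to $\phi\Fee-(\phi\Fee)*\eta_\epsilon$ on $\epsilon$-balls covering $\mathcal J$ gives a bound $C\epsilon$. You instead mollify the \emph{limit} $\Fe$ at a fixed scale $\delta$ and run a two-parameter argument ($\epsilon\to0$ first, then $\delta\to0$), controlling the error on the jump set by replacing $\Fee$ with ball averages via \eqref{Eq:PoincareSegmentsBall} (cost $Cn\epsilon|D\Fee|(\Omega)\to0$) and absorbing the averaged part into $\|\Fee-\Fe_\delta\|_{L^1}$ plus an oscillation term of $\Fe_\delta$ that vanishes with $\epsilon$; this is a valid bookkeeping, made possible precisely because you established $\Fp\in L^\infty$ independently beforehand, whereas the paper gets $L^\infty$ and the product limit out of the same mollification $\boldsymbol\mu_\epsilon$. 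The paper's transposition trick buys a one-limit proof with a clean $O(\epsilon)$ error; your version is slightly longer but arguably more modular, and your closing observation that $\F\in L^2$ forces $D\p$ to be purely absolutely continuous (no Cantor part) is a correct sharpening consistent with Theorem \ref{Thm:Compactness}. The final regularity claims are handled as in the paper.
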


\begin{proof}
In the following, we show $D\boldsymbol \varphi_{\epsilon} \overset{*}{\rightharpoonup} \F^\el\F^\pl$ in the sense of distributions. By  Theorem \ref{Thm:Compactness} and uniqueness of the limit, this implies (\ref{eqffefp}).  

We begin by defining $\boldsymbol \mu_{\epsilon}$ as the mollification of $\F^\pl_{\epsilon}$ with a kernel 
$\eta_{\epsilon} \in C^{\infty}_c(B_{\epsilon})$ satisfying $\int_{B_{\epsilon}} \eta_{\epsilon}\, dX=1$  and $\eta_\epsilon(X)=\eta_\epsilon(-X)$, i.e.  
\begin{equation}
\boldsymbol \mu_{\epsilon}(\mathbf{X}) = (\F^\pl_{\epsilon} * \eta_{\epsilon})(\mathbf{X}) = \int_{B_{\epsilon}}  \eta_{\epsilon} (\mathbf{X}-\mathbf{Y})\, d\F^\pl_{\epsilon} (\mathbf{Y}).
\end{equation}
Then, for any $K\subset \subset \Omega$, and $\epsilon<\text{dist}(\partial \Omega, K)/(2L^2)$, one has, by $\eta_{\epsilon} \leq C_{17}\frac{1}{\epsilon^2} \chi_{B_{\epsilon}}$, \eqref{Eq:NoClustering} and \eqref{Eq:b_bound}, that $\|\boldsymbol \mu_{\epsilon}\|_{L^{\infty}(K)}$ is equibounded, the bound does not depend on $K$.
Therefore  there exist  $\boldsymbol\mu\in L^\infty(\Omega;\mathbb{R}^{2\times2})$ and
a subsequence satisfying $\boldsymbol \mu_{\epsilon} \overset{*}{\rightharpoonup}\boldsymbol  \mu$ in $L^\infty(K;\mathbb{R}^{2\times2})$ for all $K\subset\subset\Omega$. 
Next, we show that $\boldsymbol \mu = \F^\pl$. Let $\phi \in C^{\infty}_c(\Omega)$. By  choosing  $K$ such that supp$(\phi) \subset \subset K \subset\subset \Omega$, and choosing $\epsilon < \text{dist}(\partial \Omega, K)/(2L^2)$,
\begin{equation}
\begin{split}
\int_{\Omega} \phi  \left(d\F^\pl_{\epsilon} - d\boldsymbol \mu_{\epsilon }\right)&=\int_{K} \phi  \left(d\F^\pl_{\epsilon} - d\boldsymbol \mu_{\epsilon }\right) = \int_{K} \phi \left(  d\F^\pl_{\epsilon} - d\left(\F^\pl_{\epsilon} *\eta_{\epsilon } \right)\right) = \int_{K} \left( \phi - \phi * \eta_{\epsilon}\right) d\F^\pl_{\epsilon} \\
&\leq \sup_{\epsilon} |\phi - \phi * \eta_{\epsilon} | |\F^\pl_{\epsilon}|(\Omega) \rightarrow 0, \ \text{as } \epsilon \rightarrow 0,
\end{split}
\end{equation} 
since $|\F^\pl_{\epsilon}|(\Omega)$ equibounded, c.f.~proof of Theorem \ref{Thm:Compactness}, and $ \phi * \eta_{\epsilon} \rightarrow \phi$ uniformly as $ \epsilon \rightarrow 0$. Therefore $\boldsymbol \mu = \F^\pl$ and $\F^\pl \in  L^{\infty}\left(\Omega;\mathbb{R}^{2\times2} \right)$. Furthermore,  writing for simplicity, with abuse of notation, $D\boldsymbol \varphi_{\epsilon} dX$ and 
$\F^\pl_\epsilon dX$ for $dD\boldsymbol \varphi_{\epsilon}$ and 
$d\F^\pl_\epsilon$,
\begin{equation}
\begin{split}
\Big| \int_{\Omega} \phi \left(D\boldsymbol \varphi_{\epsilon} - \F^\el\F^\pl \right) dX \Big| &= \Big| \int_{K} \phi \left(D\boldsymbol \varphi_{\epsilon} - \F^\el\F^\pl \right) dX \Big|\\
& \leq \int_{K} |\phi \left(D\boldsymbol \varphi_{\epsilon} - \F^\el_{\epsilon}\F^\pl_{\epsilon} \right)| dX + \Big| \int_{K} \phi \left(\F^\el_{\epsilon}\F^\pl_{\epsilon} - \F^\el_{\epsilon}\boldsymbol  \mu_{\epsilon} \right) dX \Big| \\
&\phantom{\leq}+ \int_{K} |\phi \left(\F^\el_{\epsilon}\boldsymbol \mu_{\epsilon} - \F^\el\boldsymbol  \mu_{\epsilon} \right)| dX+  \Big| \int_{K} \phi \left(\F^\el\boldsymbol  \mu_{\epsilon} - \F^\el\F^\pl \right) dX \Big|.
\end{split}
\end{equation}

The first term, $ \int_{K} |\phi \left(D\boldsymbol \varphi_{\epsilon} - \F^\el_{\epsilon}\F^\pl_{\epsilon} \right)| dX$, tends to zero  by Proposition \ref{Thm::General_FeFp}. The third term, \newline $\int_{K} |\phi \left(\F^\el_{\epsilon}\boldsymbol \mu_{\epsilon} - \F^\el \boldsymbol \mu_{\epsilon} \right)| dX$, tends to zero by the strong $L^1(K)$ convergence of $\F^\el_{\epsilon}$, c.f.~\eqref{Eq:Fe_compactness_L1}, and $\boldsymbol \mu_{\epsilon} \in L^{\infty}(K)$. The last term, $\Big| \int_{K} \phi \left(\F^\el \boldsymbol \mu_{\epsilon} - \F^\el\F^\pl \right) dX \Big|$, tends to zero since $ \phi \F^\el \in L^1(K)$ and $\boldsymbol \mu_{\epsilon}$ weakly$*$ converges to $\F^\pl$ in $L^{\infty}(K)$. 
Finally, in the second term, $ \Big|\int_{K} \phi \left(\F^\el_{\epsilon}\F^\pl_{\epsilon} - \F^\el_{\epsilon} \boldsymbol \mu_{\epsilon} \right) dX \Big|$, 
we add and subtract the identity, use $\mathbf{I}*\eta_{\epsilon}=\mathbf{I}$ and a change of variables in the second term, to get
\begin{equation}
\begin{split}
 \int_K \phi \left(\F^\el_{\epsilon}\F^\pl_{\epsilon} - \F^\el_{\epsilon}\boldsymbol \mu_{\epsilon} \right) dX  &=\int_{K} \Big[ \phi \F^\el_{\epsilon}\left(\F^\pl_{\epsilon}-\mathbf{I}\right) - \phi \F^\el_{\epsilon}\left(\left(\F^\pl_{\epsilon}-\mathbf{I}\right)*\eta_{\epsilon} \right) \Big] dX  \\
 &= \int_K \Big[ \phi \F^\el_{\epsilon}\left(\F^\pl_{\epsilon}-\mathbf{I}\right) - \left(\left(\phi \F^\el_{\epsilon} \right)*\eta_{\epsilon} \right)\left(\F^\pl_{\epsilon}-\mathbf{I}\right) \Big] dX \\
 &= \int_{K\cap\mathcal J} \Big[ \phi \F^\el_{\epsilon} - \left(\phi \F^\el_{\epsilon} \right)*\eta_{\epsilon} \Big] 
{\mathbf b}_\epsilon \otimes \mathbf N d\mathcal H^1\,.
\end{split}
\end{equation}
 We cover the jump set $\mathcal{J}$ with balls $B_{\epsilon}(\mathbf{Y}_
l)$ such that the double balls $B_{2\epsilon}(\mathbf{Y}_
l)$ are contained in $\Omega$ and have finite overlap.
We use
(\ref{Eq:PoincareSegmentsBall}) in each ball, both for $ \phi  \F^\el_{\epsilon}$ and   for $ \left( \phi  \F^\el_{\epsilon} \right)*\eta_{\epsilon} $, 
the chain rule, the quadratic growth of the elastic energy and the energy bound, 
\begin{equation}
\begin{split}
 \Big|\int_{K} \phi \left(\F^\el_{\epsilon}\F^\pl_{\epsilon} - \F^\el_{\epsilon}\boldsymbol \mu_{\epsilon} \right) dX \Big| 
  & \leq    \sum_l  n\epsilon \int_{B_{\epsilon}(\mathbf{Y}_l)\cap \mathcal{J}} \Big| \left( \phi \F^\el_{\epsilon} - \left( \phi  \F^\el_{\epsilon} \right)*\eta_{\epsilon} \right) \Big| d\mathcal{H}^1  \\
 &\leq  \sum_l  n\epsilon C_{18}  | D\left( \phi  \F^\el_{\epsilon} \right) |(B_{2\epsilon}(\mathbf{Y}_l)) \\
 & \leq C_{19} n \epsilon \left(\|\phi \|_{L^{\infty}(\Omega)}  |D\F^\el_{\epsilon}| (\Omega) + \|D\phi \|_{L^{\infty}(\Omega)} \int_{\Omega} |\F^\el_{\epsilon}|\, d\mathbf{X}\right) \leq C_{20} \epsilon.
\end{split}
\end{equation}
 Therefore $D\boldsymbol \varphi_{\epsilon}$ converges distributionally to $\F^\el\F^\pl$.
By uniqueness of the limit, then $\F=\F^\el\F^\pl$. Further, since $\F^\el \in L^{2}\left(\Omega;\mathbb{R}^{2\times2} \right)$, and  $\F^\pl \in L^{\infty}\left(\Omega;\mathbb{R}^{2\times2} \right)$, it immediately follows that  $\F \in L^{2}\left(\Omega;\mathbb{R}^{2\times2} \right)$. By the Poincar\'e inequality,  $\boldsymbol \varphi \in L^2 \left(\Omega; \mathbb{R}^2 \right)$.

\end{proof}

\begin{theorem}
Let $\Fp$ be as in Theorem \ref{Thm:Compactness}, $\sup_{\epsilon} E_{\epsilon} (\pe)< \infty$. Then $\det \Fp =1$.
\end{theorem}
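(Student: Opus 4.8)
The plan is to realize the limiting plastic strain $\Fp$ as a weak-$*$ limit of gradients of maps whose Jacobian is \emph{identically} $1$, and then to invoke the weak continuity of the determinant (a null Lagrangian). The input is the mesoscopic structure recorded in Section \ref{Sec:ProblemSetting}: away from the cores, on each compatible ball $\Fpe=D\pe^\pl$ with $\pe^\pl=\boldsymbol\varphi^\pl_{\epsilon,N_s}\circ\cdots\circ\boldsymbol\varphi^\pl_{\epsilon,1}$, each $\boldsymbol\varphi^\pl_{\epsilon,\nu}(\X)=\X-\phi_{\epsilon,\nu}(\X\cdot\N_\nu)\N_\nu^\perp$, and this decomposition is uniquely determined by $\pe$ by Lemma \ref{Lemma:UniqueDecompositionVarphip}. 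I would replace each piecewise-constant profile $\phi_{\epsilon,\nu}$ by its mollification $\phi^*_{\epsilon,\nu}$ at the lattice scale $\epsilon$, obtaining regularized single slips $\boldsymbol\varphi^{\pl,*}_{\epsilon,\nu}$, which are $C^1$ shear diffeomorphisms with $\nabla\boldsymbol\varphi^{\pl,*}_{\epsilon,\nu}=\mathbf I-(\phi^*_{\epsilon,\nu})'\,\N_\nu^\perp\otimes\N_\nu$. Since the slip direction $\N_\nu^\perp$ is orthogonal to the slip-plane normal $\N_\nu$ and $\det(\mathbf I+\mathbf u\otimes\mathbf v)=1+\mathbf u\cdot\mathbf v$, one gets $\det\nabla\boldsymbol\varphi^{\pl,*}_{\epsilon,\nu}=1-(\phi^*_{\epsilon,\nu})'(\N_\nu^\perp\cdot\N_\nu)=1$ pointwise, so the composition $\pe^{\pl,*}=\boldsymbol\varphi^{\pl,*}_{\epsilon,N_s}\circ\cdots\circ\boldsymbol\varphi^{\pl,*}_{\epsilon,1}$ satisfies $\det\nabla\pe^{\pl,*}\equiv1$ wherever it is defined. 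Moreover, because $\phi_{\epsilon,\nu}$ has jumps at most $n'\epsilon$ at points separated by at least $\epsilon$, mollification at scale $\epsilon$ changes values by $O(\epsilon)$ and has derivative bounded by a constant independent of $\epsilon$; telescoping over the $N_s$ factors gives $\|\pe^{\pl,*}-\pe^\pl\|_{L^\infty}\le C\epsilon$ and $\nabla\pe^{\pl,*}$ bounded in $L^\infty$ uniformly in $\epsilon$.

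\textbf{Passage to the limit.} The bound $\|\pe^{\pl,*}-\pe^\pl\|_{L^\infty}\le C\epsilon$ forces $D\pe^{\pl,*}-D\pe^\pl\rightharpoonup0$ in $\mathcal D'$, so on any $K\subset\subset\Omega$ the uniformly-Lipschitz field $\mathbf G^*_\epsilon:=\nabla\pe^{\pl,*}$ (set, say, equal to $\mathbf I$ inside the cores) has the same weak-$*$ limit as $\Fpe$, up to an error supported on $\bigcup_k B_{c\epsilon}(\X_k)$. That error is negligible: by (C1) and Lemma \ref{Lemma:SizeControl} the cores have total area $O(\epsilon)$ and the jump part of $\Fpe$ there has total mass $O(\epsilon)$, so $\Fpe\lfloor_{\bigcup_k B_{c\epsilon}(\X_k)}\rightharpoonup0$; hence $\mathbf G^*_\epsilon\overset{*}{\rightharpoonup}\Fp$ in $L^\infty_{\mathrm{loc}}(\Omega)$. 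Finally I would cover $K$ by finitely-overlapping compatible balls (plus the cores), fix a subordinate partition of unity, and on each compatible ball apply the weak continuity of the Jacobian — using the divergence form $\det\nabla\mathbf u=\partial_1(u^1\partial_2u^2)-\partial_2(u^1\partial_1u^2)$ after rescaling to a unit ball, where $\pe^{\pl,*}$ is uniformly bounded in $W^{1,\infty}$ — to propagate $\det\nabla\pe^{\pl,*}\equiv1$ to the limit; the core contributions again drop out because $\det\mathbf G^*_\epsilon$ is $L^\infty$-bounded and the cores have vanishing area. Since $\det\mathbf G^*_\epsilon\equiv1$, the limit identity reads $\det\Fp=1$ a.e.

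\textbf{Main obstacle.} The difficulty is precisely the failure of $\det$ to be weakly continuous for generic $L^\infty$-bounded sequences: everything hinges on not merely knowing $\Fpe\rightharpoonup\Fp$ but on exhibiting $\Fp$ as a weak-$*$ limit of \emph{gradients} carrying the volume constraint \emph{exactly}, which the lattice-scale regularization achieves, and it is essential that this regularization keeps the gradients uniformly bounded so that the null-Lagrangian argument applies. The secondary, more routine, difficulty — which I expect to be the bulk of the write-up — is the bookkeeping forced by the fact that $\pe^\pl$ and $\pe^{\pl,*}$ are defined only on simply-connected patches avoiding the dislocations and are multivalued around each $\X_k$ (so $\Fpe$ and $\mathbf G^*_\epsilon$ are genuinely not global gradients, $\mathrm{Curl}\,\Fpe=\sum_k\pm\mathbf b_{\epsilon k}\delta_{\X_k}$): the passage to the limit must therefore be localized on the fine covering and summed, with the consistency of $\nabla\pe^{\pl,*}$ across overlapping patches read off from Lemma \ref{Lemma:UniqueDecompositionVarphip} and the cores controlled by the measure bounds. (One could instead try to deduce $\det\Fp=1$ from $\F=\Fe\Fp$ by proving $\det\F=\det\Fe$, but this appears to require the same compensated-compactness input together with positivity of $\det\Fe$, so the direct route above is preferable.)
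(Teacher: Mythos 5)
Your regularization step is essentially the paper's: the paper also mollifies the single-slip maps at the lattice scale ($\alpha=\epsilon/10$), observes that each mollified shear has Jacobian one so that $\det D\p^\pl_{\epsilon\alpha}=1$, records the bounds $|\p^\pl_{\epsilon\alpha}-\pe^\pl|\le C\epsilon$ and $|D\p^\pl_{\epsilon\alpha}|\le C$, uses Lemma \ref{Lemma:UniqueDecompositionVarphip} to patch the local definitions consistently, and kills the core contributions by measure bounds. The gap is in your limit-passage step. You propose to ``apply the weak continuity of the Jacobian on each compatible ball'' of the fine covering, but those compatible patches have radius of order $\epsilon$ (or at best $\epsilon^\rho$) and shrink as $\epsilon\to0$; weak continuity of $\det\nabla\mathbf u$ is an asymptotic statement on a \emph{fixed} open set, and there is no fixed subdomain of $\Omega$ on which your regularized field is globally a gradient, since dislocations proliferate ($N_\epsilon\sim1/\epsilon$) and the patched field $\F^\pl_{\epsilon\alpha}$ carries a nontrivial $\operatorname{Curl}$, a measure concentrated near the cores whose total mass stays of order one (it converges to the dislocation density, it does not vanish). ``Localize on the fine covering and sum'' does not rescue this: the null-Lagrangian identity gives no quantitative per-ball estimate that can be summed over an $\epsilon$-dependent family. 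The paper's fix is precisely to abandon the gradient structure and use the div--curl lemma for the globally defined field: $\det\F^\pl_{\epsilon\alpha}$ is a scalar product of a row whose Curl and a rotated row whose Div are controlled by $\operatorname{Curl}\F^\pl_{\epsilon\alpha}$, which is bounded in $\mathcal M$ (mass $\le N_\epsilon N_s n\epsilon$), and this compensated-compactness input yields $\det\F^\pl_{\epsilon\alpha}\overset{*}{\rightharpoonup}\det\Fp$; combined with $\det\F^\pl_{\epsilon\alpha}\to1$ in $L^1$ (the complement of the patches has area $\le\pi N_\epsilon\epsilon^{2\rho}\to0$) this gives $\det\Fp=1$. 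To complete your argument you would need this (or an equivalent) control of the failure of curl-freeness; the bare statement ``$\nabla\pe^{\pl,*}$ are gradients with Jacobian one'' is not usable on fixed sets.

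A secondary, fixable, gap: your claim that $\|\pe^{\pl,*}-\pe^\pl\|_{L^\infty}\le C\epsilon$ ``forces'' $D\pe^{\pl,*}-D\pe^\pl\rightharpoonup0$ hides the same localization issue. Since the maps are only defined patchwise, the integration by parts must be done against cutoffs subordinate to the covering; cutoffs at scale $\epsilon$ have gradients of order $\epsilon^{-1}$, and $C\epsilon\cdot\epsilon^{-1}$ does not vanish. The paper introduces an intermediate scale $\epsilon^\rho$ with $1/2<\rho<1$, so that the integration-by-parts error is of order $\epsilon^{1-\rho}\to0$ while the contribution of the balls around the dislocations is of order $N_\epsilon\epsilon^{2\rho}\to0$; some such two-scale choice is needed in your write-up as well.
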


\begin{proof}
Consider the set $\Omega'_2=\{\mathbf{X}\in \Omega: \text{dist}(\mathbf{X},\partial \Omega)\geq 2L^2\epsilon\}$, with $L$ as defined in Lemma \ref{Lemma:up_inverse_Lipschitz}. Next, we cover $\Omega'_2 \setminus \cup_k B_{c\epsilon}(\mathbf{X}_k)$ with balls $B_{\epsilon}(\mathbf{Y}_l)$, such that 
$\mathbf{Y}_l\in\Omega'_2$ and $|\mathbf{Y}_l-\mathbf{X}_k|>c\epsilon$ for all $l,k$. Note that this cover is possible by the constraint on the minimum distance between dislocations, c.f.~Eq.~\eqref{Eq:DislocationsSeparation}, and our choice of $\Omega'_2$. 
 We fix a mollification  kernel $\eta_{\alpha}\in C_c^{\infty}(B_{\alpha})$, where $\alpha=\epsilon/10$.
In   $B_{2L^2\epsilon}(\mathbf{Y}_l)$ we choose a decomposition as in (C2),
 $\boldsymbol \varphi^\pl_{\epsilon} = 
\boldsymbol \varphi^\pl_{\epsilon, N_s} \circ ...  \circ \boldsymbol \varphi^\pl_{\epsilon, 1}$. We define, in the smaller ball  $B_\epsilon(\mathbf{Y}_l)$, the regularized plastic deformation by
\begin{equation}
\p^\pl_{\epsilon \alpha} = \left( \p^\pl_{\epsilon, N_s} * \eta_{\alpha} \right) \circ ... \circ \left(\p^\pl_{\epsilon, \nu} * \eta_{\alpha} \right) \circ ... \circ \left(\p^\pl_{\epsilon, 1} * \eta_{\alpha} \right) \,.
\end{equation} 
We observe that $\p^\pl_{\epsilon \alpha} $ is smooth, 
and
$\det D(\p^\pl_{\epsilon,\nu}  * \eta_{\alpha})=1$ implies
$\det D\p^\pl_{\epsilon \alpha} =1$ everywhere. Further,
since the jumps of each $\p^\pl_{\epsilon,\nu}$ are bounded by $n'\epsilon$ and separated by at least $\epsilon$, c.f.~(C2),
we obtain  $|\p^\pl_{\epsilon,\nu}\ast\eta_\alpha-\p^\pl_{\epsilon,\nu}|\le n'\epsilon$ and 
 $|D\left(\p^\pl_{\epsilon, \nu} * \eta_{\alpha} \right)|\le c n'\epsilon/\alpha$. Iterating
 we deduce that $|\p^\pl_{\epsilon \alpha}-\p^\pl_{\epsilon}| \leq C_{21} \epsilon $ and $|D\p^\pl_{\epsilon \alpha}|\le C_{22}$. 

We now show that 
$D\p^\pl_{\epsilon \alpha}$ is uniquely defined, independently on the choice of $l$ and the decomposition of 
$\p^\pl_{\epsilon}$. Let 
$\hat{\boldsymbol \varphi}^\pl_{\epsilon} = 
\hat{\boldsymbol \varphi}^\pl_{\epsilon, N_s} \circ ...  \circ \hat{\boldsymbol \varphi}^\pl_{\epsilon, 1}$
be a different decomposition. 
By Lemma \ref{Lemma:UniqueDecompositionVarphip}, there are constants $\mathbf c_\nu$ such that
$\boldsymbol \varphi^{\pl}_{\epsilon, \nu}(\mathbf{X}) = \hat{\boldsymbol \varphi}^{\pl}_{\epsilon, \nu} (\mathbf{X}-\mathbf{c}_{\nu-1})+\mathbf{c}_{\nu}$, which implies
$(\boldsymbol \varphi^{p}_{\epsilon, \nu}* \eta_{\alpha} )(\mathbf{X}) = (\hat{\boldsymbol \varphi}^{\pl}_{\epsilon, \nu}* \eta_{\alpha} ) (\mathbf{X}-\mathbf{c}_{\nu-1})+\mathbf{c}_{\nu}$. Iterating as in 
(\ref{eqiteratcnu1}-\ref{eqiteratcnu2}) gives $D\p^\pl_{\epsilon \alpha} =D\hat \p^\pl_{\epsilon \alpha}$.
We define $\F^\pl_{\epsilon \alpha}$ as $D\p^\pl_{\epsilon\alpha}$ on $\cup_l B_{\epsilon}(\mathbf{Y}_l)$ and zero elsewhere.

Next, we choose a constant $1/2  < \rho <1$, and define $\Omega''=\{\mathbf{X}\in \Omega: \text{dist}(\mathbf{X},\partial \Omega)\geq \epsilon^{\rho}\}$. 
We now construct a cover of $\Omega''$ with balls of radius of order $\epsilon^\rho$.
Precisely, we 
first select balls $B_{\epsilon^{\rho}}(\mathbf{X}_k)$ centered at the dislocation points, and then finitely many 
$\mathbf Y_l\in H=
\{\mathbf{X}\in \Omega: \text{dist}(\mathbf{X},\partial \Omega)\geq \epsilon^{\rho}/2\}\setminus \cup_k B_{\epsilon^{\rho}/2}(\mathbf{X}_k)$ such that $H\subset
\cup_l B_{\epsilon^{\rho}/4}(\mathbf{Y}_l)$ (this is possible since $H$ is compact).
Let $\phi_k'\in C^\infty_c(B_{\epsilon^{\rho}/2}(\mathbf{X}_k))$, $\phi_l\in C^\infty_c(B_{\epsilon^{\rho}/4}(\mathbf{Y}_l))$ be a 
partition of unity on  $\{\mathbf{X}\in \Omega: \text{dist}(\mathbf{X},\partial \Omega)\geq \epsilon^{\rho}/2\}$.
We define
$\theta_k'=\phi_k' \ast \eta_{\epsilon^\rho/4}$ and 
$\theta_l=\phi_l \ast \eta_{\epsilon^\rho/4}$. Then $\sum_k \theta_k'+\sum_l\theta_l=1$ on $\overline{\Omega''}$,
 $\theta_k'\in C^\infty_c(B_{\epsilon^{\rho}}(\mathbf{X}_k))$, $\theta_l\in C^\infty_c(B_{\epsilon^{\rho/}2}(\mathbf{Y}_l))$,
 and $|D\theta_k'|, |D\theta_l|\le  C_{23}\epsilon^{-\rho}$.

\begin{figure}
\begin{center}
    {\includegraphics[width=0.65\textwidth]{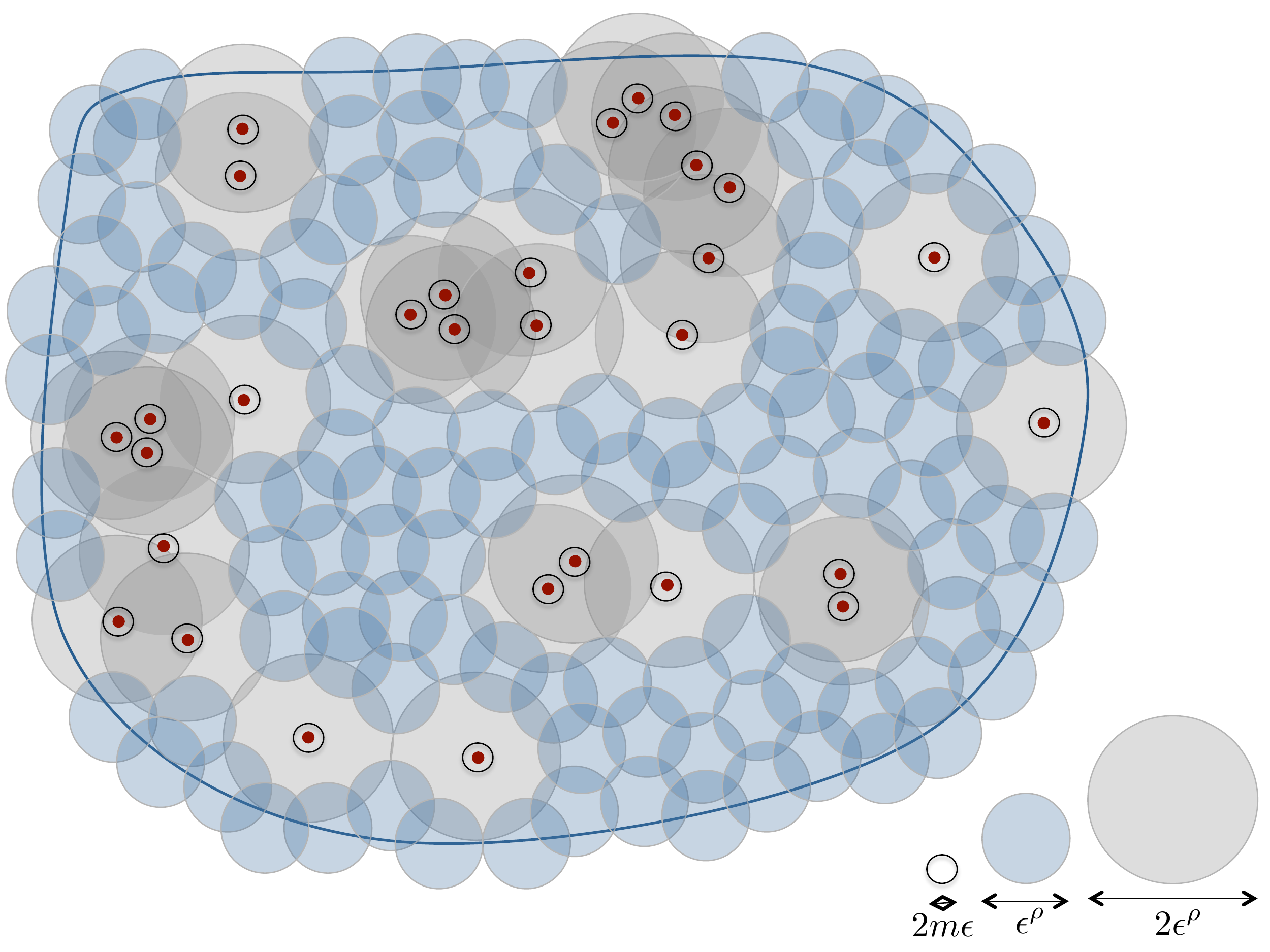}}
    \caption[]{Cover of $\Omega''$ with balls $B_{\epsilon^{\rho}}$ centered at the dislocation points $\mathbf{X}_k$ (grey balls), and balls $B_{\epsilon^{\rho}/2}$ with finite overlap (blue balls) that do not overlap with the dislocation cores $B_{m\epsilon}(\mathbf{X}_k)$.}
    \label{Fig:Delaunay}
\end{center}
\end{figure}

For $\phi\in C^{\infty}_c(\Omega)$, consider $K$ such that $\text{supp}(\phi) \subset \subset K \subset \subset \Omega$ and take  $\epsilon < \left(\text{dist}(K,\partial \Omega) \right)^{(1/\rho)}$,
 so that $K\subset\Omega''$. Then, using the previously obtained relation $\sum_l \theta_l + \sum_k \theta'_k=1$
 one obtains,
 writing as above $\F^\pl_{\epsilon} dX$ for $d\F^\pl_{\epsilon}$,  and $\omega_l=B_{\epsilon^{\rho}/2}(\mathbf{Y}_l)$
\begin{equation}
\begin{split}
\lim_{\epsilon \rightarrow 0} \Big | \int_{\Omega} (\F^\pl_{\epsilon \alpha}-\F^\pl_{\epsilon}) \phi \, dX \Big |&= \lim_{\epsilon \rightarrow 0}  \Big | \int_{K} (\F^\pl_{\epsilon \alpha}-\F^\pl_{\epsilon}) \phi \, dX \Big |\\
& \leq \lim_{\epsilon \rightarrow 0} \sum_l  \Big | \int_{\omega_l} \theta_l(\F^\pl_{\epsilon \alpha}-\F^\pl_{\epsilon}) \phi \, dX \Big | + \lim_{\epsilon \rightarrow 0}\sum_k \Big |  \int_{B_{\epsilon^{\rho}}(\mathbf{X}_k)} \theta'_k(\F^\pl_{\epsilon \alpha}-\F^\pl_{\epsilon}) \phi \, dX \Big | \,.
\end{split}
\end{equation}
In the first term we integrate by parts and 
use the previously obtained relations 
 $|\p^\pl_{\epsilon \alpha}-\p^\pl_{\epsilon}| \leq C_{21} \epsilon $ and $|D\theta_l| \leq C_{23}\epsilon^{-\rho}$, 
\begin{equation}
\begin{split}
\lim_{\epsilon \rightarrow 0} \sum_l  \Big | \int_{\omega_l} \theta_l(\F^\pl_{\epsilon \alpha}-\F^\pl_{\epsilon}) \phi \, dX \Big |
&= \lim_{\epsilon \rightarrow 0}  \sum_l  \Big |\int_{\omega_l} \theta_l(D\p^\pl_{\epsilon \alpha}-D\p^\pl_{\epsilon}) \phi \, dX \Big | \\
& = \lim_{\epsilon \rightarrow 0} \sum_l \Big | \int_{\omega_l} (\p^\pl_{\epsilon \alpha}-\p^\pl_{\epsilon})D (\theta_l\phi)  \, dX   \Big |\\
& \leq \lim_{\epsilon \rightarrow 0} C_{24}\epsilon \left( \epsilon^{-\rho} \|\phi \|_{L^{\infty}} + \| D\phi\|_{L^{\infty}} \right) = 0.
\end{split}
\end{equation}
In the second one we use $|\theta'_k|\le 1$ to estimate
\begin{equation}
\begin{split}
\lim_{\epsilon \rightarrow 0}\sum_k \Big |  \int_{B_{\epsilon^{\rho}}(\mathbf{X}_k)} \theta'_k(\F^\pl_{\epsilon \alpha}-\F^\pl_{\epsilon}) \phi \, dX \Big | 
&\le
\|\phi\|_{L^\infty}  \lim_{\epsilon \rightarrow 0}
\sum_k\left( \|\F^\pl_{\epsilon \alpha}\|_{L^\infty} |B_{\epsilon^{\rho}}(\mathbf{X}_k)| +
|\F^\pl_{\epsilon}|(B_{\epsilon^{\rho}}(\mathbf{X}_k))\right)\,.
\end{split}
\end{equation}
By Lemma \ref{Lemma:SizeControl}, $|\F^\pl_{\epsilon}|(B_{\epsilon^{\rho}}(\mathbf{X}_k))\le 
C_{25} \epsilon^{2\rho}$, and recalling that  $\|\F^\pl_{\epsilon \alpha}\|_{L^\infty}$ is bounded,
$N_\epsilon\le C/\epsilon$  and  $\rho>1/2$
we conclude 
\begin{equation}
\begin{split}
\lim_{\epsilon \rightarrow 0}\sum_k \Big |  \int_{B_{\epsilon^{\rho}}(\mathbf{X}_k)} \theta'_k(\F^\pl_{\epsilon \alpha}-\F^\pl_{\epsilon}) \phi \, dX \Big | 
&\le
\|\phi\|_{L^\infty}  \lim_{\epsilon \rightarrow 0} C_{26}
N_\epsilon \epsilon^{2\rho}=0\,.
\end{split}
\end{equation}

\begin{figure}
\begin{center}
    {\includegraphics[width=0.65\textwidth]{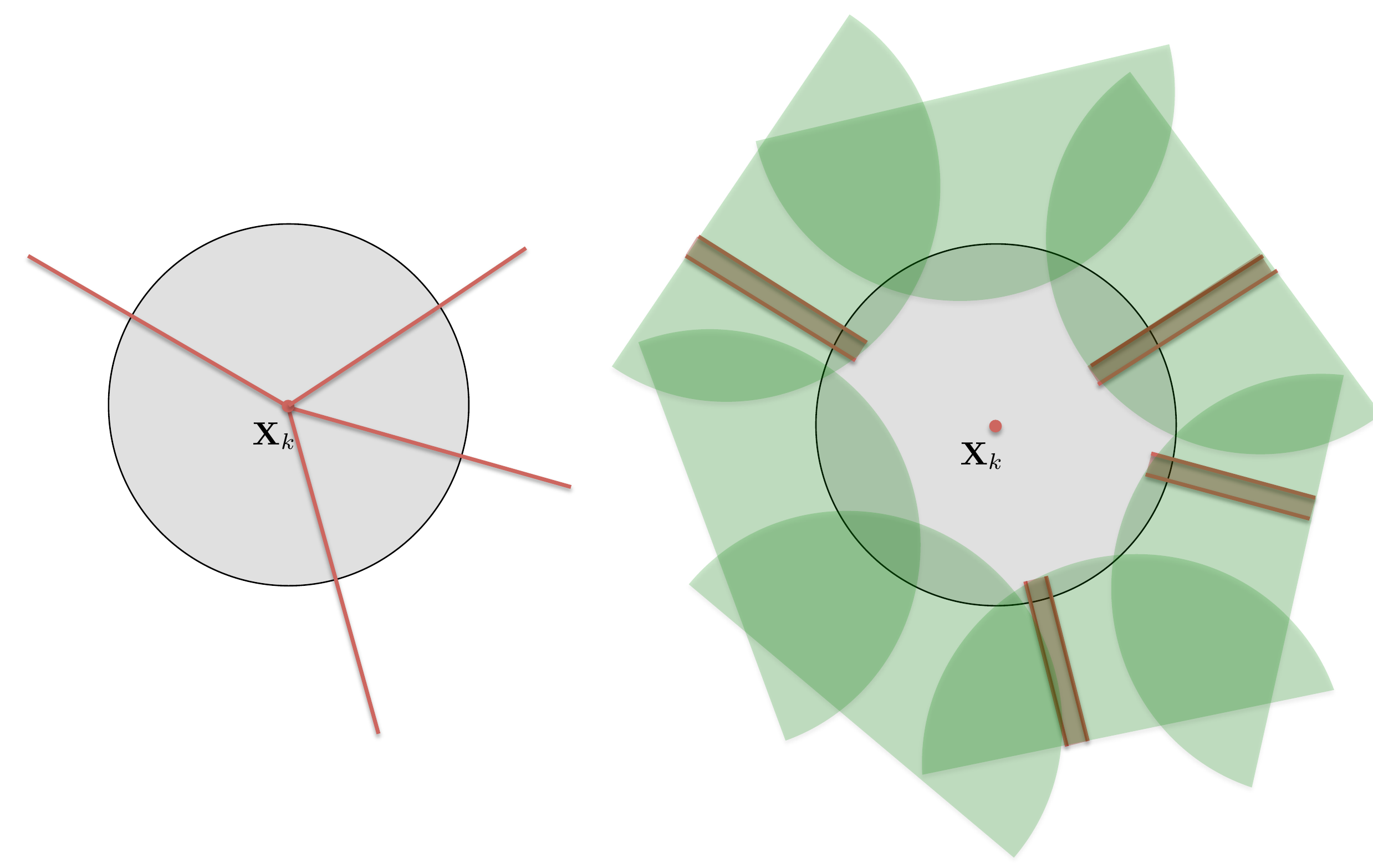}}
    \caption[]{The figure shows the support of $\Fpe$ in red (left image) and $\Fp_{\epsilon\alpha}$ (right image) near a given dislocation core $B_{c\epsilon}(\mathbf{X}_k)$. The support of $\text{Curl} \Fpe$ is $\mathbf{X}_k$, whereas the support of $\text{Curl} \Fp_{\epsilon \alpha}$ will be $N_{dk} \leq N_s$ diffuse dislocations located at the boundary of the domains $B_{\epsilon}$ in which a global definition of $\Fp_{\epsilon \alpha}$ was obtained.}
    \label{Fig:Fp_approx}
\end{center}
\end{figure}
 
As a result $\F^\pl_{\epsilon \alpha}\overset{*}\rightharpoonup \F^\pl$ in $\mathcal{M}$ and therefore in $L^\infty$ as $\epsilon$ tends to zero. Moreover, by the definition of $\p^\pl_{\epsilon \alpha}$, $\text{Curl } \F^\pl_{\epsilon \alpha} =0 $ in $\Omega'_2 \setminus \cup_k B_{c\epsilon}(\mathbf{X}_k)$. 
Furthermore, $|\text{Curl } \F^\pl_{\epsilon \alpha}|(\Omega'_2) \leq N_{\epsilon}  N_s n \epsilon < \infty$ 
since each of the $N_{\epsilon}$ dislocation points has at most $ N_s$ radial jump sets and each of them has a Burgers vector bounded by $n\epsilon$, see Fig.~\ref{Fig:Fp_approx}. We recall that  the determinant of a 2 by 2 tensor can be expressed as the scalar product of two vectors whose Curl and Div (divergence) are controlled by the Curl of the tensor. Indeed, 
\begin{equation}
\det \F = (F_{11}, F_{12})\times(F_{21},F_{22}) = (F_{11}, F_{12})\cdot(F_{22}, -F_{21}),
\end{equation}
with Curl $(F_{11}, F_{12})=F_{12,1}-F_{11,2}$, and Div $(F_{22}, -F_{21}) = F_{22,1} -F_{21,2} = \text{Curl}\ (F_{21}, F_{22})$. 
Therefore the div-curl lemma \citep{Evans1990} implies $\det \F^\pl_{\epsilon \alpha} \overset{*}\rightharpoonup \det \F^\pl$ in  $L^\infty$. 

Next we show that $\det \F^\pl_{\epsilon \alpha} \to 1$ in $L^1$.
Since $\det \F^\pl_{\epsilon \alpha}=1$ on the union of the $\omega_l$, and it is zero elsewhere,
\begin{equation}
\|\det\F^\pl_{\epsilon \alpha}-1\|_{L^1(\Omega_2')}\le \sum_k |B_{\epsilon^\rho}(\mathbf X_k)| \le \pi N_\epsilon \epsilon^{2\rho}\to0
 \end{equation}
since $\rho>1/2$. Therefore  $\det \F^\pl = 1$ a.e., which is the sought-after result.
\end{proof}

In closing we remark that the determinant of the gradient of a $SBV$ function such as $\p_\epsilon$ can be given
 a distributional interpretation. This formulation is not used in the proof above, but illustrates the 
 idea behind the statement $\det \mathbf F^\pl=1$.
Indeed, in Appendix A we show that if  $\p_{\epsilon} \in  L^\infty\cap SBV(\Omega;\mathbb R^2)$  and $\phi \in C_c^{\infty}(\Omega)$, the determinant can be interpreted as
\begin{equation}\label{Eq:Det_SBV}
\begin{split}
\langle\det D\pe, \phi\rangle &=  - \frac{1}{2}\int_{\Omega} \pe^T\  \text{cof } \nabla \pe\ D\phi \, dX + \frac{1}{2}\int_{\mathcal{J}} \left(\pe \times \llbracket\pe \rrbracket \right) \cdot \left( \N \times D\phi \right)\, d\mathcal{H}^1\,. \\
& =- \frac{1}{2} \int_{\Omega}\left(\pe \otimes D\phi \right) : d\text{cof }D\pe \ ,
\end{split}
\end{equation}
where the precise definition of $\pe$ on $\mathcal{J}$ is irrelevant since $\pe^+ \times \llbracket \pe \rrbracket = \pe^- \times \llbracket \pe \rrbracket$. One can therefore choose for $\pe(\mathbf{X})$ on $\mathbf{X} \in\mathcal{J}$ any convex combination of $\pe^+(\mathbf X)$ and $\pe^-(\mathbf X)$.

\section{Conclusions} \label{Sec:Conclusions}
The kinematic assumption $\F=\F^\el\F^\pl$ pertaining to the modeling of elastoplasticity in the setting of large deformations was introduced in the 1960's based on heuristic arguments and has become standard in the continuum mechanics community. However, the lack of a micromechanical justification has raised some skcepticism over the years about its validity, in particular questioning the existence and uniqueness of the multiplicative decomposition.

In this paper, we provide for the first time, as far as the authors' knowledge, a rigorous proof of the expression $\F=\F^\el\F^\pl$ for a general elastoplastic deformation of a single crystal. The proof is based on the coarse graining of a mesoscopic description of the deformation, where the dislocations and slip surfaces are individually resolved and the displacement field can be treated as continuous in all the domain except at the surfaces over which dislocations have glided. At such scale, as previously shown by two of the authors \citep{ReinaConti2014}, there exists physically-based definitions of the different tensors $\F_{\epsilon}$, $\F^\el_{\epsilon}$ and $\F^\pl_{\epsilon}$ that are uniquely defined from the microscopic deformation gradient $\boldsymbol \varphi_{\epsilon}$ and that do not make use of any multiplicative relationship between them for their definition. Instead, $\pe$ is described in the compatible regions away from the dislocations as the composition of an elastic and plastic 
deformation, i.e. $\pe=\pe^\el\circ \pe^\pl$. The corresponding continuum quantities at the macroscopic scale  $\boldsymbol \varphi$, $\F$, $\F^\el$ and $\F^\pl$ are defined respectively as the limit of  $\boldsymbol \varphi_{\epsilon}$, $\F_{\epsilon}$, $\F^\el_{\epsilon}$ and $\F^\pl_{\epsilon}$ as the lattice parameter $\epsilon$ tends to zero in the appropriate topology. We show that the limiting deformation mapping $\boldsymbol \varphi$ is indeed continuous, that $\F=\F^\el\F^\pl$ holds in the limit with $\det \F^\pl =1$, and that Curl $\Fp$ represents the dislocation density tensor when it is expressed in the reference configuration.

The results presented in this paper are of great generality, although limited so far to two dimensional deformations due to the mathematical complexity involved in the proofs. In particular, the assumptions made consist of standard growth conditions for the elastic energy, dislocations separated from each other and from the boundary by a distance of the order of several atomic spacings, a finite variation of the elastic deformation and a precise sequence of slips assumed in the compatible regions of the domain.  We remark that this latter assumption is only needed for the proof of $\det \Fp=1$ and it has only been introduced for mathematical simplicity. It is yet to be studied in further detail whether or not some of these assumptions are necessary for the final results to hold.
\\

\begin{appendix}
\section*{Appendix A}
{\bf Distributional Curl of a second order tensor.}
Let $\phi \in C^{\infty}_c(\Omega)$, and $\Fp$ smooth. Then using indicial notation and integration by parts,
\begin{equation}
 \int_{\Omega} \phi\ \left(\text{Curl}\ F^{\pl}\right)_{i} dX = \int_{\Omega} \phi\  e_{3kj}  \left(F^{\pl}\right)_{ij,k} dX = -\int_{\Omega}  e_{3kj}\ \phi_{,k}  \left(F^{\pl}\right)_{ij} dX =  - \int_{\Omega} \left(D\phi \times \Fp\right)_i \, dX,
\end{equation}
where $e_{ijk}$ is the Levi-Civita symbol. The equality of the first and last term is taken as the definition for the distributional Curl.

{\bf Determinant of $D\boldsymbol \varphi$, with $\boldsymbol \varphi \in$ SBV.}
 Let $\phi \in C^{\infty}_c(\Omega)$, and $\boldsymbol \varphi$ smooth. Then, the weak and strong form of the determinant coincide and read
\begin{equation} 
\int_{\Omega} \left(\det D\boldsymbol \varphi\right) \phi \, dX = - \frac{1}{2}\int_{\Omega}  \boldsymbol \varphi^T\left( \text{cof}\ D\boldsymbol \varphi \right) D\phi \, dX =  - \frac{1}{2}\int_{\Omega} \left( \boldsymbol \varphi \otimes D\phi\right) : \left( \text{cof}\ D\boldsymbol \varphi \right)  \, dX.
\end{equation}
Indeed,
\begin{equation}
\begin{split}
 \boldsymbol \varphi^T\left( \text{cof}\ D\boldsymbol \varphi \right) D\phi &= \left( \begin{array}{c c}
 \varphi_1 &  \varphi_2  \end{array}  \right) \left( \begin{array}{cc}
 \varphi_{2,2} & -  \varphi_{2,1} \\
-  \varphi_{1,2} &  \varphi_{1,1} \end{array} \right)  \left( \begin{array}{c}
\phi_{,1}  \\
\phi_{,2} \end{array} \right) =  \left( \begin{array}{c c}
 \varphi_1 &  \varphi_2  \end{array}  \right)  \left( \begin{array}{c}
\varphi_{2,2}\phi_{,1}-\varphi_{2,1}\phi_{,2}  \\
- \varphi_{1,2}\phi_{,1}+\varphi_{1,1}\phi_{,2} \end{array} \right)\\
&=  \varphi_1 \varphi_{2,2}\phi_{,1}- \varphi_1  \varphi_{2,1}\phi_{,2} -\varphi_2  \varphi_{1,2}\phi_{,1}+\varphi_2 \varphi_{1,1}\phi_{,2}.
\end{split}
\end{equation}
Therefore, after integrating by parts,
\begin{equation}
\begin{split}
 - \int_{\Omega}  \boldsymbol \varphi^T &\left( \text{cof}\ D\boldsymbol \varphi \right) D\phi \, dX \\
 &= \int_{\Omega} \left( \varphi_{1,1}\varphi_{2,2} +\varphi_1 \varphi_{2,21} -  \varphi_{1,2}\varphi_{2,1}- \varphi_1 \varphi_{2,12}-\varphi_{2,1} \varphi_{1,2}- \varphi_2  \varphi_{1,21} +\varphi_{2,2} \varphi_{1,1} + \varphi_2 \varphi_{1,12} \right) \phi \, dX \\
 & =2 \int_{\Omega}  \left( \varphi_{1,1}\varphi_{2,2}-\varphi_{1,2}\varphi_{2,1}\right)\phi \, dX=2 \int_{\Omega}\left( \det D\boldsymbol \varphi\right) \phi \, dX.
\end{split}
\end{equation}

For $\boldsymbol \varphi \in SBV$, $\text{cof}\ D\boldsymbol \varphi = \text{cof}\ \nabla \boldsymbol \varphi \mathcal{L}^2+ \text{cof}\left(\llbracket \boldsymbol \varphi \rrbracket \otimes \mathbf{N}\right)\ \mathcal{H}^1 \lfloor_{\mathcal{J}}$, with
\begin{equation}
\begin{split}
\text{cof}\left(\llbracket \boldsymbol \varphi \rrbracket \otimes N\right) &= \text{cof} \left( \begin{array}{cc}
\llbracket \varphi_1 \rrbracket N_1  & \llbracket \varphi_1 \rrbracket  N_2 \\
\llbracket  \varphi_2 \rrbracket N_1  & \llbracket \varphi_2 \rrbracket N_2  \end{array} \right) = \left( \begin{array}{cc}
\llbracket \varphi_2 \rrbracket N_2  & -\llbracket \varphi_2 \rrbracket  N_1 \\
-\llbracket \varphi_1 \rrbracket N_2  & \llbracket \varphi_1 \rrbracket N_1  \end{array} \right) \\
&= \left( \begin{array}{c}
-\llbracket \varphi_2 \rrbracket  \\
\llbracket \varphi_1 \rrbracket   \end{array} \right) \otimes \left( \begin{array}{c}
-N_2  \\
N_1   \end{array}\right) = \llbracket \boldsymbol \varphi \rrbracket^{\bot} \otimes \mathbf{N}^{\bot},
\end{split}
\end{equation}
\begin{equation}
\begin{split}
\boldsymbol \varphi^T\ \text{cof} \left(\llbracket \boldsymbol \varphi \rrbracket \otimes N\right) D\phi &= \left(\boldsymbol \varphi \cdot \llbracket \boldsymbol \varphi\rrbracket^{\bot} \right) \left(\mathbf{N}^{\bot} \cdot D\phi \right)\\
&=  \left( \left( \begin{array}{c}
\varphi_1  \\
\varphi_2   \end{array}\right) \cdot
 \left( \begin{array}{c}
-\llbracket\varphi_2 \rrbracket  \\
\llbracket \varphi_1 \rrbracket   \end{array} \right)  \right)\left( \left( \begin{array}{c}
-N_2  \\
N_1   \end{array}\right) \cdot
 \left( \begin{array}{c}
\phi_{,1}  \\
\phi_{,2}   \end{array} \right)  \right) \\
&= \left( - \boldsymbol \varphi \times \llbracket \boldsymbol \varphi \rrbracket \right) \cdot \left( \mathbf{N} \times D\phi\right),
\end{split}
\end{equation} 
therefore giving \eqref{Eq:Det_SBV}.
\end{appendix}

\section*{Acknowledgements}

C. Reina acknowledges the NSF grant CMMI - 1401537. 
S. Conti acknowledges support from  the  Deutsche Forschungsgemeinschaft
through SFB 1060 {\em ``The Mathematics of emergent effects''}, project A5.
All the authors further acknowledge support from the Hausdorff Center for Mathematics. 


\end{document}